\documentclass[twoside,leqno,twocolumn]{article}

\usepackage[letterpaper]{geometry}

\usepackage{siamproceedings}

\usepackage[switch]{lineno}

\usepackage[T1]{fontenc}
\usepackage{amsfonts}
\usepackage{graphicx}
\usepackage{epstopdf}
\usepackage{enumitem}
\usepackage{algorithmic}
\usepackage{colortbl}
\usepackage{makecell}
\usepackage{subcaption}
\usepackage{numprint}
\usepackage{siunitx}
\usepackage{listings}
\usepackage{multirow}
\usepackage{booktabs}
\usepackage{thmtools}
\usepackage{thm-restate}
\usepackage{nicefrac}
\usepackage{float}

\newcommand{\downuparrows}{\mathbin\downarrow\hspace{-.55em}\uparrow}

\usepackage{tikz}
\usetikzlibrary{calc,arrows.meta,positioning,fit,backgrounds}

\ifpdf
  \DeclareGraphicsExtensions{.eps,.pdf,.png,.jpg}
\else
  \DeclareGraphicsExtensions{.eps}
\fi

\newsiamremark{remark}{Remark}
\newsiamremark{hypothesis}{Hypothesis}
\crefname{hypothesis}{Hypothesis}{Hypotheses}
\newsiamthm{claim}{Claim}

\usepackage{amsopn}

\newcommand{\hhh}{\texttt{HHH}}
\newcommand{\egst}{\texttt{EGST}}

\newcommand{\hhho}{\texttt{HHH}$^\texttt{*}$}
\newcommand{\hhhoe}{\texttt{HHH}$^\texttt{*}$\texttt{-A}}
\newcommand{\hhhoh}{\texttt{HHH}$^\texttt{*}$\texttt{-H}}
\newcommand{\hhhohe}{\texttt{HHH}$^\texttt{*}$\texttt{-HA}}

\newcommand{\egsto}{\texttt{EGST}$^\texttt{*}$}
\newcommand{\hegsto}{\texttt{EGST}$^\texttt{*}$\texttt{-H}}
\newcommand{\degsto}{\texttt{EGST}$^\texttt{*}$\texttt{-D}}
\newcommand{\hdegsto}{\texttt{EGST}$^\texttt{*}$\texttt{-HD}}
\newcommand{\hegst}{\hegsto}
\newcommand{\degst}{\degsto}
\newcommand{\hdegst}{\hdegsto}

\newcommand{\Sc}[1]{S\textsubscript{#1}}

\newcommand{\epsT}{$\varepsilon$-Partition}
\newcommand{\RebalanceFactor}{\ensuremath{\rho}}
\newcommand{\hInd}{$h$-Index}

\newcommand{\hindexgr}[1]{\texttt{h(#1)}}
\newcommand{\hindexgrBest}{\hindexgr{$17/16$}}

\newcommand{\origepstab}[1]{\texttt{orig#1}} 
\newcommand{\softepstab}[1]{\texttt{soft#1}}
\newcommand{\lazyepstab}[1]{\texttt{lazy#1}}
\newcommand{\lateepstab}[1]{\texttt{late#1}}

\newcommand{\bigO}{\mathcal{O}}
\newcommand{\High}{\mathcal{H}}
\newcommand{\Low}{\mathcal{L}}
\newcommand{\AllV}{\mathcal{V}}

\newcommand{\DegreeNA}{\mathrm{deg}}
\newcommand{\Degree}[1]{\DegreeNA(#1)}
\newcommand{\DegreeL}[1]{\DegreeNA_{\Low}(#1)}
\newcommand{\DegreeH}[1]{\DegreeNA_{\High}(#1)}
\newcommand{\DegreeN}[2]{\DegreeNA_{\overline{\vphantom{\mathbf{x}}#1}}(#2)}

\newcommand{\NeighSym}{\mathrm{N}}
\newcommand{\Neigh}[1]{\NeighSym (#1)}
\newcommand{\NeighL}[1]{\NeighSym_{\Low}(#1)}
\newcommand{\NeighH}[1]{\NeighSym_{\High}(#1)}
\newcommand{\NeighN}[2]{\NeighSym_{\overline{\vphantom{\mathbf{x}}#1}}(#2)}
\newcommand{\NeighNH}[2]{\NeighSym_{\High,\overline{\vphantom{\mathbf{x}}#1}}(#2)}
\newcommand{\NeighNL}[2]{\NeighSym_{\Low,\overline{\vphantom{\mathbf{x}}#1}}(#2)}

\newcommand{\AuxvLV}{\ensuremath{\mathrm{vLV}}}
\newcommand{\Auxt}{\ensuremath{\mathrm{t}}}
\newcommand{\AuxuLv}{\ensuremath{\mathrm{uLv}}}
\newcommand{\AuxuLLv}{\ensuremath{\mathrm{uLLv}}}
\newcommand{\AuxcLV}{\ensuremath{\mathrm{cLV}}}
\newcommand{\AuxpLL}{\ensuremath{\mathrm{pLL}}}
\newcommand{\AuxuHv}{\ensuremath{\mathrm{uHv}}}
\newcommand{\AuxcL}{\ensuremath{\mathrm{cL}}}

\newcommand{\Partition}{\ensuremath{\mathcal{P}}}

\newcommand{\MaxDegree}{D}

\newcommand{\deglow}{\MaxDegree_\mathcal{L}}

\newcommand{\changesperupdate}{\#_{\downuparrows}}

\newcommand{\ie}{i.e.}
\newcommand{\eg}{e.g.}
\newcommand{\wlogtext}{w.l.o.g.}

\definecolor{hlc}{HTML}{D81B60}
\newcommand{\bgcolortext}[2]{\tikz[baseline=(n.base)] 
\node (n) [fill=#1!30,rounded corners=2pt,inner sep=2pt,outer sep=0pt] {#2};}

\newcommand{\FullSet}{\textsf{FS}}
\newcommand{\EngSet}{\textsf{ES}}

\newcommand{\DSTMP}{\textsc{TMP}}
\newcommand{\DSWiki}{\textsc{Wiki}}
\newcommand{\DScKron}{\textsc{KRONc}}
\newcommand{\DSgKron}{\textsc{KRONg}}

\newcommand{\baselineGraphflow}{\texttt{Graphflow}}
\newcommand{\baselineTurboFlux}{\texttt{TurboFlux}}
\newcommand{\baselineIEDyn}{\texttt{IEDyn}}
\newcommand{\baselineSymBi}{\texttt{SymBi}}
\newcommand{\baselineDNA}{\texttt{DNA}}
\newcommand{\baselineRapidFlow}{\texttt{RapidFlow}}

\makeatletter
\newcommand{\popt}[3][]{\par\medskip
  \noindent\textit{(#2) #3. }\if\relax\detokenize{#1}\relax\else
    \protected@edef\@currentlabel{#2}\label{popt:#1}\fi
  \ignorespaces
}
\makeatother
\newcommand{\poptref}[1]{(\ref{popt:#1})}

\newcounter{hhhoptimization}
\renewcommand{\thehhhoptimization}{$O_A$-\Roman{hhhoptimization}}

\newcounter{epsoptimization}
\renewcommand{\theepsoptimization}{$O_P$-\roman{epsoptimization}}

\newcounter{hindoptimization}

\newcounter{eoptimization}

\newcommand{\hhhopt}[2][]{\refstepcounter{hhhoptimization}\par\medskip
  \noindent\textit{(\thehhhoptimization) #2. }\if\relax\detokenize{#1}\relax\else\label{#1}\fi
  \ignorespaces
}

\newcommand{\epsopt}[2][]{\refstepcounter{epsoptimization}\par\medskip
  \noindent\textit{(\theepsoptimization) #2. }\if\relax\detokenize{#1}\relax\else\label{#1}\fi
  \ignorespaces
}

\newcommand{\egstopt}[2][]{\refstepcounter{hhhoptimization}\par\medskip
  \noindent\textit{(\thehhhoptimization) #2. }\if\relax\detokenize{#1}\relax\else\label{#1}\fi
  \ignorespaces
}

\newcommand{\hindopt}[2][]{\refstepcounter{epsoptimization}\par\medskip
  \noindent\textit{(\theepsoptimization) #2. }\if\relax\detokenize{#1}\relax\else\label{#1}\fi
  \ignorespaces
}

\newif\iflongversion
\longversiontrue

\newcommand{\restatethelemma}[1]{\begingroup
  \renewcommand{\thetheorem}{\ref{lemma:#1}}\csname #1\endcsname*
  \endgroup
}

\tikzset{
  smallmarkarrow/.style={red,-{Stealth[length=0.8mm]},line width=0.5},
  label/.style={font=\tiny},
}

\newcommand{\DrawvLV}{\begin{tikzpicture}[baseline=0.05ex, scale=0.2]
  \coordinate (A) at (0,0);
  \coordinate (B) at (1,0);
  \coordinate (C) at (0.5,0.866); 
  \draw (A) -- (C) -- (B);

  \filldraw[fill=white, draw=black] (A) circle (7pt);
  \draw[smallmarkarrow] (A) ++(0,-0.5) -- ++(0,0.25);
  \filldraw[fill=gray!70, draw=black] (B) circle (5.5pt);
  \filldraw[fill=black, draw=black] (C) circle (4pt);
\end{tikzpicture}}

\newcommand{\DrawuLv}{\begin{tikzpicture}[baseline=0.05ex, scale=0.2,transform shape]
  \coordinate (A) at (0,0);
  \coordinate (B) at (1,0);
  \coordinate (C) at (0.5,0.866); 
  \draw (A) -- (C) -- (B) ;

  \filldraw[fill=white, draw=black] (A) circle (7pt);
  \draw[smallmarkarrow] (A) ++(0,-0.5) -- ++(0,0.25);
  \filldraw[fill=white, draw=black] (B) circle (7pt);
  \draw[smallmarkarrow] (B) ++(0,-0.5) -- ++(0,0.25);
  \filldraw[fill=black, draw=black] (C) circle (4pt);
  
\end{tikzpicture}}

\newcommand{\DrawuHv}{\begin{tikzpicture}[baseline=0.05ex, scale=0.2,transform shape]
  \coordinate (A) at (0,0);
  \coordinate (B) at (1,0);
  \coordinate (C) at (0.5,0.866); 
  \draw (A) -- (C) -- (B) ;

  \filldraw[fill=white, draw=black] (A) circle (7pt);
  \draw[smallmarkarrow] (A) ++(0,-0.5) -- ++(0,0.25);
  \filldraw[fill=white, draw=black] (B) circle (7pt);
  \draw[smallmarkarrow] (B) ++(0,-0.5) -- ++(0,0.25);
  \filldraw[fill=white, draw=black] (C) circle (7pt);
  
\end{tikzpicture}}

\newcommand{\Drawt}{\begin{tikzpicture}[baseline=0.05ex, scale=0.2]
  \coordinate (A) at (0,0);
  \coordinate (B) at (1,0);
  \coordinate (C) at (0.5,0.866); 
  \draw (A) -- (C) -- (B) -- (A);

  \filldraw[fill=white, draw=black] (A) circle (7pt);
  \draw[smallmarkarrow] (A) ++(0,-0.5) -- ++(0,0.25);
  \filldraw[fill=gray!70, draw=black] (B) circle (5.5pt);
  \filldraw[fill=gray!70, draw=black] (C) circle (5.5pt);
\end{tikzpicture}}

\newcommand{\DrawuLLv}{\begin{tikzpicture}[baseline=0.05ex, scale=0.2,transform shape]
  \coordinate (A) at (0,0);
  \coordinate (B) at (0,1);
  \coordinate (C) at (1,1);
  \coordinate (D) at (1,0);

  \draw (A) -- (B) -- (C) -- (D);

  \filldraw[fill=white, draw=black] (A) circle (7pt);
  \draw[smallmarkarrow] (A) ++(0,-0.5) -- ++(0,0.25);
  \filldraw[fill=white, draw=black] (D) circle (7pt);
  \draw[smallmarkarrow] (D) ++(0,-0.5) -- ++(0,0.25);
  \filldraw[fill=black, draw=black] (B) circle (4pt);
  \filldraw[fill=black, draw=black] (C) circle (4pt);
  
\end{tikzpicture}}

\newcommand{\DrawcLV}{\begin{tikzpicture}[baseline=0.05ex, scale=0.2,transform shape]
  \coordinate (A) at (0,0);
  \coordinate (B) at (0,1);
  \coordinate (C) at (1,1);
  \coordinate (D) at (1,0);

  \draw (A) -- (B) -- (C);
  \draw (B) -- (D);

  \filldraw[fill=white, draw=black] (A) circle (7pt);
  \draw[smallmarkarrow] (A) ++(0,-0.5) -- ++(0,0.25);
  \filldraw[fill=white, draw=black] (D) circle (7pt);
  \draw[smallmarkarrow] (D) ++(0,-0.5) -- ++(0,0.25);
  \filldraw[fill=black, draw=black] (B) circle (4pt);
  \filldraw[fill=gray!70, draw=black] (C) circle (5.5pt);
  
\end{tikzpicture}}

\newcommand{\DrawpLL}{\begin{tikzpicture}[baseline=0.05ex, scale=0.2,transform shape]
  \coordinate (A) at (0,0);
  \coordinate (B) at (0,1);
  \coordinate (C) at (1,1);
  \coordinate (D) at (1,0);

  \draw (A) -- (B) -- (C);
  \draw (B) -- (D) -- (A);

  \filldraw[fill=white, draw=black] (A) circle (7pt);
  \draw[smallmarkarrow] (A) ++(0,-0.5) -- ++(0,0.25);
  \filldraw[fill=white, draw=black] (C) circle (7pt);
  \draw[smallmarkarrow] (C) ++(0,-0.5) -- ++(0,0.25);
  \filldraw[fill=black, draw=black] (B) circle (4pt);
  \filldraw[fill=black, draw=black] (D) circle (4pt);
  
\end{tikzpicture}}

\newcommand{\DrawcL}{\begin{tikzpicture}[baseline=0.05ex, scale=0.2,transform shape]
  \coordinate (A) at (0,0);
  \coordinate (B) at (0,1);
  \coordinate (C) at (1,1);
  \coordinate (D) at (1,0);

  \draw (A) -- (B) -- (C);
  \draw (B) -- (D);

  \filldraw[fill=white, draw=black] (A) circle (7pt);
  \draw[smallmarkarrow] (A) ++(0,-0.5) -- ++(0,0.25);
  \filldraw[fill=white, draw=black] (D) circle (7pt);
  \draw[smallmarkarrow] (D) ++(0,-0.5) -- ++(0,0.25);
  \filldraw[fill=white, draw=black] (C) circle (7pt);
  \draw[smallmarkarrow] (C) ++(0,-0.5) -- ++(0,0.25);
  \filldraw[fill=black, draw=black] (B) circle (4pt);
  
\end{tikzpicture}}

\newcommand{\dia}{\begin{tikzpicture}[baseline=0.05ex]
\draw (0,0) rectangle (0.15,0.15);
\draw (0,0) -- (0.15,0.15);
\end{tikzpicture}}

\newcommand{\fClique}{\begin{tikzpicture}[baseline=0.05ex]
\draw (0,0) rectangle (0.15,0.15);
\draw (0,0) -- (0.15,0.15);
\draw (0,0.15) -- (0.15,0);
\end{tikzpicture}}

\newcommand{\fCycle}{\begin{tikzpicture}[baseline=0.05ex]
\draw (0,0) rectangle (0.15,0.15);
\end{tikzpicture}}

\newcommand{\paw}{\begin{tikzpicture}[baseline=0.05ex]
\draw (0,0) -- (0,0.15);
\draw (0,0) -- (0.15,0);
\draw (0,0.15) -- (0.15,0);
\draw (0,0.15) -- (0.15,0.15);
\end{tikzpicture}}
\newcommand{\claw}{\begin{tikzpicture}[baseline=0.05ex]
\draw (0,0) -- (0,0.15);
\draw (0,0.15) -- (0.15,0);
\draw (0,0.15) -- (0.15,0.15);
\end{tikzpicture}}
\newcommand{\tPath}{\begin{tikzpicture}[baseline=0.05ex]
\draw (0,0) -- (0,0.15);
\draw (0,0.15) -- (0.15,0.15);
\draw (0.15,0.15) -- (0.15,0);
\end{tikzpicture}}

\newcommand{\tri}{\begin{tikzpicture}[baseline=0.05ex]
\draw (0,0) -- (0.075,0.15);
\draw (0,0) -- (0.15,0);
\draw (0.075,0.15) -- (0.15,0);
\end{tikzpicture}}

\newcommand{\DrawSZero}{\begin{tikzpicture}[baseline=0.05ex, scale=0.2]
  \coordinate (A) at (0,0);
  \coordinate (B) at (1,0);
  \coordinate (C) at (0.5,0.866); 
  \draw (A) -- (C) -- (B);

  \filldraw[fill=white, draw=black] (A) circle (7pt);
  \draw[smallmarkarrow] (A) ++(0,-0.5) -- ++(0,0.25);
  \filldraw[fill=black, draw=black] (B) circle (5.5pt);
  \filldraw[fill=black, draw=black] (C) circle (4pt);
\end{tikzpicture}}

\newcommand{\DrawSOne}{\begin{tikzpicture}[baseline=0.05ex, scale=0.2,transform shape]
  \coordinate (A) at (0,0);
  \coordinate (B) at (1,0);
  \coordinate (C) at (0.5,0.866); 
  \draw (A) -- (C) -- (B) -- (A);

  \filldraw[fill=white, draw=black] (A) circle (7pt);
  \draw[smallmarkarrow] (A) ++(0,-0.5) -- ++(0,0.25);
  \filldraw[fill=black, draw=black] (B) circle (5.5pt);
  \filldraw[fill=black, draw=black] (C) circle (4pt);
  
\end{tikzpicture}}

\newcommand{\DrawSTwo}{\DrawuLv
}

\newcommand{\DrawSThreeH}{\DrawuLLv
}
\newcommand{\DrawSThreeAny}{\begin{tikzpicture}[baseline=0.05ex, scale=0.2,transform shape]
  \coordinate (A) at (0,0);
  \coordinate (B) at (0,1);
  \coordinate (C) at (1,1);
  \coordinate (D) at (1,0);

  \draw (A) -- (B) -- (C) -- (D);

  \filldraw[fill=white, draw=black] (A) circle (7pt);
  \draw[smallmarkarrow] (A) ++(0,-0.5) -- ++(0,0.25);
  \filldraw[fill=white, draw=black] (D) circle (7pt);
  \draw[smallmarkarrow] (D) ++(0,-0.5) -- ++(0,0.25);
  \filldraw[fill=gray!70, draw=black] (B) circle (5.5pt);
  \filldraw[fill=gray!70, draw=black] (C) circle (5.5pt);
  
\end{tikzpicture}}

\newcommand{\DrawSFour}{\begin{tikzpicture}[baseline=0.05ex, scale=0.2,transform shape]
  \coordinate (A) at (0,0);
  \coordinate (B) at (0,1);
  \coordinate (C) at (1,1);
  \coordinate (D) at (1,0);

  \draw (A) -- (B) -- (C);
  \draw (B) -- (D);

  \filldraw[fill=white, draw=black] (A) circle (7pt);
  \draw[smallmarkarrow] (A) ++(0,-0.5) -- ++(0,0.25);
  \filldraw[fill=white, draw=black] (D) circle (7pt);
  \draw[smallmarkarrow] (D) ++(0,-0.5) -- ++(0,0.25);
  \filldraw[fill=black, draw=black] (B) circle (4pt);
  \filldraw[fill=black, draw=black] (C) circle (5.5pt);
  
\end{tikzpicture}}

\newcommand{\DrawSFive}{\DrawpLL
}

\newcommand{\DrawSSix}{\begin{tikzpicture}[baseline=0.05ex, scale=0.2,transform shape]
  \coordinate (A) at (0,0);
  \coordinate (B) at (0,1);
  \coordinate (C) at (1,1);
  \coordinate (D) at (1,0);

  \draw (A) -- (B) -- (C);
  \draw (B) -- (D) -- (A);
  \draw (C) -- (D);

  \filldraw[fill=white, draw=black] (A) circle (7pt);
  \draw[smallmarkarrow] (A) ++(0,-0.5) -- ++(0,0.25);
  \filldraw[fill=white, draw=black] (C) circle (7pt);
  \draw[smallmarkarrow] (C) ++(0,-0.5) -- ++(0,0.25);
  \filldraw[fill=black, draw=black] (B) circle (4pt);
  \filldraw[fill=black, draw=black] (D) circle (4pt);
  
\end{tikzpicture}}

\newcommand{\DrawSSeven}{\DrawcL
}

\begin{document}

\title{
Fully Dynamic Triangle and 4-Vertex Subgraph Counting: \\From Theory to
Practice and Back Again
}
\author{Kathrin Hanauer\thanks{Faculty of Computer Science, University of Vienna, Vienna, Austria; Email: \email{kathrin.hanauer@univie.ac.at}, \email{sophia.heck@univie.ac.at},
\email{leonhard.paul.sidl@univie.ac.at}
}
\thanks{Research Network Data Science, University of Vienna, Vienna, Austria}
\and
Sophia Heck\footnotemark[1]
\thanks{Doctoral School Computer Science DoCS, University of Vienna, Vienna, Austria}
\and
Monika Henzinger\thanks{Institute of Science and Technology Austria (ISTA), Klosterneuburg,
Austria; Email: \email{monika.henzinger@ista.ac.at}.}
\and
Leonhard Sidl\footnotemark[1]
\thanks{Department of Theoretical Chemistry, University of Vienna, Vienna, Austria; and Vienna Doctoral School in Chemistry, University of Vienna, Vienna, Austria }
}
\date{}

\maketitle

\begingroup
\renewcommand{\thefootnote}{}
\footnotetext{Extended author version of the ALENEX 2027 proceedings paper~\cite{HanauerHHS27:short} (to appear).}
\endgroup

\fancyfoot[R]{\scriptsize{Copyright \textcopyright\ 2027\\
Copyright for this paper is retained by authors}}

\begin{abstract}

Counting 
small patterns is a core primitive in large-scale graph analytics, with numerous applications in data management and network analysis.
As modern graphs are massive and continuously evolving, maintaining such counts efficiently under updates is essential.
Although specialized dynamic algorithms for patterns with up to four vertices exist, they remain largely theoretical; practitioners instead rely on general-purpose or approximate methods.

We present an in-depth study of two dynamic approaches by Eppstein et al.\ and Hanauer et al.\ for maintaining counts of triangles and all connected four-vertex subgraph patterns, both based on variants of heavy-light vertex partitioning.
We introduce theoretical and practical optimizations and develop hybrid variants that combine complementary strengths of both approaches.

Experiments on \num{56} large real-world and synthetic dynamic graphs show that our optimized algorithms outperform the original methods and strong static and dynamic baselines on nearly all instances, both in total running time and worst-case update time, with speedups of up to several orders of magnitude. We further provide practical guidelines for algorithm selection and tuning. To the best of our knowledge, this is the first comprehensive empirical study of heavy-light partitioning for dynamic subgraph counting, and our findings may inform other algorithms based on this technique.

\end{abstract}

\section{Introduction}
Subgraph counting is a fundamental problem in network analysis.
The frequency with which a (smaller) \emph{pattern graph} occurs as a subgraph in a larger \emph{data graph} captures essential information about local structures and serves as a fundamental graph statistic with numerous applications:
For example, the clustering coefficient, which relates the number of triangles in a graph to the number of length-2 paths, measures transitivity~\cite{WS98}.
In databases, the number of embeddings of a fixed pattern graph corresponds to the result size of the respective conjunctive join query.
\emph{Motifs}, i.e., (small) patterns that occur significantly more often than expected under an appropriate random model, are at the core of
social network analysis~\cite{DBLP:journals/internet/KusenS21},
recommendation systems~\cite{DBLP:journals/pvldb/GuptaSGGZLL14},
recognition of hidden structures in complex networks~\cite{milo2002network},
bot identification~\cite{DBLP:journals/ipm/KusenS20},
fraud detection in financial networks~\cite{DBLP:journals/pvldb/QiuCQPZLZ18},
and
cyber security~\cite{DBLP:conf/edbt/ChoudhuryHCAF15}.

In practice, data very often is not static.
Instead, it is updated continuously, giving rise to \emph{dynamic algorithms} that maintain subgraph counts under edge insertions or deletions.
Dynamic algorithms with strong theoretical guarantees are known for triangles~\cite{DBLP:conf/icdt/KaraNNOZ19,DBLP:journals/jgaa/EppsteinS12} as well as for all connected four-vertex patterns~\cite{DBLP:conf/sand/HanauerHH22,DBLP:journals/tcs/EppsteinGST12};
all of them are based on variants of \emph{heavy-light partitioning}, which exploits that the degree distribution in many real-world graphs is highly skewed.

Despite their relevance, maintaining the exact frequencies of triangles and four-vertex patterns in a dynamically evolving graph with both edge insertions and deletions has not received much attention in empirical research so far.
Previous work has either focused on approximate (often triangle) counting~\cite{DBLP:journals/algorithmica/BulteauFKP16,DBLP:journals/tkdd/StefaniERU17,DBLP:conf/asunam/HanS17,DBLP:conf/kdd/LimK15,DBLP:journals/pvldb/AhmedDWR17,DBLP:journals/bmcsb/MukherjeeHBK18},
considers a more general approach for counting or enumerating arbitrary patterns~\cite{DBLP:journals/pvldb/AhmedDWR17,DBLP:journals/bmcsb/MukherjeeHBK18,10.1093/nar/gkg340,fan2013incremental,kim2018turboflux,DBLP:journals/pvldb/MinPPGIH21}, and/or is restricted to cliques~\cite{DBLP:conf/hipc/MakkarBG17,DBLP:conf/apocs/DhulipalaLSY21}.
To the best of our knowledge, only Schiller et al.~\cite{DBLP:conf/alcob/SchillerJHS15} empirically study four-vertex subgraph counting in the dynamic setting, however without considering sophisticated approaches with strong theoretical guarantees, such as those based on vertex partitioning.
Furthermore, they only consider \emph{induced} subgraph counts,
where also non-edges must be preserved in the data graph.
\emph{Non-induced} counts, in contrast, allow additional edges and admit faster algorithms.
The two are mutually convertible
via formulas, which however require several other pattern counts, including the computationally more expensive cliques~\cite{DBLP:conf/sand/HanauerHH22}.

In this paper, we close the gap between theory and practice.
We implement and further optimize the non-induced four-vertex subgraph counting algorithms by Hanauer et al.~\cite{DBLP:conf/sand/HanauerHH22}
and Eppstein et al.~\cite{DBLP:journals/tcs/EppsteinGST12,DBLP:journals/jgaa/EppsteinS12}, along with the triangle counting algorithms based on their respective frameworks,
which, in case of Hanauer et al.~\cite{DBLP:conf/sand/HanauerHH22}, is conceptually similar to the algorithm of Kara et al.~\cite{DBLP:conf/icdt/KaraNNOZ19}.
As both can count 4-cliques, they can also report induced counts.
We consider triangles as well as all connected four-vertex pattern graphs. 
Specifically, we make the following \emph{contributions}:
\begin{enumerate}[itemsep=-1ex,partopsep=1ex,parsep=1ex]
\item
We identify and correct errors and omissions in the description of these algorithms.
\item 
We optimize auxiliary data structures used internally by the algorithms to achieve speedups in practice, while retaining or even improving the theoretical guarantees.
\item We describe and experiment with new ways to maintain a vertex partitioning under edge updates.
\item We cross-combine the algorithm by Eppstein et al.\ with the partitioning scheme by Hanauer et al.\ and vice-versa and analyze the resulting running times theoretically.
\item In an extensive experimental evaluation
on \num{56} large instances,
our optimized algorithms achieve mean speedups
of up to four orders of magnitude over static recomputation,
up to three orders over the only dynamic counting baseline,
and up to $26\times$ over state-of-the-art continuous matching algorithms.
\end{enumerate}

\section{Related Work}\label{sect:relwork_short}
The high relevance and ubiquity of subgraph counting and matching in diverse areas 
have given rise to a large body of work on general and specialized algorithms
as well as theoretical results.
Here, we restrict ourselves to results closest to ours and refer the reader 
to \autoref{sect:relwork_long} as well as
the detailed surveys on static~\cite{DBLP:journals/csur/RibeiroPSAS21} and dynamic~\cite{DBLP:journals/jea/HanauerHS22} subgraph counting for a broader overview. 

In the static setting, counting for \emph{arbitrary} patterns generalizes the $\mathcal{NP}$-hard subgraph isomorphism problem.
For patterns on up to five vertices, efficient specialized algorithms exist~\cite{DBLP:journals/ipl/KloksKM00,DBLP:journals/bioinformatics/HocevarD14,DBLP:journals/ans/OrtmannB17,DBLP:conf/www/PinarSV17},
alongside a large body of work on the
identification of motifs~\cite{milo2002network,itzhack2007optimal,DBLP:conf/wabi/Wernicke05,li2012netmode,DBLP:journals/tcbb/WangHWP12,DBLP:conf/sac/RibeiroS10},
and on approximate and streaming algorithms, especially for triangles~\cite{DBLP:conf/cocoon/JowhariG05,DBLP:conf/pods/BuriolFLMS06,DBLP:conf/kdd/LimK15,DBLP:journals/tkdd/LimJK18,DBLP:conf/pods/McGregorVV16,DBLP:journals/siamsc/KoldaPPST14,DBLP:journals/pvldb/PavanTTW13}.

In the dynamic setting,
Eppstein and Spiro~\cite{DBLP:journals/jgaa/EppsteinS12} maintain the triangle count in $\bigO(h)$ amortized update time, where the \hInd{} $h$ is the largest integer such that the graph has $h$ vertices of degree at least $h$. Note that $h \in \bigO(\sqrt{m})$.
Eppstein et al.~\cite{DBLP:journals/tcs/EppsteinGST12} extend this approach to non-induced subgraphs of size four with an amortized update time of $\bigO(h^2)$.
We refer to this algorithm as \egst{} and provide a more detailed description in \autoref{sect:egst}.
Kara et al.~\cite{DBLP:conf/icdt/KaraNNOZ19} show how to count triangles in $\bigO(\sqrt{m})$ amortized update time based on a heavy-light vertex partitioning.
Using a similar partitioning, Hanauer et al.~\cite{DBLP:conf/sand/HanauerHH22} give improved algorithms to count non-induced 4-vertex patterns in amortized time $\bigO(m^\frac{2}{3})$ except for the clique and the path, which take $\bigO(m)$ and $\bigO(\sqrt{m})$ update time, respectively.
We later refer to this algorithm as \hhh{} and also include their triangle counting algorithm, which is conceptually very similar to that by Kara et al.~\cite{DBLP:conf/icdt/KaraNNOZ19} and has the same guarantees.
Hanauer et al.\ further show that,
conditioned on the combinatorial k-clique conjecture~\cite{DBLP:journals/siamcomp/AbboudBW18} and the OMv conjecture~\cite{DBLP:conf/stoc/HenzingerKNS15}, respectively, substantially faster fully dynamic algorithms for induced and non-induced connected 4-vertex patterns 
do not exist.
Further theoretical results tighten these bounds for triangles~\cite{DBLP:conf/icdt/LuT21} and 4-cycles~\cite{DBLP:journals/pacmmod/AssadiS25}.
The only prior empirical study of dynamic four-vertex subgraph counting is by Schiller et al.~\cite{DBLP:conf/alcob/SchillerJHS15}, who maintain \emph{induced} counts via neighborhood scanning in $\bigO(\MaxDegree^2)$ update time,
where $\MaxDegree$ denotes the maximum vertex degree.
Finally, there also exists a large body of empirical work on continuous subgraph matching (CSM) for arbitrary patterns~\cite{fan2013incremental,kim2018turboflux,DBLP:journals/pvldb/MinPPGIH21,DBLP:conf/sigmod/KankanamgeSMCS17,DBLP:conf/sigmod/IdrisUV17,DBLP:journals/vldb/IdrisUVVL20,DBLP:journals/pvldb/SunSHL22}, from which frequencies can be extracted straightforwardly.
Sun et al.~\cite{DBLP:journals/pvldb/SunSLH22} provide an extensive survey and experimental study of these algorithms.

\section{Preliminaries}
Let $G=(V,E)$ be a simple undirected graph with $n = |V|$ and $m= |E|$.
The \emph{degree} $\Degree{v}$ of a vertex $v$ is the number of incident edges of $v$, and $\MaxDegree := \max_{v \in V} \Degree{v}$.
Let $N(v) := \{u \mid \{u, v\} \in E\}$ denote the neighborhood of $v$.
We define $N_{\overline{u}}(v) := N(v) \setminus \{u\}$ and $\DegreeN{u}{v} := |N_{\overline{u}}(v)|$.
For a set of vertices $X \subseteq V$ and some vertex $v \in V$,
we use $N_X(v) := N(v) \cap X$, 
$\DegreeNA_X(v) := |N_X(v)|$, $N_{X, \overline{u}}(v) := N_X(v) \setminus \{u\}$, and
$\MaxDegree_X := \max_{v \in X} \Degree{v}$.

A graph $G' = (V', E')$ is a (non-induced) \emph{subgraph} of $G$ if $V' \subseteq V$ and $E' \subseteq E$, and an \emph{induced subgraph} if $E' = \{ \{u,v\} \in E \mid u, v \in V'\}$.
A \emph{pattern} is a (small) graph that we are interested in finding in a larger graph.
Given a pattern $P$, its induced or non-induced count equals the number of induced or non-induced subgraphs in $G$, respectively, that are isomorphic to $P$, divided by the number of automorphisms of $P$.
We consider the following patterns (see also \autoref{fig:hhh_structures_new}):
triangle, 3-path, claw, paw, 4-cycle, diamond, and 4-clique.

A \emph{dynamic graph} is a (possibly empty) initial graph along with a sequence of edge insertions and deletions, also called \emph{updates}, which are applied to the graph in order.
These updates are unknown to a dynamic graph algorithm beforehand.
When considering a dynamic graph, $n$ and $m$ always refer to the current number of vertices and edges, respectively, \ie, some updates may already have been applied.
For the sake of simplicity, we do not explicitly consider node updates.
They can be handled straightforwardly.

A (heavy-light) \emph{vertex partition} $\Partition = (\High, \Low)$ partitions the vertex set $V$ into two disjoint subsets, where vertices with high degree (for some notion of high degree) are in $\High$ and all others are in $\Low$.
All partitions in the following can be maintained in amortized constant time per update.
We denote by $\changesperupdate$ the amortized number of vertices changing partition per update.

\section{Triangle and 4-Vertex Subgraph Counting}
We provide an overview of the only two algorithms we are aware of for counting all connected non-induced four-vertex subgraphs in dynamic graphs,
\egst{}~\cite{DBLP:journals/tcs/EppsteinGST12} and
\hhh~\cite{DBLP:conf/sand/HanauerHH22}.
Both can also be extended to count triangles~\cite{DBLP:journals/jgaa/EppsteinS12}.
We introduce several variants and optimizations targeted at improving their practical performance.
As claws can be counted in worst-case constant time~\cite{DBLP:conf/sand/HanauerHH22}, we do not discuss them further in this section.

\subsection{Common Overall Architecture}
Despite significant differences in their detailed strategies, \egst{} and \hhh{} share the same overarching design:

\begin{itemize}[label={}, leftmargin=0em, itemsep=0.2em, parsep=-1pt,topsep=-.5pt]
    \item 
\textbf{Vertex Partitioning} maintains disjoint high- and low-degree vertex sets, $\High$ and $\Low$.
    \item 
\textbf{Auxiliary Counts}
record occurrences of restricted patterns indexed by up to three anchor vertices. Restrictions specify whether vertices belong to $\High$ or $\Low$.
    \item 
\textbf{Final Pattern Counts} are obtained from a combination of auxiliary counts and neighborhood scanning.
\end{itemize}

Upon an edge insertion both algorithms perform the three steps in the order stated above.
After an edge deletion, the order of auxiliary structure updates and final pattern counts is swapped.
Below, we provide a brief overview of each element and refer to \cite{DBLP:conf/sand/HanauerHH22} and \cite{DBLP:journals/tcs/EppsteinGST12} for a complete description.
We assume that auxiliary counts are stored using fast hash maps
with expected $\bigO(1)$ access and modification time.

\subsection{\hhh{} Algorithm and \epsT{}}
We briefly review the \epsT{} as well as the \hhh{} algorithm~\cite{DBLP:conf/sand/HanauerHH22}
and describe a correction.
We then discuss several practical optimization strategies.
\subsubsection{$\varepsilon$-Partition}
\hhh{} uses an \epsT{} to define $\High$ and $\Low = \AllV \setminus \High$.
The partitioning scheme depends on a parameter $\varepsilon \in [0,1]$, which is
chosen such that the desired objective (running time or space) is minimized,
and the partition is re-initialized whenever the graph has grown or shrunken (w.r.t.\ $m$) substantially.
We refer to the time between two re-initializations as an \emph{epoch}.
More precisely,
for the $i$th epoch,
we set a degree threshold
$\theta_i := (2m_i)^\varepsilon$, where $m_i$ is the number of edges in the graph at the beginning of the epoch, and assign a vertex $v$ to $\High$ if $\Degree{v} \ge \theta_i$ and to $\Low$ otherwise.
Afterwards, during the same epoch, $v$ changes partitions only if $\Degree{v} \ge \frac{3}{2} \theta_i$ or $\Degree{v} < \frac{1}{2} \theta_i$.
We call such a change a \emph{vertex partition update (VPU)} and refer to \cite{DBLP:conf/sand/HanauerHH22} for details. 
The next epoch $i+1$ starts as soon as the current number of edges $m$ no longer satisfies $\lfloor\frac{m_i}{2}\rfloor \leq m < 2m_i$.
The transition between two epochs requires a \emph{major rebalancing (MR)} step,
where all vertices are re-classified according to the new threshold~$\theta_{i+1}$.
There are $\Omega(m)$ graph updates (insertions or deletions) in each epoch and at most two VPUs per graph update,
\ie, 
the number of changes per graph update
$\changesperupdate$ is in $\bigO(m^{-\varepsilon})$.
The \epsT{} further guarantees that $|\High| \in\bigO (m^{1-\varepsilon})$ and $\deglow \in \bigO (m^\varepsilon)$ at all times~\cite{DBLP:conf/sand/HanauerHH22}.

\subsubsection{\hhh{} Algorithm}\label{sec:hhh}
\hhh{} explicitly maintains the counts of triangles and 4-vertex subgraphs after each edge update.
Depending on the pattern(s) to count, it selects a set of auxiliary structures from the top row of \autoref{fig:hhh_structures_new} and maintains their counts.
As for the patterns, the auxiliary counts refer to non-induced subgraph counts, \ie, additional edges may exist.
With a slight abuse of notation we use the structure name to denote its count, i.e., $\AuxvLV[v]$ denotes the number of \AuxvLV{} structures with anchor vertex $v$.
Whenever an auxiliary structure appears or disappears, \ie, after
a VPU
or after an edge change in the graph, \hhh{} updates the corresponding auxiliary counts for all affected anchor vertices.
During an MR step, all auxiliary structures are recomputed from scratch by simulating edge insertions for all existing edges.
After a change in the graph, \hhh{} additionally updates the pattern counts based on auxiliary counts.
This results in an amortized update time of
$\bigO (m^{1/2})$ for triangles and 3-paths
and $\bigO(m^{2/3})$ for all remaining connected four-vertex non-induced subgraphs except the 4-clique.
The latter\footnote{This also works for the other four-vertex patterns, but with worse amortized update times.} can be directly maintained in $\bigO(m)$ worst-case update time without any auxiliary counts by iterating over all edges and checking adjacencies~\cite{DBLP:conf/sand/HanauerHH22}.

\subsubsection{Corrections}
\label{hhh:impl_notes}
For 3-paths containing a fixed edge $e=\{u,v\}$, the original paper~\cite{DBLP:conf/sand/HanauerHH22} incorrectly states the count for the subcase where $e$ is a terminal edge and the first adjacent inner vertex has low degree as $\mathrm{vLV}[v] - \DegreeN{v}{u} + \mathrm{vLV}[u] - \DegreeN{u}{v}$. The correct count is obtained by applying the following lemma
(proof in \autoref{sec:hhh:corrections:proof})
twice, once with the roles of $u$ and $v$ swapped.

\begin{restatable}{lemma}{hhhthreepath}
\label{lemma:hhhthreepath}
    Assume that the auxiliary structures $\AuxvLV$ and $\AuxuLv$ are maintained with anchors of any degree. Let $\{v, u\}$ be an edge.
    Then, 
    the number of 3-paths of the form $v\text{-}u\text{-}x\text{-}y$ , where 
    $x\in\Low$ is $\AuxvLV[u]- A_v - \AuxuLv[u,v]$, where $A_v= \DegreeN{u}{v}$ if $v \in \Low$ and 0 otherwise.
\end{restatable}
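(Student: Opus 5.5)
The plan is a direct double-counting argument: write $\AuxvLV[u]$ as the number of pairs $(x,y)$, then remove exactly the pairs that do not yield a valid 3-path $v\text{-}u\text{-}x\text{-}y$. First I fix the semantics of the auxiliary counts as given in \autoref{fig:hhh_structures_new}, with anchors of arbitrary degree as the lemma assumes:
\begin{itemize}
\item $\AuxvLV[u]$ is the number of 2-paths $u\text{-}x\text{-}y$ with middle vertex $x\in\Low$ and arbitrary endpoint $y\neq u$. Equivalently, it is the number of ordered pairs $(x,y)$ with $x\in \NeighL{u}$ and $y\in\NeighN{u}{x}$.
\item $\AuxuLv[u,v]$ is the number of 2-paths $u\text{-}x\text{-}v$ with $x\in\Low$, i.e., the number of common low neighbours of $u$ and $v$.
\end{itemize}

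Next I characterize the target set. A 3-path $v\text{-}u\text{-}x\text{-}y$ with $x\in\Low$ consists of four distinct vertices with $\{v,u\},\{u,x\},\{x,y\}\in E$. Since the edge $\{v,u\}$ is fixed and the path is read from $v$, these paths correspond bijectively to pairs $(x,y)$ with
\[
x\in\NeighL{u},\quad y\in\NeighN{u}{x},\quad x\neq v,\quad y\neq v .
\]
The conditions $x\neq u$ and $y\neq u$ are automatic from adjacency and from the definition of $\NeighN{u}{x}$. So the count equals $\AuxvLV[u]$ minus the number of pairs violating $x\neq v$ or $y\neq v$.

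I then count the bad pairs in two classes.
\begin{itemize}
\item Pairs with $x=v$ exist only if $v\in\NeighL{u}$, i.e., $v\in\Low$, since $v$ is already adjacent to $u$. In that case they are the pairs $(v,y)$ with $y\in\NeighN{u}{v}$, giving exactly $\DegreeN{u}{v}=A_v$ pairs. If $v\in\High$ there are none, matching $A_v=0$.
\item Pairs with $y=v$ (and hence $x\neq v$) are exactly the $x\in\Low$ adjacent to both $u$ and $v$. There are $\AuxuLv[u,v]$ of them.
\end{itemize}

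The two classes are disjoint, because $x=y$ is impossible: $y$ is a neighbour of $x$ in a simple graph. Subtracting both classes therefore gives $\AuxvLV[u]-A_v-\AuxuLv[u,v]$ with no inclusion--exclusion correction.

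The only delicate point, and the one responsible for the original error, is pinning down the exact semantics of $\AuxvLV$ and $\AuxuLv$. One must confirm that $\AuxvLV[u]$ allows the endpoint $y$ to be $v$ or any other vertex of either class, and that anchors may be heavy. This is precisely what the hypothesis ``maintained with anchors of any degree'' provides.
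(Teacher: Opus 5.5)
Your proof is correct and follows the paper's argument. You start from $\AuxvLV[u]$ and subtract the two kinds of invalid extensions, $x=v$ (contributing $A_v$) and $y=v$ (contributing $\AuxuLv[u,v]$); your explicit check that these two classes are disjoint is a detail the paper leaves implicit.
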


\subsubsection{Optimizations}\label{sect:hhh-opt}
Whereas algorithm design choices are often optimized for simplicity of presentation or proofs and focus on worst-case or amortized analysis, additional criteria become relevant for practical performance. 
In the following, we describe several modifications of \hhh{} and the \epsT{} that focus strongly on the latter.

\popt[prio-low]{O-LP}{Lower-Degree Vertex Prioritization}
Whenever we can choose whether we iterate over the neighbors of one end vertex of an edge or the other, we prioritize the lower-degree vertex.

\popt[arc-border]{O-NP}{Neighborhood Partitioning}
An often-used strategy of \hhh{} is to iterate over all high-degree vertices 
to check adjacency to some vertex $x$, taking $\bigO(|\High|)$ time. We replace this by iterating over the high-degree neighbors of $x$ in $\bigO(\DegreeH{x})$ time, requiring maintenance of each vertex's neighborhood partitioned into high- and low-degree neighbors, which is implementable with constant overhead.

\begin{figure}
    \setlength{\belowcaptionskip}{-10pt}
    \hspace{-2cm}\resizebox{1.23\linewidth}{!}{\def\scaleTop{0.25}
\def\scaleBot{0.3}
\def\markLength{0.2}

\def\xskip{1.03}
\def\xskipBot{1.2}
\def\yskip{2.5}
\def\yskipbot{-2.2}

\tikzset{
  gp/.style={circle,draw,fill=gray!70,minimum size=4pt,inner sep=0pt},
  wp/.style={circle,draw,fill=white,minimum size=6pt,inner sep=0pt},
  wgp/.style={circle,draw,fill=white,minimum size=6pt,inner sep=0pt,
      path picture={
        \fill[gray] (path picture bounding box.center) circle[radius=1.3pt];
      }
  },
  bp/.style={circle,draw,fill=black,minimum size=2.5pt,inner sep=0pt},
  e/.style={line width=1.0pt},
  connectarrow/.style={-{Stealth[length=2.0mm]},line width=0.8pt},
  markarrow/.style={red,-{Stealth[length=1.5mm]},line width=0.7},
  label/.style={font=\small},
}

\ifdefined\markb\else
\newcommand{\markb}[1]{
  \draw[markarrow] ($(#1)+(0,-\markLength)$) -- (#1);
}
\fi

\ifdefined\markt\else
\newcommand{\markt}[1]{
  \draw[markarrow] ($(#1)+(0,\markLength)$) -- (#1);
}
\fi

\ifdefined\markl\else
\newcommand{\markl}[1]{
  \draw[markarrow] ($(#1)+(-\markLength,0)$) -- (#1);
}
\fi

\ifdefined\markr\else
\newcommand{\markr}[1]{
  \draw[markarrow] ($(#1)+(\markLength,0)$) -- (#1);
}
\fi

\ifdefined\markbl\else
\newcommand{\markbl}[1]{
  \draw[markarrow] ($(#1)+(-\markLength/1.414,-\markLength/1.414)$) -- (#1);
}
\fi

\ifdefined\labelbelow\else
\newcommand{\labelbelow}[2][0.5]{
  \node[label] at (0,-#1) {#2};
}
\fi

\ifdefined\nodeTop\else
\newcommand{\nodeTop}[3]{
  \node[#1] (#2) at ($\scaleTop*#3$) {};
}
\fi

\ifdefined\nodeBot\else
\newcommand{\nodeBot}[3]{
  \node[#1] (#2) at ($\scaleBot*#3$) {};
}
\fi

\ifdefined\conntectTop\else
\newcommand{\conntectTop}[4]{
\draw[connectarrow, #4] (#1*\xskip,\yskip-0.7) -- ($(#2*\xskipBot,2.5*\scaleBot)+(#3,0)$) {};
}
\fi

\ifdefined\conntectBot\else
\newcommand{\conntectBot}[4]{
\draw[connectarrow, #4] (#1*\xskip,\yskipbot+0.65) -- ($(#2*\xskipBot,-1.8*\scaleBot)+(#3,0)$) {};
}
\fi

\ifdefined\drawTriangle\else
\newcommand{\drawTriangle}[4]{
  #1{#2}{p1}{(-1,0)};
  #1{#3}{p2}{(0,2)};
  #1{#4}{p3}{(1,0)};
}
\fi

\ifdefined\drawSquare\else
\newcommand{\drawSquare}[5]{
  #1{#2}{p1}{(-1,0)};
  #1{#3}{p2}{(-1,2)};
  #1{#4}{p3}{(1,2)};
  #1{#5}{p4}{(1,0)};
}
\fi

\begin{tikzpicture}[scale=1]

\begin{scope}[shift={(0*\xskip,\yskip)}]
  \drawTriangle{\nodeTop}{wgp}{bp}{gp}
  \draw[e] (p1)--(p2)--(p3);
  \markb{p1};
  \labelbelow{\AuxvLV};
\end{scope}

\begin{scope}[shift={(1*\xskip,\yskip)}]
  \drawTriangle{\nodeTop}{wp}{gp}{gp}
  \draw[e] (p1)--(p2)--(p3)--(p1);;
  \markb{p1};
  \labelbelow{\Auxt};
\end{scope}

\begin{scope}[shift={(2*\xskip,\yskip)}]
  \drawTriangle{\nodeTop}{wgp}{bp}{wgp}
  \draw[e] (p1)--(p2)--(p3);
  \markb{p1};
  \markb{p3};
  \labelbelow{\AuxuLv};
  \draw[-{Stealth[length=2.0mm]}] (-0.3,0.3) --  (-0.7,0.3);
\end{scope}

\begin{scope}[shift={(3*\xskip,\yskip)}]
  \drawSquare{\nodeTop}{wgp}{bp}{bp}{wgp}
  \draw[e] (p1)--(p2)--(p3)--(p4);
  \markb{p1};
  \markb{p4};
  \labelbelow{\AuxuLLv};
\end{scope}

\begin{scope}[shift={(4*\xskip,\yskip)}]
  \drawSquare{\nodeTop}{wgp}{bp}{gp}{wgp}
  \draw[e] (p1)--(p2)--(p3) (p4)--(p2);
  \markb{p1};
  \markb{p4};
  \labelbelow{\AuxcLV};
\end{scope}

\begin{scope}[shift={(5*\xskip,\yskip)}]
  \drawSquare{\nodeTop}{wgp}{bp}{wgp}{bp}
  \draw[e] (p2)--(p4)--(p1)--(p2)--(p3);
  \markb{p1};
  \markb{p3};
  \labelbelow{\AuxpLL};
\end{scope}

\begin{scope}[shift={(6*\xskip,\yskip)}]
  \drawTriangle{\nodeTop}{wp}{wp}{wp}
  \draw[e] (p1)--(p2)--(p3);
  \markb{p1};
  \markb{p3};
  \labelbelow{\AuxuHv};
\end{scope}

\begin{scope}[shift={(7*\xskip,\yskip)}]
  \drawSquare{\nodeTop}{wgp}{bp}{wgp}{wgp}
  \draw[e] (p1)--(p2)--(p3) (p2)--(p4);
  \markb{p1};
  \markb{p3};
  \markb{p4};
  \labelbelow{\AuxcL};
\end{scope}

\begin{scope}[shift={(0*\xskipBot,0)}]
  \drawTriangle{\nodeBot}{gp}{gp}{gp}
  \draw[e] (p1)--(p2)--(p3)--(p1);
  \labelbelow[0.3]{triangle\vphantom{lp}};
\end{scope}

\begin{scope}[shift={(1*\xskipBot,0)}]
  \drawSquare{\nodeBot}{gp}{gp}{gp}{gp}
  \draw[e] (p1)--(p2)--(p3)--(p4);
  \labelbelow[0.3]{3-path\vphantom{lp}};
\end{scope}

\begin{scope}[shift={(2*\xskipBot,0)}]
  \drawSquare{\nodeBot}{gp}{gp}{gp}{gp}
  \draw[e] (p1)--(p2)--(p3) (p2)--(p4);
  \labelbelow[0.3]{claw\vphantom{lp}};
\end{scope}

\begin{scope}[shift={(3*\xskipBot,0)}]
  \drawSquare{\nodeBot}{gp}{gp}{gp}{gp}
  \draw[e] (p2)--(p4)--(p1)--(p2)--(p3);
  \labelbelow[0.3]{paw\vphantom{lp}};
\end{scope}

\begin{scope}[shift={(4*\xskipBot,0)}]
  \drawSquare{\nodeBot}{gp}{gp}{gp}{gp}
  \draw[e] (p1)--(p2)--(p3)--(p4)--(p1);
  \labelbelow[0.3]{4-cycle\vphantom{lp}};
\end{scope}

\begin{scope}[shift={(5*\xskipBot,0)}]
  \drawSquare{\nodeBot}{gp}{gp}{gp}{gp}
  \draw[e] (p2)--(p3)--(p4)--(p1)--(p2)--(p4);
  \labelbelow[0.3]{diamond\vphantom{lp}};
\end{scope}

\begin{scope}[shift={(6*\xskipBot,0)}]
  \drawSquare{\nodeBot}{gp}{gp}{gp}{gp}
  \draw[e] (p2)--(p3)--(p4)--(p1)--(p2)--(p4) (p1)--(p3);
  \labelbelow[0.3]{4-clique\vphantom{lp}};
\end{scope}

\begin{scope}[shift={(0*\xskip,\yskipbot)}]
  \drawTriangle{\nodeTop}{wp}{bp}{bp}
  \draw[e] (p1)--(p2)--(p3);
  \markb{p1};
  \labelbelow{\Sc{0}};
\end{scope}

\begin{scope}[shift={(1*\xskip,\yskipbot)}]
  \drawTriangle{\nodeTop}{wp}{bp}{bp}
  \draw[e] (p1)--(p2)--(p3)--(p1);;
  \markb{p1};
  \labelbelow{\Sc{1}};
\end{scope}

\begin{scope}[shift={(2*\xskip,\yskipbot)}]
  \drawTriangle{\nodeTop}{wp}{bp}{wp}
  \draw[e] (p1)--(p2)--(p3);
  \markb{p1};
  \markb{p3};
  \labelbelow{\Sc{2}};
\end{scope}

\begin{scope}[shift={(3*\xskip,\yskipbot)}]
  \drawSquare{\nodeTop}{wgp}{bp}{bp}{wgp}
  \draw[e] (p1)--(p2)--(p3)--(p4);
  \markb{p1};
  \markb{p4};
  \labelbelow{\Sc{3}};
\end{scope}

\begin{scope}[shift={(4*\xskip,\yskipbot)}]
  \drawSquare{\nodeTop}{wp}{bp}{bp}{wp}
  \draw[e] (p1)--(p2)--(p3) (p4)--(p2);
  \markb{p1};
  \markb{p4};
  \labelbelow{\Sc{4}};
\end{scope}

\begin{scope}[shift={(5*\xskip,\yskipbot)}]
  \drawSquare{\nodeTop}{wp}{bp}{wp}{bp}
  \draw[e] (p1)--(p2)--(p3) (p1)--(p4)--(p2);
  \markb{p1};
  \markb{p3};
  \labelbelow{\Sc{5}};
\end{scope}

\begin{scope}[shift={(6*\xskip,\yskipbot)}]
  \drawSquare{\nodeTop}{wp}{bp}{wp}{bp}
  \draw[e] (p1)--(p2)--(p3)--(p4)--(p1) (p4)--(p2);
  \markb{p1};
  \markbl{p3};
  \labelbelow{\Sc{6}};
\end{scope}

\begin{scope}[shift={(7*\xskip,\yskipbot)}]
  \drawSquare{\nodeTop}{wp}{bp}{wp}{wp}
  \draw[e] (p1)--(p2)--(p3) (p4)--(p2);
  \markb{p1};
  \markb{p4};
  \markb{p3};
  \labelbelow{\Sc{7}};
\end{scope}

\begin{scope}[overlay]
\conntectTop{0}{1}{-0.1}{black}

\conntectTop{1}{3}{-0.2}{black}

\conntectTop{2}{0}{-0.1}{teal}
\conntectTop{2}{1}{0.1}{teal}
\conntectTop{2}{3}{0.0}{teal}
\conntectTop{2}{4}{-0.2}{teal}
\conntectTop{2}{5}{-0.3}{teal}

\conntectTop{3}{4}{0}{black}

\conntectTop{4}{3}{0.2}{black}

\conntectTop{5}{5}{-0.1}{black}

\conntectTop{6}{4}{0.1}{brown}
\conntectTop{6}{5}{0.1}{brown}
\conntectTop{6}{0}{0.1}{brown,dashed}

\conntectTop{7}{5}{0.3}{olive}
\conntectTop{7}{6}{0}{olive,dashed}

\conntectBot{0}{1}{-0.1}{black}

\conntectBot{1}{3}{-0.3}{black}

\conntectBot{2}{0}{0}{teal}
\conntectBot{2}{1}{0.1}{teal}
\conntectBot{2}{3}{-0.1}{teal}
\conntectBot{2}{4}{-0.1}{teal}
\conntectBot{2}{5}{-0.2}{teal}

\conntectBot{3}{4}{0.1}{black}

\conntectBot{4}{3}{0.1}{black}

\conntectBot{5}{5}{0}{black}

\conntectBot{6}{6}{-0.1}{black}

\conntectBot{7}{3}{0.3}{olive}
\conntectBot{7}{5}{0.2}{olive}
\conntectBot{7}{6}{0.1}{olive}
\end{scope}

\path (current bounding box.south) coordinate (bbS);
\matrix (mA) [ampersand replacement=\&,
        anchor=north,
        column sep=5pt,
        nodes={font=\small},
        row sep=0pt] at ($(bbS)+(0,-2mm)$) {
  \node[bp, outer sep=5.0pt] {}; \& \node{low-degree}; \&
  \node[wp, outer sep=2.5pt] {}; \& \node{high-degree}; \&
  \node[gp, outer sep=4.0pt] {}; \& \node{arbitrary-degree}; \& \\
};
\matrix (mB) [ampersand replacement=\&,
        anchor=north,
        column sep=2pt,
        nodes={font=\small},
        row sep=0pt] at ($(bbS)+(0,-8mm)$) {
  \node[wgp, outer sep=2.5pt] {}; \& \node{arbitrary-degree (original), high-degree (this work)}; \&
  \draw[markarrow](0,-0.25) -- (0,-0.15); \& \node{anchor}; \\
};

\begin{scope}[]
  \node[draw=black, rounded corners, inner sep=1pt, fit=(mA) (mB)] {};
\end{scope}

\end{tikzpicture}}
    \caption{
    Dependencies between the patterns (center) and auxiliary structures in \hhh{} (top) and \egst{} (bottom).
    For each pattern, arcs show the structure(s) used to maintain it, where solid arcs follow the original descriptions of \hhh{}~\cite{DBLP:conf/sand/HanauerHH22} and \egst{}~\cite{DBLP:journals/tcs/EppsteinGST12} and dashed ones represent dependencies added in this work.
    The arcs are colored to improve readability.
    }
    \label{fig:hhh_structures_new}
\end{figure}

\popt[high-anchors-hhh]{O-H$_{\hhh}$}{High-Degree Anchors Only}
Allowing low-degree anchors can produce large auxiliary maps, increasing storage and maintenance costs. We therefore restrict all anchors to high-degree vertices (\autoref{fig:hhh_structures_new}). We can accomplish this without worsening asymptotic update bounds.
See \autoref{sec:hhh:high_anchors} for details.

\popt[hhh-extra-aux]{O-A}{Extra Use of Auxiliary Structures}
We can use extra auxiliary structures (dashed arcs in \autoref{fig:hhh_structures_new}) for triangle and 4-clique counting without changing asymptotic running time~(see \autoref{sec:hhh:extra_aux:details}).

\popt[eps-tuning]{O-$\varepsilon\RebalanceFactor$}{Tuning $\varepsilon$ and the MR Frequency~\RebalanceFactor}

Recall that $\varepsilon$ controls the size of $\High$, with $|\High| \in \bigO(m^{1-\varepsilon})$. Larger $\varepsilon$ reduces the number of high-degree vertices, yielding fewer anchors for auxiliary structures but more low-degree vertices as non-anchors, and shifts the trade-off between traversing $\High$ and processing neighbors of low-degree vertices. While a theoretically optimal $\varepsilon$ exists for each pattern, tuning it in practice yields significant runtime improvements.

An MR is triggered when $m$ leaves the range $\lfloor\frac{m_i}{2}\rfloor \le m < 2m_i$. This is generalized with a rebalance factor $\RebalanceFactor>1$: an MR occurs only when $m$ exits $\lfloor\frac{m_i}{\RebalanceFactor}\rfloor \le m < \RebalanceFactor m_i$, setting $m_{i+1}=m$ and $\theta_{i+1}=(\RebalanceFactor m_{i+1})^\varepsilon$. Larger $\RebalanceFactor$ yields longer epochs and fewer MRs without affecting asymptotic running times for constant $\RebalanceFactor$.

\popt[eps-sll]{O-SLL}{Soft, Lazy, or Late Rebalancing}
Even though we can reduce the frequency of MR events (cf.~\poptref{eps-tuning}), each of them still 
triggers expensive update steps when it occurs,
as all auxiliary structures are recomputed from scratch by simulating edge insertions (cf.\ section~\ref{sec:hhh}).
We here suggest three alternatives and refer to the original one as \origepstab{}.

Let $T_\text{aux}$ be the time needed to update all auxiliary structures for a single (simulated) edge insertion or deletion.
In \origepstab{}, an MR results in $m$ edge updates to the auxiliary structures and hence incurs a total cost of $\bigO(m\cdot T_\text{aux})$ and
an amortized cost of
$\bigO(m \cdot T_\text{aux} \cdot m^{-1})=\bigO(T_\text{aux})$ for each update since the last MR.
The time for a single VPU, \ie, for performing the partition change and updating all affected auxiliary structures, is
$T_\text{VPU} \in \bigO(\deglow \cdot T_\text{aux})$~\cite{DBLP:conf/sand/HanauerHH22}.
The proof is in \autoref{sec:optimizing_epstab:proofs:reclassified_bound}.

\begin{restatable}{lemma}{n_wrong_vertices}
\label{lemma:n_wrong_vertices}
The number of vertices that are reclassified by an MR is in
$\bigO(m^{1-\varepsilon})$.
\end{restatable}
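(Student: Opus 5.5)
We bound the number of vertices reclassified by an MR by showing that every such vertex has degree at least $c\,\theta$ for some constant $c>0$, where $\theta$ is the old threshold $\theta_i$ or the new threshold $\theta_{i+1}$. A degree-counting argument then gives the claim.

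Fix an MR at the end of epoch $i$. Let $m$ be the current number of edges, so $m_{i+1}=m$. Since the epoch ended, $m$ has just left the range $[\lfloor m_i/\RebalanceFactor\rfloor, \RebalanceFactor m_i)$. Each update changes $m$ by one, so $m$ differs from that range by at most one, and $m$ and $m_i$ are within a constant factor $\RebalanceFactor$ of each other (up to an additive constant). With $\theta_i=(\RebalanceFactor m_i)^\varepsilon$ (or $(2m_i)^\varepsilon$ for $\RebalanceFactor=2$) and $\theta_{i+1}=(\RebalanceFactor m)^\varepsilon$, and $\varepsilon\in[0,1]$ with $\RebalanceFactor$ constant, we get $\theta_i=\Theta(m^\varepsilon)$ and $\theta_{i+1}=\Theta(m^\varepsilon)$. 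We assume $m\ge 1$, so that $m_i$ is not degenerately small relative to $m$; small $m$ is handled by the constants.

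A vertex $v$ is reclassified in one of two ways.

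\textbf{Case (a): $v$ moves from $\Low$ to $\High$.} Then $\Degree{v}\ge\theta_{i+1}=\Omega(m^\varepsilon)$ by the new classification rule.

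\textbf{Case (b): $v$ moves from $\High$ to $\Low$.} The \epsT{} invariant during epoch $i$ says a vertex in $\High$ has not dropped below $\tfrac12\theta_i$, since otherwise a VPU would have moved it. Hence $\Degree{v}\ge\tfrac12\theta_i=\Omega(m^\varepsilon)$.

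In both cases every reclassified vertex has degree $\Omega(m^\varepsilon)$. Because $\sum_v \Degree{v}=2m$, there are at most $2m/\Omega(m^\varepsilon)=\bigO(m^{1-\varepsilon})$ such vertices. One could also invoke the earlier bound $|\High|\in\bigO(m^{1-\varepsilon})$ before and after the MR, which covers both directions directly; the degree-sum argument makes this self-contained.

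The main obstacle is making the comparison of $\theta_i$, $\theta_{i+1}$ and the current $m$ rigorous. This needs the observation that an epoch ends as soon as $m$ exits the range, so $m\in\{\lfloor m_i/\RebalanceFactor\rfloor-1,\lceil\RebalanceFactor m_i\rceil\}$ up to rounding. It also needs care with the floor when $m_i$ is small, where the bound holds trivially since then $m=\bigO(1)$ and the number of vertices with nonzero degree is $\bigO(1)$.
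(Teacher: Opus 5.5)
Your proof is correct and is essentially the paper's argument. The paper notes that every reclassified vertex lies in $\High$ before or in $\High'$ after the MR, and bounds both sets by $\bigO(m^{1-\varepsilon})$ using $m_i\in\Theta(m)$; this is exactly the alternative you mention. Your main line inlines the degree-sum reasoning behind that size bound (via $\Degree{v}\ge\frac12\theta_i$ on $\High$ and $\Degree{v}\ge\theta_{i+1}$ on $\High'$), so it is more self-contained but not a genuinely different route.
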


\paragraph{\softepstab{}}
A straightforward approach is to run the VPU procedure for each vertex that is incorrectly partitioned after the MR rather than deleting all auxiliary structure counts and recomputing them from scratch. 
In practice, this performs
well if few vertices are affected.
We refer to this strategy as \softepstab.
See \autoref{sec:optimizing_epstab:proofs:soft} for the proof.

\begin{restatable}{lemma}{soft}
\label{lemma:soft}
MR via individual VPUs (\softepstab{} rebalancing) has the same total cost of $\bigO(m\cdot T_\text{aux})$ as the original version.
\end{restatable}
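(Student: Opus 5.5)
The plan is to combine the two bounds stated just before the lemma, namely \autoref{lemma:n_wrong_vertices} and the per-VPU cost $T_\text{VPU} \in \bigO(\deglow \cdot T_\text{aux})$, and multiply them. In \softepstab{}, an MR consists of recomputing the new threshold $\theta_{i+1}$, identifying the vertices whose classification under $\theta_{i+1}$ differs from their current one, and running one VPU for each of them. Identifying these vertices costs $\bigO(n)$ if we scan all vertices, or less if we only scan the candidates near the threshold. Either way it is dominated by the main term, since we may assume $n \in \bigO(m)$ after discarding isolated vertices, and $T_\text{aux} \in \Omega(1)$.

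By \autoref{lemma:n_wrong_vertices}, the number of such VPUs is $\bigO(m^{1-\varepsilon})$. Each VPU costs $\bigO(\deglow \cdot T_\text{aux})$, and the \epsT{} guarantees $\deglow \in \bigO(m^\varepsilon)$. Multiplying gives a total of $\bigO(m^{1-\varepsilon}\cdot m^{\varepsilon}\cdot T_\text{aux}) = \bigO(m \cdot T_\text{aux})$, which matches the cost of \origepstab{}.

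The main obstacle is that the VPUs are executed one after another during the MR. So it must be justified that $\deglow \in \bigO(m^\varepsilon)$ (with respect to the new threshold $\theta_{i+1} = (\RebalanceFactor m_{i+1})^\varepsilon$ and the current $m$) holds at every intermediate point of the sequence, and not just at the end.

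I would handle this by processing the reclassifications in a fixed order: first promote every vertex with $\Degree{v} \ge \theta_{i+1}$ that is currently in $\Low$, and only then demote the wrongly classified high vertices. This way the low set only shrinks during the first phase. During the second phase it gains only vertices of degree $<\theta_{i+1}$. Hence every low vertex at any intermediate point has degree $\bigO(\max(\theta_i,\theta_{i+1}))$.

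Because consecutive epochs differ in edge count by at most a constant factor $\RebalanceFactor$, both thresholds are $\Theta(m^\varepsilon)$. This gives the needed $\deglow$ bound throughout the MR.

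Alternatively, one can argue more directly without relying on the ordering. The cost of a VPU of vertex $v$ is really $\bigO(\Degree{v}\cdot T_\text{aux})$, since it touches the structures involving $v$'s neighbors. Summing $\Degree{v}$ over all reclassified vertices is at most $\sum_v \Degree{v} = 2m$, which also yields $\bigO(m\cdot T_\text{aux})$. I would present this degree-sum bound as a fallback in case the per-VPU bound from \cite{DBLP:conf/sand/HanauerHH22} is only stated in terms of $\deglow$.
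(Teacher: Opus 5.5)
Your main argument is exactly the paper's proof. The paper just multiplies the $\bigO(m^{1-\varepsilon})$ reclassified vertices from Lemma~\ref{lemma:n_wrong_vertices} by $T_\text{VPU}\in\bigO(m^{\varepsilon}\cdot T_\text{aux})$, and does not discuss intermediate states or the cost of identifying the vertices.

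Your extra care is sound, but the promote-first ordering is unnecessary. In any order, each low vertex during the MR is either an old low vertex (degree below $\frac{3}{2}\theta_i$) or a demoted one (degree below $\theta_{i+1}$), and $\High$ stays within the union of the old and new high sets. Your degree-sum fallback is also valid, provided a VPU of $v$ costs $\bigO(\Degree{v}\cdot T_\text{aux})$, and it would bypass Lemma~\ref{lemma:n_wrong_vertices} altogether.
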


\paragraph{\lazyepstab{}.}
To entirely avoid a large and time-consuming operation in a single step, we consider a \lazyepstab{} variation to distribute the partition updates of an MR over multiple future edge updates, aiming for
more stable update times.
Here, we define $\theta_{i+1}$ and a new range for $m$ as usual, but don't perform any partition or auxiliary structure changes except, if necessary, for the end vertices of the edge that triggered the MR.
Every other vertex $v$ that is now wrongly partitioned w.r.t.\ $\theta_{i+1}$, \ie, $\Degree{v} < \frac{1}{2}\theta_{i+1}$
if $v \in \High$ or $\Degree{v} \geq \frac{3}{2}\theta_{i+1}$ if $v \in \Low$, is only reassigned via a VPU when one of its incident edges is part of an update.
As a consequence, if the graph grows, vertices without any incident edge updates may become stale and get stuck in $\High$, 
and symmetrically for vertices in $\Low$ if the graph shrinks.
This may violate the bound on $\deglow$ 
(proof in \autoref{sec:optimizing_epstab:proofs:lazy}) 
and hence waive the amortized update time guarantee under an adversarial update sequence.

\begin{restatable}{lemma}{lazy}
\label{lemma:lazy}
 Rebalancing with the \lazyepstab{} strategy
 may result in
 $\deglow \in \Theta(m)$.
\end{restatable}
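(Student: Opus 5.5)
We prove the statement by constructing an adversarial update sequence under which a single vertex accumulates degree $\Theta(m)$ while remaining in $\Low$. The lazy strategy only reclassifies a vertex when one of its incident edges is updated, and it uses the $\frac{3}{2}\theta_{i+1}$ condition. So the danger comes from a vertex $v$ whose degree was fixed while the graph was large (threshold high) and which is never touched after the graph shrinks. However, its degree does not change while it is untouched, so a bare shrinking argument only gives $\Degree{v} \le \frac{3}{2}\theta$ for an old, larger threshold. This suggests iterating: the key observation is that a vertex sitting in $\Low$ with degree below the old threshold can be made to have a degree far above the new thresholds after many shrinking epochs, as long as none of its incident edges is updated.

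First, the construction. Start with a graph whose edges consist of a star with center $v$ and $k$ leaves, plus a large set $F$ of $M$ disjoint edges with $M \gg k$, chosen such that $k < \frac{1}{2}(2M)^\varepsilon$. This makes $v \in \Low$ after initialization and keeps it consistent with the $\varepsilon$-Partition invariant. (For $\varepsilon=1$ the bound is trivial, so assume $\varepsilon<1$.) Next, delete the edges of $F$ one by one and never update an edge incident to $v$. Each time $m$ drops below $\lfloor m_i/2\rfloor$, an MR is triggered with $\theta_{i+1}=(2m_{i+1})^\varepsilon$. Under \lazyepstab{}, only the endpoints of the triggering deletion are reclassified, and those are not $v$.

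We then verify the invariant. Throughout, $v$ has degree exactly $k$ and stays in $\Low$, because no incident update ever occurs and so no VPU is ever applied to $v$. We keep deleting until $F$ is empty, so $m=k$. At that point the current graph has $m=k$ edges, all incident to $v$, and $v\in\Low$ with $\Degree{v}=k=m$. Therefore $\deglow \ge m$, and trivially $\deglow\le m$, giving $\deglow\in\Theta(m)$. In contrast, the \epsT{} guarantee of $\bigO(m^\varepsilon)$ fails once $m^{1-\varepsilon}$ is large. Choosing $k$ large makes the violation asymptotic.

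The main point to check is that the epoch mechanics really never touch $v$. An MR under \lazyepstab{} performs no partition changes except at the endpoints of the triggering edge, and VPUs are only triggered by incident updates, so this needs to be confirmed against the description. We also need to make sure the initial choice $k<\frac{1}{2}(2M)^\varepsilon$ is compatible with $M\gg k$. This holds since we only need $M \ge (2k)^{1/\varepsilon}$, which is finite for $\varepsilon>0$. The case $\varepsilon=0$ is degenerate: it places every vertex of degree at least $1$ in $\High$ and can be handled separately or excluded.
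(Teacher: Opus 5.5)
Your proof is correct and is essentially the paper's argument: fix a vertex $v\in\Low$ of nontrivial degree and delete every edge not incident to $v$. No VPU ever touches $v$, so it stays in $\Low$ with $\Degree{v}=m$. The paper takes an existing $v\in\Low$ with $\Degree{v}=\lfloor\frac{1}{2}\theta_i\rfloor-1$ instead of building your star-plus-matching instance, and it dismisses the edge cases $\varepsilon\in\{0,1\}$ separately as you do; your opening remarks about iterating over shrinking epochs are unnecessary but harmless.
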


\paragraph{\lateepstab{}.}

To spread rebalancing costs without sacrificing guarantees,
\lateepstab{} combines \softepstab{}'s selective updates with
gradual processing. Unlike \lazyepstab{}, it completes
reclassification before the next MR, preserving the degree bounds.
Recall that an epoch has length $\Omega(m)$, in particular it covers at least
$(\RebalanceFactor-1)m_i/\RebalanceFactor$ updates.
Checking both endpoints of
$\lceil\RebalanceFactor/(\RebalanceFactor-1)\rceil$ edges per update
therefore processes all $m_i$ initial edges within that interval.
This distribution of work also has a positive effect on the worst-case time for maintaining the \epsT{}~(proof in \autoref{sec:optimizing_epstab:proofs:late}):
\begin{restatable}{lemma}{late}
\label{lemma:late}
MR via individual VPUs over
the first $(\RebalanceFactor -1)\RebalanceFactor^{-1}m_i$ updates of the new epoch
(\lateepstab{}{} rebalancing)
(1) has the same total cost of $\bigO(m\cdot T_\text{aux})$ for the MR as the original version
and (2) a constant number of VPUs per update and only on vertices of degree in $\bigO(\deglow) = \bigO(m^\varepsilon)$.
\end{restatable}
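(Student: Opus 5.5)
For part (1), I will compare \lateepstab{} directly with \softepstab{} and show it performs no more VPUs. Over the first $(\RebalanceFactor-1)\RebalanceFactor^{-1}m_i$ updates of epoch $i+1$, \lateepstab{} scans the endpoints of all $m_i$ edges present at the MR, and reassigns every scanned vertex that is wrongly partitioned w.r.t.\ $\theta_{i+1}$. Each vertex is reclassified at most once by this sweep; a vertex whose degree changed meanwhile is handled by the ordinary VPU rule and charged to the standard epoch analysis.

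By \autoref{lemma:n_wrong_vertices}, the sweep performs $\bigO(m^{1-\varepsilon})$ VPUs. Each costs $T_\text{VPU}\in\bigO(\deglow\cdot T_\text{aux})=\bigO(m^{\varepsilon}T_\text{aux})$, giving $\bigO(m\cdot T_\text{aux})$ in total, exactly as in the proof of \autoref{lemma:soft}. Two facts make $m$ and $\deglow$ interchangeable with $m_i$ and $m_{i+1}$ up to constants: $\RebalanceFactor$ is constant, and during the sweep $m$ stays within a factor $\RebalanceFactor$ of $m_{i+1}$. The scanning overhead is $\bigO(\lceil\RebalanceFactor/(\RebalanceFactor-1)\rceil)=\bigO(1)$ per update, i.e.\ $\bigO(m)$ overall, which is dominated.

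For part (2), each update scans $\lceil\RebalanceFactor/(\RebalanceFactor-1)\rceil=\bigO(1)$ edges, hence $\bigO(1)$ vertices. So the sweep adds at most a constant number of VPUs per update, on top of the at most two regular VPUs caused by the updated edge's endpoints.

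For the degree bound, I will show that a sweep-reclassified vertex has degree $\bigO(m^\varepsilon)$. Consider a vertex moved from $\Low$ to $\High$ by the sweep. It belonged to $\Low$ under the old threshold or the regular VPU rule, so its degree is below $\frac32\theta_i$, up to changes caught by the regular rule. Consider a vertex moved from $\High$ to $\Low$. It is wrongly placed only if $\Degree{v}<\frac12\theta_{i+1}$, so its degree is $\bigO(\theta_{i+1})$. In both cases the degree is $\bigO(\theta_i+\theta_{i+1})=\bigO(m^\varepsilon)$, since $\theta_{i+1}=(\RebalanceFactor m_{i+1})^\varepsilon$ and $m_i,m_{i+1}$ are within constant factors of the current $m$.

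The main obstacle is the interaction between the sweep and concurrent edge updates. I must ensure that delayed reclassification does not let $\deglow$ exceed $\bigO(m^\varepsilon)$ in the meantime, since that is the failure mode of \lazyepstab{} in \autoref{lemma:lazy}. I would argue this as follows. A not-yet-processed vertex in $\Low$ still satisfies the old invariant $\Degree{v}<\frac32\theta_i$, unless an incident update occurred, in which case the regular VPU rule w.r.t.\ $\theta_{i+1}$ applies. Together with $\theta_i=\Theta(\theta_{i+1})$, this keeps $\deglow\in\bigO(m^\varepsilon)$ throughout the sweep. Hence $T_\text{VPU}$ stays within the bound used above, and the sweep finishes before the next MR, because an epoch contains at least $(\RebalanceFactor-1)m_i/\RebalanceFactor$ updates.
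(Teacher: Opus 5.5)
Your proposal is correct and is essentially the paper's proof. Part (1) reuses the \softepstab{} bound via Lemma~\ref{lemma:n_wrong_vertices} and $T_\text{VPU}\in\bigO(m^\varepsilon T_\text{aux})$; part (2) combines the $\bigO(1)$ sweep rate with $\theta_{i+1}\in\Theta(\theta_i)$, and your inductive argument that untouched $\Low$-vertices keep $\Degree{v}<\frac32\theta_i$ just spells out the paper's terse remark that all checks finish within the epoch (one minor notational slip: the edges present at the MR that opens epoch $i+1$ number $m_{i+1}$, not $m_i$, so the sweep length and epoch-length bound should be read with $m_{i+1}$).
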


\paragraph{Worst-Case Update Times.}
We finally analyze the \emph{worst-case} update times for each of the four variants.
See \autoref{sec:worst-case-times:proofs} for the derivations.

\begin{restatable}{lemma}{epst-worstcase}
\label{lemma:epst-worstcase}
The \emph{worst-case} update times for \hhh{} using different variants of the \epsT{}
are as shown in \autoref{tab:worstcase-times}.
In particular, \lateepstab{} has the same amortized update times as \origepstab{}, but improves the worst-case update times by factors in $[\sqrt{m}, m]$.
\end{restatable}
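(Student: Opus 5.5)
We propose to bound the worst-case time of a single graph update as the sum of three contributions and then instantiate them for each rebalancing variant: (i) the cost of updating pattern counts from auxiliary counts plus the auxiliary-structure updates for the changed edge itself, which is $\bigO(T_\text{aux})$ plus the (worst-case) neighborhood scanning of the original \hhh{} analysis; (ii) the cost of VPUs triggered by the update; and (iii) the cost of an MR if the update ends an epoch. Since \hhh{}'s per-edge work in \cite{DBLP:conf/sand/HanauerHH22} is already worst-case except for the partition maintenance, contribution (i) is identical for all four variants. The comparison therefore reduces to (ii) and (iii), and we express everything in terms of $T_\text{aux}$ and $T_\text{VPU}\in\bigO(\deglow\cdot T_\text{aux})$.

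For \origepstab{}, a single update triggers at most two VPUs, each in $\bigO(\deglow T_\text{aux})=\bigO(m^\varepsilon T_\text{aux})$. An MR, however, recomputes everything via $m$ simulated insertions, so the worst case is $\Theta(m\cdot T_\text{aux})$. For \softepstab{}, we invoke Lemma~\ref{lemma:n_wrong_vertices}: up to $\bigO(m^{1-\varepsilon})$ vertices are reclassified in one step. A reclassified vertex may have degree up to about $\frac{3}{2}\theta_{i+1}$ when moving $\Low\to\High$. A vertex moving $\High\to\Low$ has degree less than $\frac12\theta_{i+1}$. Hence each VPU costs $\bigO(m^\varepsilon T_\text{aux})$ and the worst case is again $\bigO(m\,T_\text{aux})$, consistent with Lemma~\ref{lemma:soft}. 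For \lazyepstab{}, Lemma~\ref{lemma:lazy} allows $\deglow\in\Theta(m)$. The per-update cost is then dominated by a VPU and by scans over low-degree neighborhoods. We plug $\deglow=\Theta(m)$ into the per-structure bounds of \hhh{}, which presumably gives $\Theta(m\cdot T_\text{aux})$ or worse depending on the pattern. For \lateepstab{}, Lemma~\ref{lemma:late}(2) gives a constant number of VPUs per update, each on vertices of degree $\bigO(m^\varepsilon)$. Checking $\lceil\RebalanceFactor/(\RebalanceFactor-1)\rceil$ edges costs $\bigO(1)$ beyond that, so the worst case is $\bigO(m^\varepsilon T_\text{aux})$ plus the base cost (i).

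Finally, we fill in the table per pattern by substituting the known $T_\text{aux}$ values and the pattern-optimal $\varepsilon$ from \hhh{}. For triangles and 3-paths we use $\varepsilon=1/2$ with $T_\text{aux}\in\bigO(\sqrt m)$. For the other patterns we use $\varepsilon=1/3$ with $T_\text{aux}\in\bigO(m^{2/3})$, with the appropriate $T_\text{aux}$ per auxiliary structure. The ratio between the \origepstab{} worst case $\bigO(m\,T_\text{aux})$ and the \lateepstab{} worst case $\bigO(m^{\varepsilon}T_\text{aux})$ is $m^{1-\varepsilon}$. This lies in $[\sqrt m, m]$ for $\varepsilon\in[0,1/2]$, which covers the chosen values and gives the claimed improvement factors. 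Equal amortized bounds follow from Lemma~\ref{lemma:late}(1).

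The main obstacle is step (i)/(ii) for \lateepstab{}. We must verify that processing the pending reclassifications never piggybacks on vertices whose degree grew beyond $\bigO(m^\varepsilon)$ before their turn comes. Lemma~\ref{lemma:late}(2) asserts this, but it must be checked together with the ordinary VPU thresholds of the new epoch. Relatedly, we must confirm for each pattern that $T_\text{VPU}$ dominates the base per-edge cost, so that the table entries are tight rather than merely upper bounds.
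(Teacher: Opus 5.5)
Your skeleton matches the paper's. The worst case is $\bigO(T_{\text{count}}+T_{\text{aux}}+T_{\text{VPU}})$, plus an $\bigO(m\cdot T_{\text{aux}})$ MR term for \origepstab{}/\softepstab{}. \lateepstab{} drops the MR term and performs $\bigO(1)$ VPUs per update, while \lazyepstab{} drops it too but allows $\deglow\in\bigO(m)$.

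The gap is in the instantiation. You bound every VPU by the generic $T_{\text{VPU}}\in\bigO(\deglow\cdot T_{\text{aux}})$, and that is too weak to give the table.
\begin{itemize}
\item For paws ($\varepsilon=1/3$) it yields $m^{1/3}\cdot m^{2/3}=m$ under \lateepstab{} and $m\cdot m^{2}=m^{3}$ under \lazyepstab{}. The table states $m^{2/3}$ and $m^{2}$.
\item Your improvement factor $m^{1-\varepsilon}\in\{m^{1/2},m^{2/3}\}$ therefore never reaches the factor $m$ exhibited by the paw row.
\end{itemize}
The paper instead sums \emph{per-structure} costs $\tau_{\text{aux}},\tau_{\text{VPU}}$ from the full \hhh{} analysis (\autoref{tab:P:hhh:aux:short}). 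This uses the fact that a VPU need not cost $\deglow$ simulated edge updates:
\begin{itemize}
\item For \Auxt{}, $\tau_{\text{aux}}=|\High|$ but $\tau_{\text{VPU}}=\deglow^2$, since only $\Auxt[w]$ changes and it can be recomputed from $w$'s own neighborhood.
\item For \AuxcLV{}, $\tau_{\text{VPU}}=\deglow^2$ rather than $\deglow^3$.
\end{itemize}
So a paw VPU costs only $m^{2/3}$. The same per-structure substitution with $\deglow\in\bigO(m)$ is what fills the \lazyepstab{} column, which you leave at ``presumably''.

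For the triangle row, your coarse bound gives $\bigO(m)$ under \lateepstab{}, not the table's $m^{1/2}$, and the refined accounting does not obviously rescue it. \autoref{tab:worstcase} lists \AuxvLV{} (with $\tau_{\text{VPU}}=\deglow$) as the triangle's structure. However, Figure~\ref{fig:hhh_structures_new} and the high-anchor section count triangles through \AuxuLv{}. Its VPU cost is $\deglow^2=m$ at $\varepsilon=1/2$, because every pair of high neighbors of a moving low joint must be touched. Check that row, and the \lazyepstab{} entry $m$ derived the same way, before trying to prove it. The lemma's factor range $[\sqrt m,m]$ survives either way: 3-paths give $\sqrt m$ and paws give $m$.

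Your remaining concern, that pending vertices under \lateepstab{} might outgrow $\bigO(m^\varepsilon)$, is settled as in Lemma~\ref{lemma:late}. Every incident update triggers the ordinary endpoint check against the new thresholds, and those differ from the old ones only by a constant factor.
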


\begin{table}[t!]
\centering\small
\caption{Worst-case update times (in $\bigO$) to count subgraphs with \hhh{} using different \epsT{}s and the theoretically optimal $\varepsilon^*$ (for the amortized time).
Recall that direct scanning has $\bigO(m)$ worst-case update time.
}
\label{tab:worstcase-times}
\resizebox{\linewidth}{!}{\setlength{\tabcolsep}{2pt}
\begin{tabular}{@{}rc|ccc|cc@{}}
\toprule
\multirow{2}{4em}{Subgraph}
&
\multirow{2}{.5em}{$\varepsilon$}
& \multicolumn{3}{c}{worst-case}
& \multicolumn{2}{c}{amortized} \\
 &  & \origepstab{}/\softepstab{} & \lazyepstab{} &\lateepstab{} & \origepstab{}/\softepstab{}/\lateepstab{} & \lazyepstab{}\\
\midrule
 triangle \tri
 & $\nicefrac{1}{2}$
 & $m^{3/2}$ & $m$ & $m^{1/2}$ 
 & $m^{1/2}$ & $m$ \\
 3-paths \tPath
 & $\nicefrac12$
 & $m^{3/2}$  & $m^2$ & $m$ 
 & $m^{1/2}$ & $m$\\
 paw \paw
 & $\nicefrac13$
  & $m^{5/3}$   & $m^2$ & $m^{2/3}$
  & $m^{2/3}$ & $m^2$\\
 4-cycle \fCycle
 & $\nicefrac13$
 & $m^{5/3}$  & $m^3$ & $m$
 & $m^{2/3}$ & $m^2$\\
diamond \dia
 & $\nicefrac13$
 & $m^{5/3}$  & $m^3$  & $m$
 & $m^{2/3}$ & $m^2$\\
  \bottomrule
\end{tabular}}
\end{table}

\subsection{\egst{} Algorithm and \hInd}\label{sect:egst}
We next introduce the \hInd{} as well as the \egst{}~\cite{DBLP:journals/tcs/EppsteinGST12} algorithm
along with some clarifications that do not appear in the original work.
The section ends again with practical optimization strategies.

\subsubsection{\hInd{} Partition}  
The \hInd{} vertex partition is based on the Hirsch index (\hInd{}) concept from bibliometrics~\cite{hirsch_index_2005} and transferred to graphs.
The \hInd{} of a graph is the largest integer $h$ such that there are $h$ vertices with degree $\ge h$.  On an empty graph, $\Low=V$. A vertex $v$ swaps from $\Low\rightarrow\High$ once it reaches $\Degree{v}\ge2h$ and from $\High\rightarrow\Low$ if $\Degree{v} < h$.
There are no major rebalances.
With the \hInd{} partition, $|\High|\in\bigO(h)$, and $\deglow\in \bigO(h)$, with $\changesperupdate\in \bigO(1/h)$ amortized changes to the partition per update~\cite{DBLP:journals/jgaa/EppsteinS12}.

\subsubsection{\egst{} algorithm}

The \egst{} algorithm uses the vertex partition ($\High$, $\Low$) given by the \hInd. 
Similar to \hhh, \egst~also maintains per-vertex auxiliary counts that record in how many auxiliary structures a vertex participates as an anchor.
Many structures tracked by \egst~(see \autoref{fig:hhh_structures_new}) are similar to those of \hhh, with one key difference being that \egst~requires all anchor vertices to be high-degree, except for \Sc{3}.

As in \hhh{}, after every update all auxiliary and pattern counts are adjusted, however with other dependencies (see \autoref{fig:hhh_structures_new}).
In contrast to the \epsT{}, where $|\High|\in\bigO(m^{1-\varepsilon})$ and $\deglow\in\bigO(m^\varepsilon)$,
both values are bounded by $\bigO(h)$ here.
This results in an amortized update time of $\bigO(h^2)$~\cite{DBLP:journals/tcs/EppsteinGST12}.

\subsubsection{Corrections and Completions}
Certain details regarding count updates were either omitted or stated ambiguously in \cite{DBLP:journals/tcs/EppsteinGST12}, which we address in the following.
Let $e=\{u,v\}$ be an updated edge.

(1) Update rules for \Sc{2} \DrawSTwo{}/\Sc{3} \DrawSThreeAny{}:
No explicit update rules are given when $u\in\Low$, $v\in\High$ (or swapped), we add all missing cases. \Sc{3} was defined with arbitrary anchors for $u,v\in\Low$ updates, and we applied this to the missing cases as well. \Sc{2} is used with mixed-degree anchors in \Sc{5}/\Sc{6} updates but exclusively with high-degree ones in subgraph counting. Since high-degree anchors suffice for counting, we only maintain those, reducing memory.

(2) Update rules for \Sc{5} \DrawSFive{}/\Sc{6} \DrawSSix{}: 
The update rules are ambiguous when $u\in\Low$, $v\in\High$ (or swapped). The paper states that \Sc{2} \DrawSTwo{} is used in \Sc{5}/\Sc{6} edge updates. However, for \Sc{6} this is not possible, as the low-degree vertices must be adjacent, which cannot be tested using the \Sc{2} count. For \Sc{5}, this would require maintaining \Sc{2} with mixed-degree anchors which we decided not to do (see above). To preserve asymptotic bounds, we propose: When $u,v\in\Low$, increment \Sc{5}$[h_1,h_2]$ for each $h_1\in\NeighH{u}$, $h_2\in\NeighH{v}$, $h_1\ne h_2$, where $\{h_1,v\}\in E$ or $\{h_2,u\}\in E$. If $v\in\High$, increment \Sc{5}$[v,h]$ for each $h\in\NeighNH{v}{u}$, $\ell\in\NeighL{u}$, where $\{\ell,h\}\in E$ or $\{\ell,v\}\in E$. \Sc{6} updates are identical but require both edges present.

(3) Update rules for 3-paths \tPath{} :
\cite{DBLP:journals/tcs/EppsteinGST12} omits several cases for 3-path updates depending on the position of $e$ in the path. When $e$ is the \emph{central} edge, the count is adjusted by $(\Degree{u}-1)\cdot(\Degree{v}-1)$, which overcounts by including triangles containing $e$: a vertex $x$ adjacent to both $u$ and $v$ does not form a valid 3-path. We correct this by finding all such $x$ using \Sc{2} for $x\in\Low$ and adjacency tests for $x\in\High$ when $u,v\in\High$, or iterating over the low-degree endpoint's neighbors otherwise. When $e$ is \emph{non-central}, say the 3-path is $v\text{-}u\text{-}x\text{-}y$ (symmetrical for swapped roles of $u$ and $v$), 
\cite{DBLP:journals/tcs/EppsteinGST12} covers $u\in\Low$ with $x\in\High\cup\Low$, and $u,x\in\High$, but omits $u\in\High$ with $x\in\Low$.
We fill this gap in Lemma~\ref{lemma:egstthreepath}
(proof in \autoref{sec:egst:3path_proof}),
which counts such paths using \Sc{0} and \Sc{2} with triangle corrections.

\subsubsection{Optimizations}\label{egst:engineering}

This subsection presents variants designed with practical improvements to enhance the efficiency of \egst{} and the \hInd{}.
We use \poptref{prio-low} and \poptref{arc-border} as described in \autoref{sect:hhh-opt}.

\popt[high-anchors-egst]{O-H$_{\egst}$}{High-Degree Anchors Only}
For the same reasons as in \poptref{high-anchors-hhh}, having high-degree anchors only
is desirable also for \egst{}.
Here, this only requires a modification of $\Sc{3}$ (all other auxiliary structures have high anchors already):
For edge insertions/deletions with $e=\{u,v\}$, if $u,v\in\Low$ we update $\Sc{3}[h_1,h_2]$ by one for all distinct pairs $\{h_1,h_2\}$ with $h_1\in \NeighH{u}$ and $h_2\in \NeighH{v}$ in $\bigO(h^2)$ time. If, w.l.o.g., $v\in\High$  and  $u\in\Low$, we update $\Sc{3}[y,v]$ by one for all $y \in \NeighH{x}$ with $x\in \NeighL{u}$ in $\bigO(h^2)$ time.
In consequence, we update 4-cycle counts by iterating over low-degree neighbors instead of using \Sc{3}.
The new procedure takes time $\bigO(h^2)$ and does not affect the asymptotic running time.

\popt[direct-swap]{O-D}{Direct Partition Swap}
When a vertex changes partition, the auxiliary counts need to be updated accordingly. 
This can be done by examining all edges adjacent to the updated vertex one by one and following the procedure for edge updates, described in~\cite{DBLP:journals/tcs/EppsteinGST12} and \poptref{high-anchors-egst}, twice:
Once to subtract counts of now-invalid structures and once to add newly formed structures.
However, the partition swap can also be performed directly by 
scanning the 2-hop neighborhood in the same running time bounds~(see \autoref{sec:egst:direct:details}).

\popt[gradual]{O-GF}{Tuning the Gradual Factor of the \hInd{} Partition}
In \cite{DBLP:journals/jgaa/EppsteinS12}, a vertex $v$ is placed in $\High$ only when $\deg(v)\geq 2h$, which in practice often yields very few high-degree vertices. We introduce a gradual factor $c$, placing $v$ in $\High$ when $\deg(v)\geq c\cdot h$.
We prove that the running time bounds remain the same in \autoref{sec:h_index:c:proof}.

\subsection{Separating Algorithms and Partitioning}

Both \hhh{} and \egst{} operate on any vertex partition. We therefore decouple the algorithms from their specific partitions, derive update times for an abstract partition \Partition{}, and analyze cross-combinations.
\begin{table}[t]
\renewcommand{\arraystretch}{1.1}
\addtolength{\tabcolsep}{-2pt}
	\centering
    \footnotesize
	\caption{
     Amortized update times (in $\bigO$) for subgraph counts using \hhh{} or
     \egst{}
     combined with
     an abstract partition $\Partition$,
     the \epsT{}, or the \hInd{}.
     Shaded terms only apply to the high-anchors-only optimization.
    }\label{tab:generalP}
\resizebox{\linewidth}{!}{
	\begin{tabular}{@{}lcr@{$\ge$}lc@{}}
    \toprule
    Name          & \Partition & \multicolumn{2}{c}{\epsT} & \hInd  \\
    \midrule
     \multicolumn{4}{l}{\hhh{} \bgcolortext{gray}{with \poptref{high-anchors-hhh}}} \\
     \midrule
     triangle & $|\High| +$ \bgcolortext{gray}{$\deglow$} $+ \deglow^2\cdot\changesperupdate$ &$m^{\max(1-\varepsilon,\varepsilon)}$  & $\sqrt{m}$ & $h$   \\
     3-path & $|\High|+$ \bgcolortext{gray}{$\deglow$} $+ \deglow^2\cdot\changesperupdate$ &$m^{\max(1-\varepsilon,\varepsilon)}$ & $\sqrt{m}$  & $h$   \\
    paw &$|\High| + \deglow^2 +\deglow^2\cdot\changesperupdate$ &$m^{\max(1-\varepsilon,2\varepsilon)}$ & $m^{2/3}$   & $h^2$  \\
    4-cycle & $|\High| +\deglow^2 + (\deglow^3 + \deglow |\High |)\changesperupdate$  &$m^{\max(1-\varepsilon,2\varepsilon)}$ & $m^{2/3}$  & $h^2$  \\
    diamond& $|\High| + \deglow^2 + 
    (\deglow^3 + \deglow |\High |)\changesperupdate $ &$m^{\max(1-\varepsilon,2\varepsilon)}$ & $m^{2/3}$  & $h^2$   \\
    4-clique & $m$  & \multicolumn{2}{c}{$m$} & $m$   \\
    \midrule
     \multicolumn{4}{l}{\egst{} \bgcolortext{gray}{with \poptref{high-anchors-egst}}} \\
    \midrule
    triangle & $|\High| + \deglow + \deglow^2\cdot\changesperupdate$&$m^{\max(1-\varepsilon,\varepsilon)}$ & $\sqrt{m}$ & $h$ \\
    3-path & $|\High| + \deglow  + \deglow^2\cdot\changesperupdate$ &$m^{\max(1-\varepsilon,\varepsilon)}$ & $\sqrt{m}$ & $h$ \\
    paw & $|\High|^2 + \deglow^2 + \deglow^3\cdot\changesperupdate $ &$m^{\max(2-2\varepsilon,2\varepsilon)}$ & $m$ & $h^2$ \\
    4-cycle & $|\High| \cdot \deglow +|\High|^2 +$ \bgcolortext{gray}{$\deglow^2$} $+  \deglow^3\cdot\changesperupdate $ &$m^{\max(1-2\varepsilon,2-2\varepsilon,2\varepsilon)}$& $m$  & $h^2$\\
    diamond & $|\High|^2 + \deglow^2  +\deglow^3\cdot\changesperupdate$ &$m^{\max(2-2\varepsilon,2\varepsilon)}$&  $m$  & $h^2$\\
    4-clique & $|\High|^2 + \deglow^2 + \deglow^3\cdot\changesperupdate $&$m^{\max(2-2\varepsilon,2\varepsilon)}$& $m$ & $h^2$\\
    \bottomrule
	\end{tabular}
    }
\end{table}

Table~\ref{tab:generalP} states the amortized update times for both algorithms under \Partition{}.
For derivations, see \autoref{sec:abstract_p:egst:deviation} for \egst{}, \cite{hhh_full_version} for \hhh{}, and \autoref{sec:hhh:high_anchors} for \poptref{high-anchors-hhh}. 
The table also shows the cross-combination bounds, which (in all cases) preserve or worsen the amortized update time, especially for \egst{} with \epsT{}. Notably, \hhh{} with \hInd{} matches \egst{} with \hInd{} for all patterns except the 4-clique: \hhh{} uses a naive $\bigO(m)$ routine, while \egst{} achieves $\bigO(h^2)$ via \Sc{6} and \Sc{7}, exploiting $|\High|, \deglow \in \bigO(h)$ under \hInd{}. With \epsT{}, where $\bigO(|\High|) \ne \bigO(\deglow)$, \egst{}'s 4-clique time becomes $\bigO(m^{\max(2-2\varepsilon, 2\varepsilon)}) = \bigO(m)$ at $\varepsilon = 0.5$, matching the trivial bound without partitioning.

Our analysis also reveals that \hhh{} and \egst{} have identical complexity for triangles and 3-paths. They differ for paw, 4-cycle, and diamond: \egst{}'s update time scales quadratically with $|\High|$ for these patterns, as its auxiliary structures only involve low-degree vertices besides the anchors.
Patterns where both non-anchor vertices are in $\High$ require enumerating all pairs in $\High$.

\section{Experimental Evaluation}
We experimentally evaluate the algorithms, their proposed optimizations, and cross-combinations against several baselines on a diverse set of graphs.

\subsection{Methodology}
We implemented our algorithms~\cite{code:dynSC}
in C++20 (\texttt{g++ -O3}). For anchor-to-count mappings we used \texttt{boost::unordered\_flat\_map}~\cite{code:boost-unordered}, which outperformed alternatives in preliminary tests.
We always apply \poptref{prio-low} and \poptref{arc-border}, as they consistently improve performance and denote these base versions \hhho{} and \egsto{}. Further variants layer additional optimizations: \hhhoh{} adds high-degree anchors \poptref{high-anchors-hhh}, \hhhoe{} adds auxiliary counting structures \poptref{hhh-extra-aux}, and \hhhohe{} combines both. Analogously, \hegst{} uses high-degree anchors \poptref{high-anchors-egst}, \degst{} uses direct partition swaps \poptref{direct-swap}, and 
\hdegst{} combines both.
We evaluate \epsT{} variants \origepstab, \softepstab{}, \lazyepstab{}, and \lateepstab{},
where the $\varepsilon$ value is indicated as a suffix, e.g., \softepstab{.5} for $\varepsilon=0.5$.
The \hInd{} partition with gradual factor $c$ is written \hindexgr{c}. We denote algorithm \texttt{A} with partition \texttt{P} as \texttt{A|P}, e.g., \texttt{\egsto|\hindexgr{2}} or \texttt{\hhhoe{}|\softepstab{.5}}.

We compare against all available state-of-the-art algorithms capable of counting both triangles and four-vertex subgraphs.
The only fully-dynamic subgraph counter we are aware of is \baselineDNA{}~\cite{DBLP:conf/alcob/SchillerJHS15}, which we extended to provide a runner and result extraction~\cite{code:dna}.
We evaluate against the continuous subgraph matching (CSM) algorithms \baselineRapidFlow{}~\cite{DBLP:journals/pvldb/SunSHL22,code:csm}, \baselineGraphflow{}~\cite{DBLP:conf/sigmod/KankanamgeSMCS17}, \baselineTurboFlux{}~\cite{kim2018turboflux}, \baselineIEDyn{}~\cite{DBLP:conf/sigmod/IdrisUV17,DBLP:journals/vldb/IdrisUVVL20}, and \baselineSymBi{}~\cite{DBLP:journals/pvldb/MinPPGIH21}. The latter four are implemented and benchmarked by Sun et al.~\cite{DBLP:journals/pvldb/SunSLH22,code:csm}. Following their recommendations, we use \baselineGraphflow{} and \baselineSymBi{} for all patterns, and additionally include \baselineTurboFlux{} and \baselineIEDyn{} for tree-structured (3-path) queries.
We also compare against recomputing counts from scratch after each update using two static algorithms: \texttt{OB}~\cite{DBLP:journals/ans/OrtmannB17,code:oaqc} and \texttt{ESCAPE}~\cite{DBLP:conf/www/PinarSV17,code:escape}. For the latter we applied a memory leak patch~\cite{code:dynSC}.

Experiments ran on Ubuntu 24.04.1 LTS with a dual-socket AMD EPYC~9534 (128 cores, \SI{3.72}{\giga\hertz} max) and \SI{1.48}{\tera\byte} RAM, each experiment pinned to a single core without overlapping caches. We report medians over three runs and use geometric means across instances. A \emph{speedup} of $q\times$ for $A$ over $B$ means $q = T(B)/T(A)$ where $T(\cdot)$ can refer to \emph{total runtime} (sum of update times), \emph{maximum update time}, and \emph{memory} usage. Unless stated otherwise, all seven patterns are maintained simultaneously.
We validated our implementations by verifying the counts manually on small instances and by cross-checking them with the counts reported by the static baselines \texttt{OB} and \texttt{ESCAPE} after each update. 

\begin{table}[t!]
\caption{Full Data Set (\FullSet{}) with \#updates ($\delta$), and max.\ \#edges ($\widehat{m}$). Shaded cells indicate membership to the engineering dataset \EngSet{}. 
Full statistics in \autoref{tab:graphs}.
}
\label{tab:graphs_overview}
\centering
\scriptsize
\setlength{\tabcolsep}{2pt}
\renewcommand{\arraystretch}{0.85}
\newcommand{\gba}{\cellcolor{white!70!gray}}
\resizebox{\linewidth}{!}{
\begin{tabular}{cl rr|cl rr}
\toprule
\multicolumn{4}{c|}{\DSWiki{}~\cite{konect,konect_wiki_second_ref}} & \multicolumn{4}{c}{\DSTMP{}} \\
ID & Graph & $\delta$ & $\widehat{m}$ & ID & Graph & $\delta$ & $\widehat{m}$ \\
\midrule
\gba A &\gba simple\_wiki &\gba 1.6M &\gba 690.7K & A & reality-call~\cite{DBLP:conf/aaai/RossiA15,eagle2006reality} & 7.2K & 3.6K \\
\gba B &\gba nl\_wiki &\gba 18.9M &\gba 9.6M & B & wiki-elec~\cite{DBLP:conf/aaai/RossiA15,leskovec2010signed} & 201.5K & 100.8K \\
\gba C &\gba pl\_wiki &\gba 23.6M &\gba 11.4M & C & tech-as-topology~\cite{DBLP:conf/aaai/RossiA15} & 215.4K & 107.7K \\
\gba D &\gba it\_wiki &\gba 33.1M &\gba 15.7M & D & reddit-body~\cite{snapnets,kumar2018community} & 433.5K & 4K \\
\gba E &\gba fr\_wiki &\gba 56.3M &\gba 22.3M & E & mathoverflow~\cite{snapnets,ParanjapeBL17,stackexchange_data_dump} & 458K & 2.5K \\
\gba F &\gba de\_wiki &\gba 82M &\gba 28.1M &\gba F &\gba enron~\cite{DBLP:conf/aaai/RossiA15,cohen2005enron} &\gba 594.9K &\gba 297.5K \\
  &  &  &  & G & reddit-title~\cite{snapnets,kumar2018community} & 824.7K & 7.8K \\
  &  &  &  &\gba H &\gba askubuntu~\cite{snapnets,ParanjapeBL17,stackexchange_data_dump} &\gba 952.4K &\gba 6.1K \\
  &  &  &  &\gba I &\gba superuser~\cite{snapnets,ParanjapeBL17,stackexchange_data_dump} &\gba 1.5M &\gba 7.3K \\
  &  &  &  &\gba J &\gba wiki-talk~\cite{snapnets,ParanjapeBL17,DBLP:conf/icwsm/LeskovecHK10} &\gba 6.3M &\gba 102K \\
  &  &  &  &\gba K &\gba youtube-growth~\cite{DBLP:conf/aaai/RossiA15,mislove-2007-socialnetworks} &\gba 18.8M &\gba 9.4M \\
  &  &  &  &\gba L &\gba stackoverflow~\cite{snapnets,ParanjapeBL17,stackexchange_data_dump} &\gba 59.4M &\gba 277.9K \\
\midrule
\multicolumn{4}{c|}{\DScKron{}~\cite{DBLP:conf/alenex/HanauerH020,dyreach}} & \multicolumn{4}{c}{\DSgKron{}~\cite{DBLP:conf/alenex/HanauerH020,dyreach}} \\
ID & Graph & $\delta$ & $\widehat{m}$ & ID & Graph & $\delta$ & $\widehat{m}$ \\
\midrule
\gba A &\gba c\_answers &\gba 2.6M &\gba 185.5K & A & g\_answers & 538.3K & 184.9K \\
\gba B &\gba c\_delicious &\gba 3.2M &\gba 225.9K & B & g\_delicious & 646K & 225.3K \\
 C & c\_epinions & 5.9M & 413.7K &\gba C &\gba g\_epinions &\gba 1.1M &\gba 413K \\
 D & c\_flickr & 4.3M & 305.5K &\gba D &\gba g\_flickr &\gba 853.4K &\gba 304.7K \\
 E & c\_blog-nat05-6m & 18.7M & 1.3M & E & g\_blog-nat05-6m & 3.3M & 1.3M \\
 F & c\_blog-nat06all & 22.3M & 1.6M &\gba F &\gba g\_blog-nat06all &\gba 3.9M &\gba 1.6M \\
 G & c\_cit-hep-ph & 22.2M & 1.6M &\gba G &\gba g\_cit-hep-ph &\gba 3.9M &\gba 1.6M \\
 H & c\_cit-hep-th & 29.8M & 1.6M & H & g\_cit-hep-th & 3.9M & 1.6M \\
\gba I &\gba c\_ca-dblp &\gba 8.5M &\gba 597.3K & I & g\_ca-dblp & 1.6M & 596K \\
 J & c\_ca-gr-qc & 7.9M & 559.2K & J & g\_ca-gr-qc & 1.5M & 558.4K \\
 K & c\_ca-hep-th & 5.3M & 374.1K & K & g\_ca-hep-ph & 5.3M & 2.1M \\
 L & c\_ca-hep-ph & 30.4M & 2.1M & L & g\_ca-hep-th & 1M & 373.2K \\
\gba M &\gba c\_web-notredame &\gba 4.1M &\gba 288.7K & M & g\_web-notredame & 811.6K & 288.1K \\
 N & c\_as-newman & 6.9M & 487.6K &\gba N &\gba g\_as-newman &\gba 1.3M &\gba 486K \\
\gba O &\gba c\_as-routeviews &\gba 7.9M &\gba 556.9K & O & g\_as-routeviews & 1.5M & 555.3K \\
 P & c\_gnutella-30 & 2.9M & 203.4K & P & g\_gnutella-25 & 722.1K & 253.2K \\
 Q & c\_gnutella-25 & 3.6M & 254.1K & Q & g\_gnutella-30 & 587.7K & 202.7K \\
\gba R &\gba c\_atp-gr-qc &\gba 1.3M &\gba 90.9K & R & g\_atp-gr-qc & 277.3K & 90.5K \\
 S & c\_bio-proteins & 9.5M & 667.5K &\gba S &\gba g\_bio-proteins &\gba 1.8M &\gba 666.3K \\
\bottomrule
\end{tabular}
}
\vspace{-0.5cm}
\end{table}
We use real-world and synthetic datasets from various domains (Table~\ref{tab:graphs_overview}), partitioned into four groups:
\DSWiki{}~\cite{konect}: Wikipedia hyperlink networks with real insertions and deletions.
\DSTMP{}~\cite{DBLP:journals/pvldb/ChenLHB22,DBLP:conf/icde/MinJPGIH24,DBLP:journals/vldb/AghasadeghiBS24,DBLP:journals/kbs/LiZLYW22}: timestamped real-world graphs~\cite{snapnets,DBLP:conf/aaai/RossiA15} where edges are inserted at their timestamp and removed two weeks later~\cite{DBLP:journals/pvldb/ChenLHB22}.
\DScKron{} and \DSgKron{}~\cite{DBLP:conf/alenex/HanauerH020}: synthetic dynamic Kronecker graphs~\cite{LeskovecCKFG10} derived from snapshot differences, with constant or growing vertex counts respectively.
All instances were preprocessed by removing multi-edges, self-loops, directions, and deletions of non-existing edges.
Beyond the full dataset (\FullSet{}), we use a balanced engineering subset (\EngSet{}) of the first two million updates from six graphs per group (highlighted in \autoref{tab:graphs_overview}; details in \autoref{tab:engeneering_set}).
We will finally use \FullSet{} to validate our parameter choices made on \EngSet{}.

\subsection{\hhh{} and \epsT{} Optimizations}

We first evaluate \hhho{} with optimizations on the original \epsT{}, followed by a parameter study for $\varepsilon$ and $\RebalanceFactor$.
Finally, we consider the best variant of \hhho{} in combination with different partitions, \ie{} \epsT{} variants \softepstab{}, \lazyepstab{} and \lateepstab{} as well as the \hInd{}.

\subsubsection{Proposed Optimizations}
We start by evaluating \poptref{high-anchors-hhh}, where all anchors have high degree, on the \EngSet{} graphs.
For this experiment, we set $\varepsilon = \num{.417}$ as a compromise between the optimum of $\frac{1}{2}$ for triangles and 3-paths and $\frac{1}{3}$ for all other patterns (except for 4-clique/claw, which don't use the partition), \ie,
we compare \texttt{\hhhoh{}|\origepstab{.417}} to \texttt{\hhho{}|\origepstab{.417}}.
As \hhhoh{} restricts what vertices can serve as anchors to those with high degree, fewer counts need to be stored at the expense of more iterations over neighborhood sets.
Despite this, \hhhoh{} achieves a $21.9\times$ speedup in total time, a 
$116.1\times$
speedup in the maximum update time, and reduces memory usage by a factor of
$49.0\times$
over \hhho{}.
These results showcase the impressive efficiency of \poptref{high-anchors-hhh} and demonstrate that the typical memory-runtime tradeoff, \ie, storing more to save computation time, can reach the point where finding, storing and accessing the auxiliary counts outweigh the benefits of reduced traversals in the subgraph counting step.

Another proposed optimization, \poptref{hhh-extra-aux}, uses
additional auxiliary structures where possible; however, preliminary experiments
showed this is only slightly beneficial if all patterns are counted.

\emph{We use \hhhohe{} in experiments where all patterns are counted and \hhhoh{} otherwise.}

\subsubsection{Analysis of $\varepsilon$ for the \epsT{}}
\label{exp:eps:study}
\begin{figure}[t]
    \setlength{\belowcaptionskip}{-15pt}
    \setlength{\abovecaptionskip}{-0pt}
    \includegraphics[width=\linewidth]{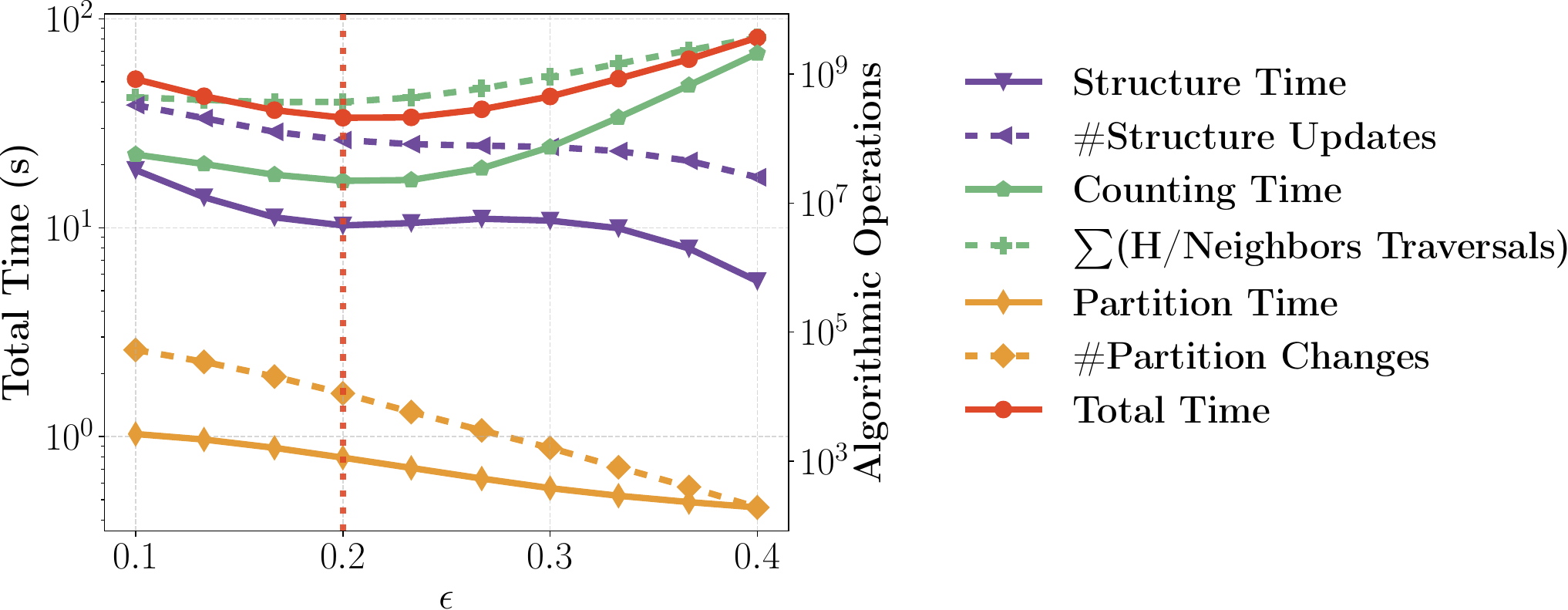}
    \caption{Effects of the $\varepsilon$ parameter on the runtime of \texttt{\hhhohe{}|\origepstab{}}\label{exp:lazy_eps}
    using dataset \EngSet{}
    (counting all patterns).
     Note the logarithmic y-axis.
    }
\end{figure}
In Figure~\ref{exp:lazy_eps} we decompose runtime into partition maintenance, auxiliary-structure maintenance, and pattern counting, which directly correlate with the number of partition changes, structure updates and traversals of the neighborhood or the $H$ set, respectively.
We use \origepstab{} on \EngSet{}; the effects are identical for other variants.
Increasing $\varepsilon$ reduces the high-degree set and generally lowers partition and structure time, but increases counting time, which dominates overall runtime. We observe that counting all patterns simultaneously is fastest at $\varepsilon=0.2$, below the theoretical optima of 1/3 or 1/2. Two effects contribute to this difference. First, neighborhood partitioning replaces full high-degree set scans with high-degree-neighbor scans, reducing the penalty for a larger $\High$. Second, simultaneous counting shares auxiliary structures across patterns, reducing their relative maintenance overhead. Thus, the optimal $\varepsilon$ for counting all patterns simultaneously cannot be derived by simply averaging the individual optima per pattern.
\emph{In practice, $\varepsilon$ should be chosen smaller than the theoretical optimum.}

\subsubsection{Best Partitions for \hhh{}}
\label{sec:exp:hhh:eps_vari}
\begin{figure}[t]

\setlength{\belowcaptionskip}{-15pt}
\setlength{\abovecaptionskip}{-0pt}

\includegraphics[width=\linewidth]{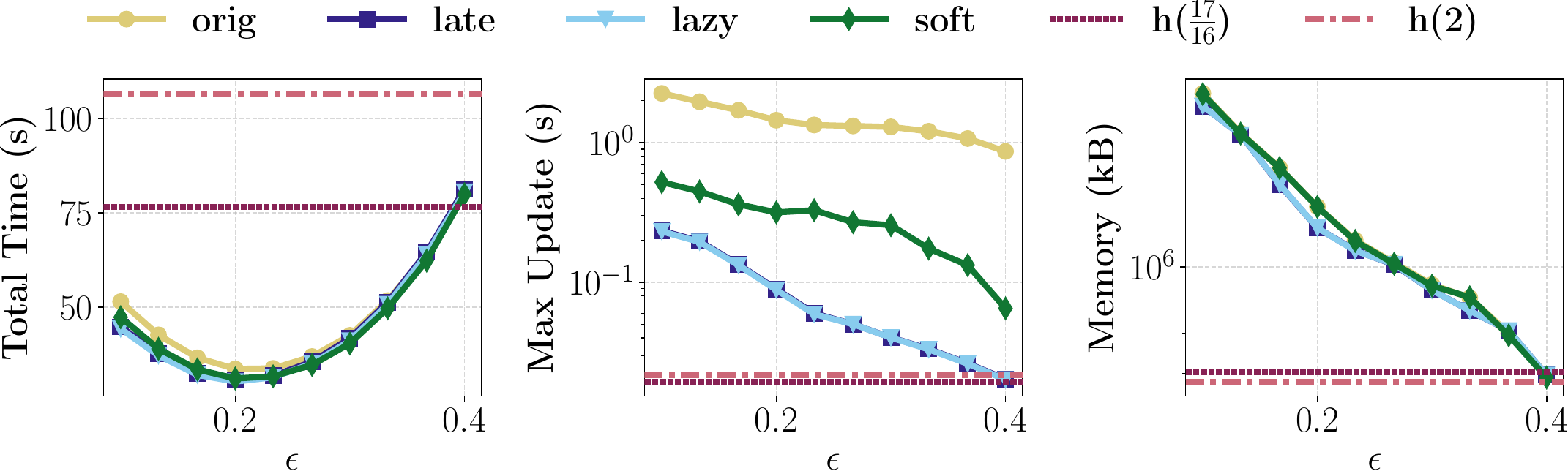}
\caption{
Performance
of \hhhohe{}
for different vertex partitions
w.r.t. $\varepsilon$ 
on \EngSet{} data
(counting all patterns).
} 
\label{exp:eps_variants}
\end{figure}
We compare all partitions studied in this work using \hhhohe{} on \EngSet{}. For the \hInd{}, we vary the gradual factor $c\in\{1+\frac{1}{2^x}\mid x=0,\dots,10\}$; for the \epsT{}, we compare \origepstab{}, \softepstab{}, \lateepstab{}, and \lazyepstab{} with $\varepsilon\in\{0.1+\frac{i}{30}\mid i=0,\dots,9\}$ rounded to three digits. Figure~\ref{exp:eps_variants} reports overall running time, maximum update time, and memory usage. To increase readability, we show only \hindexgr{$2$} (the original choice) and \hindexgr{$\frac{17}{16}$} (the best in our experiments).

\emph{Total runtime}. All \epsT{} variants perform best at $\varepsilon=0.2$.
Among variants preserving amortized guarantees,
\lateepstab{.2} is fastest, $1.10\times$ faster than \origepstab{.2} and $1.01\times$ than \softepstab{.2}. \lazyepstab{} is another \SI{1.2}{\percent} faster but lacks these guarantees. For \hInd{} partitions, \hindexgrBest{} achieves a $1.39\times$ speedup over \hindexgr{$2$}, though \lateepstab{.2} remains
$2.5\times$
faster.

\paragraph{Maximum update time.}
Maximum update time generally decreases with $\varepsilon$,
as the high-degree set shrinks and partition changes become fewer.
By replacing reconstruction with individual VPUs, \softepstab{}
improves on \origepstab{} by factors of $4.3$--$13.3$.
\lateepstab{} further limits VPUs per update, yielding improvements
of $9.6$--$42.6$, including $16.2\times$ at $\varepsilon=0.2$.
\lazyepstab{}, with at most two VPUs per update, is only
\SI{1.3}{\percent} faster than \lateepstab{}.
With a smaller high-degree set and fewer partition changes,
\hindexgrBest{} achieves $4.6\times$ faster maximum updates
than \lateepstab{.2}, at the expense of total runtime.

\paragraph{Memory.}
Fewer high-degree vertices mean fewer potential anchors and
generally smaller auxiliary maps.
\origepstab{} and \softepstab{} use similar memory because
their partitions are identical after each update.
At $\varepsilon=0.2$, \lateepstab{} and \lazyepstab{} use
\SI{93.1}{\percent} of \origepstab{}'s memory on average,
though differences vary across instances.
With an even smaller high-degree set, \hindexgrBest{} uses
only \SI{61.7}{\percent} of \lateepstab{.2}'s memory.

We studied the \emph{rebalance factor} $\RebalanceFactor$ (see \autoref{sec:exp:hhh:recomp_factor}) and found the optimal choice to be highly graph-dependent. We fix $\RebalanceFactor=2$ as a good compromise.

Finally, we also briefly evaluate \texttt{\hhhoh{}|\lateepstab{}} for single patterns instead of all simultaneously~(cf.~ \autoref{sec:single_patterns}).
Also here, $\varepsilon$ should be chosen smaller than its theoretically optimal value:
For diamonds, paws and 4-cycles, $\varepsilon=0.233$ yields the best results in practice with speedups of
$1.2\times$, $1.6\times$, $1.4\times$,
respectively, over the theoretically best $\varepsilon=\frac{1}{3}$.
The deviation in running time from $\varepsilon = 0.2$, the best choice for ``all'', is within \SI{10}{\percent} in the geometric mean. 
For counting all patterns simultaneously, a slightly smaller $\varepsilon$ is faster in practice since auxiliary structures can be used for multiple patterns without increasing structure time (see Section~\ref{exp:eps:study}).
For 3-paths, $\varepsilon=0.333$ is
$1.5\times$
faster than the theoretical optimum of $\varepsilon=\frac{1}{2}$, and almost twice as fast as $\varepsilon = 0.2$.
Clever neighborhood scanning without any auxiliary structures works best for triangles and 4-cliques and obliterates the need for a partition, yielding speedups of
$1.9\times$ and $1.5\times$
over \lateepstab{.2} respectively.

\emph{\lateepstab{.2} and \hindexgrBest{} are the fastest \epsT{} and \hInd{} variants, respectively, offering different tradeoffs: \lateepstab{} is
$2.5\times$
faster in total runtime, while \hindexgrBest{} has
$4.6\times$
faster maximum updates and uses $38.4\%$ less memory. \lazyepstab{.2} (without preserving theoretical bounds) is ${\approx}1\%$ faster than \lateepstab{.2}. $\RebalanceFactor=2$ is a good default for all \epsT{}s.}

\subsection{Optimizing \egst{} Performance}
We compare the four combinations of proposed \egst{} optimizations: high-degree anchors (\hegst{}), direct partition swaps (\degst{}), and their combination (\hdegst{}) on \EngSet{} using \hindexgr{2}. \hdegst{} achieves a
$4.3\times$
faster total runtime and
$132.9\times$
faster maximum update time over base \egst{}, with high-degree anchors contributing most of the gain; full results are in~\autoref{sec:egst:opt_versions:exp}.

\subsubsection{Different Partitions for \egst{}}
\begin{figure}[t]

\setlength{\belowcaptionskip}{-17pt}
\setlength{\abovecaptionskip}{-0pt}

\includegraphics[width=\linewidth]{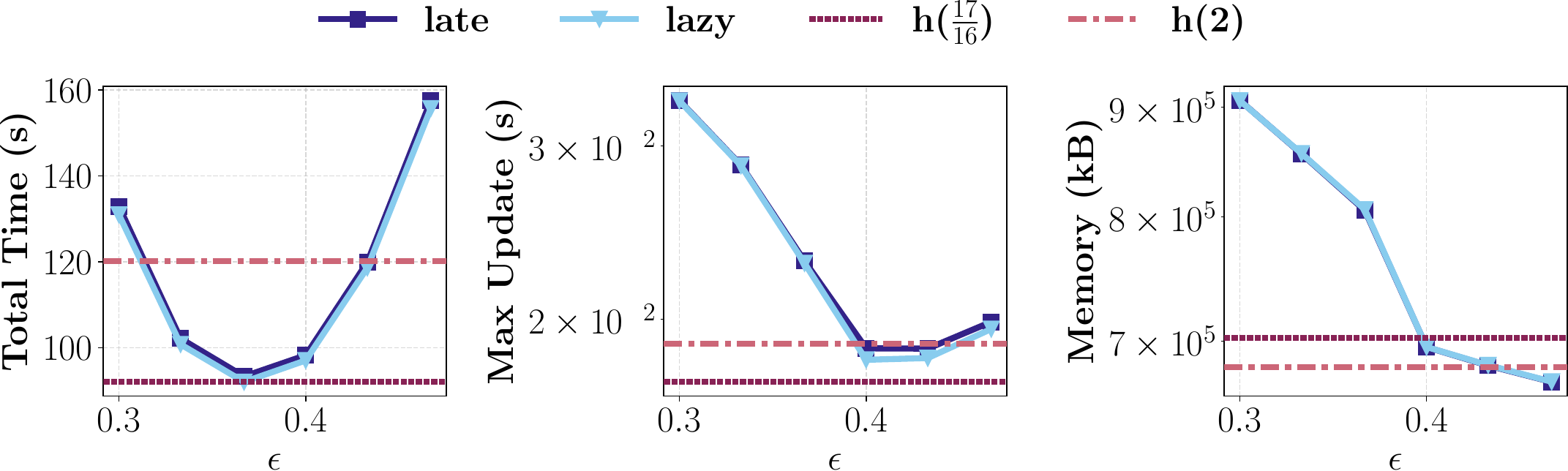}
\caption{
Performance of \hdegst{} 
for different vertex partitions
w.r.t. $\varepsilon$ 
on \EngSet{} data
(counting all patterns).
}
\label{exp:egst_eps}
\end{figure}
Using \hdegst{} as best-performing \egst{} variant, we evaluate vertex partitions on \EngSet{}. We vary $c\in\{1+\frac{1}{2^x}\mid x=0,\dots,10\}$ for \hInd{} and $\varepsilon\in\{0.3+\frac{i}{30}\mid i=0,\dots,5\}$ (rounded to three digits) for \lazyepstab{} and \lateepstab{}, the best \epsT{} variants from \autoref{sec:exp:hhh:eps_vari}.
Figure~\ref{exp:egst_eps} reports total runtime, maximum update time, and memory. As in \autoref{sec:exp:hhh:eps_vari}, we only include \hindexgr{$2$} and \hindexgrBest{}, where the latter achieves
$1.3\times$
faster total runtime,
$1.1\times$
faster updates, and only $4\%$ more memory.

The effects of \epsT{} in \autoref{exp:eps:study} also favor $\varepsilon$ below the theoretical
optimum of $1/2$ for \egst{}.
\lateepstab{.367}, \lazyepstab{.367}, and \hindexgrBest{}
have similar total runtime, but \hindexgrBest{}
achieves $1.32\times$ faster maximum updates and uses
\SI{14.8}{\percent} less memory.
These averages mask instance-level differences:
$\varepsilon=0.333$ wins on 17 of 24 graphs by up to
$1.3\times$, whereas \hindexgrBest{} wins on dense,
high-count instances such as \DSWiki{} by up to $3.3\times$,
consistent with \egst{}'s quadratic dependence on $|\High|$.

For single patterns, there is no clear winner~(cf.~\autoref{sec:single_patterns}).
We focus on \lateepstab{}/\hInd{} to preserve theoretical guarantees. While \hindexgrBest{} performs best for diamonds, \lateepstab{} is faster otherwise:
$1.1\times$
for 4-cycles and paws ($\varepsilon=.333$), and
$1.03\times$
for 3-paths ($\varepsilon=.433$). As with \hhh{}, triangles and 4-cliques are fastest without a partition (equivalently, \lazyepstab{1.0}).

\emph{
When counting all patterns with \hdegst{}, \hindexgrBest{} is the best partition with similar total runtime as \lateepstab{.367}, but with
$1.3\times$
faster maximum update time and $14.8\%$ less memory.
}

\subsection{Comparison to Static Baselines}
\begin{figure}[t]

\setlength{\abovecaptionskip}{-0pt}

\includegraphics[width=\linewidth]{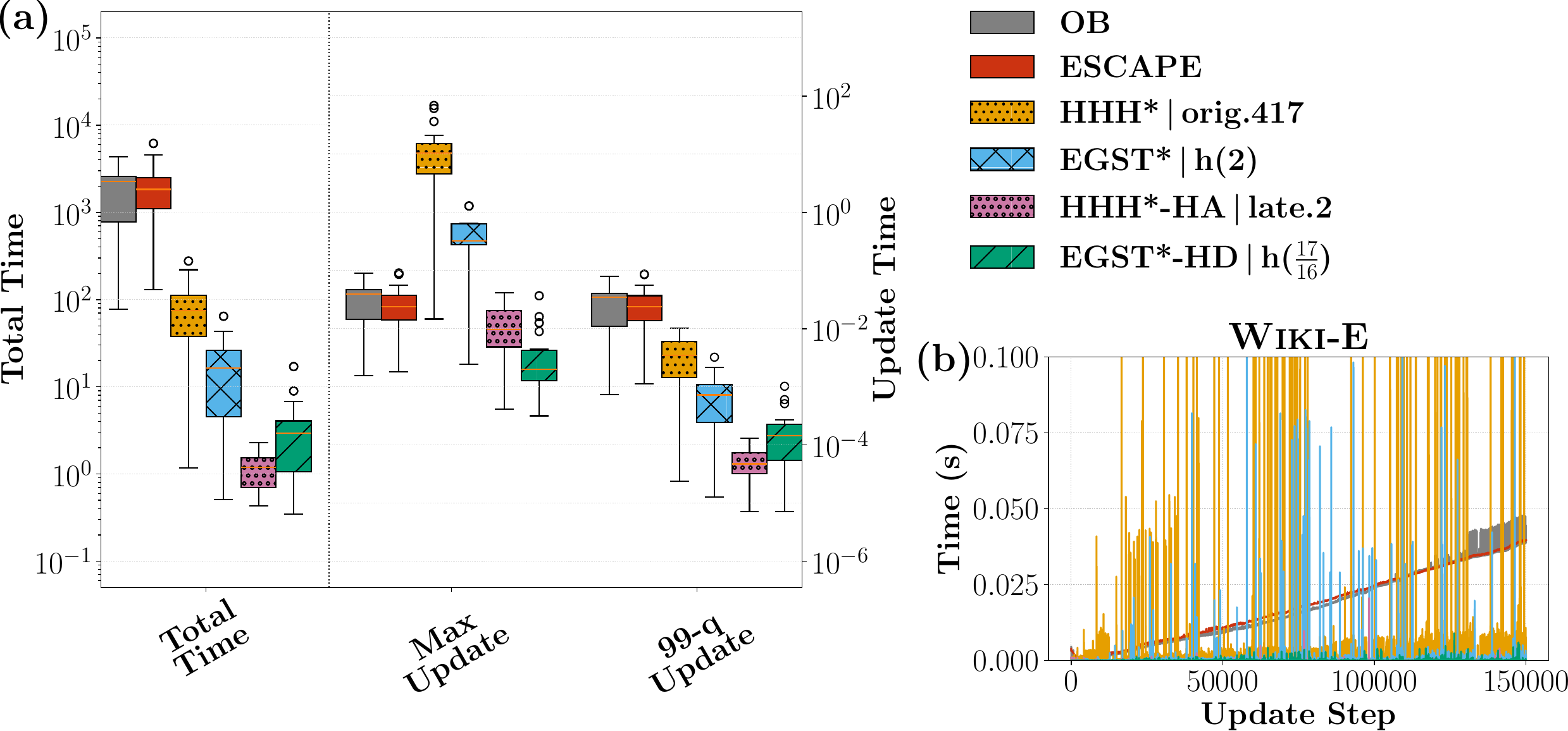}

\caption{
(a) Running times (total, maximum, and 99th-pctl) over \EngSet{} graphs limited to 150K updates and (b) update times on \DSWiki{}-E. Both counting all patterns.
} 
\label{exp:comp_static}
\end{figure}

We compare our dynamic algorithms against static recomputation via \texttt{OB} and \texttt{ESCAPE} after every update when counting all patterns simultaneously. We include both basic (\texttt{\hhho{}|\origepstab{.417}}, \texttt{\egsto{}|\hindexgr{2}}) and fully optimized versions (\texttt{\hhhohe{}|\lateepstab{}{.2}}, \texttt{\hdegst{}{}|\hindexgrBest{}}), evaluated on the first \num{150000} updates of each \EngSet{} instance to accommodate static runtimes.
In consequence, \DScKron{} instances are insertions-only here, while
\DSgKron{}, \DSWiki{}, and \DSTMP{} instances retain most characteristics.
Results (Figure~\ref{exp:comp_static}) show: 
Dynamic counting improves total runtime by 2--4 orders of
magnitude, but basic versions occasionally take longer than
static recomputation for individual updates, despite being
1--3 orders faster on 99\% of updates.
Figure~\ref{exp:comp_static} links these peaks to expensive
auxiliary-map maintenance and major rebalancing.
The introduced optimizations restrict anchors and avoid full reconstruction, improving total runtime and maximum update time.

\emph{Overall, \texttt{\hhhohe{}|\lateepstab{}{.2}} and \texttt{\hdegst{}{}|\hindexgrBest{}} clearly outperform static algorithms and non-optimized counterparts in both total and maximum update time.}

\subsection{Comparison to Dynamic Baselines}
\begin{figure}[t]

\setlength{\belowcaptionskip}{-17pt}
\setlength{\abovecaptionskip}{5pt}

\includegraphics[width=\linewidth]{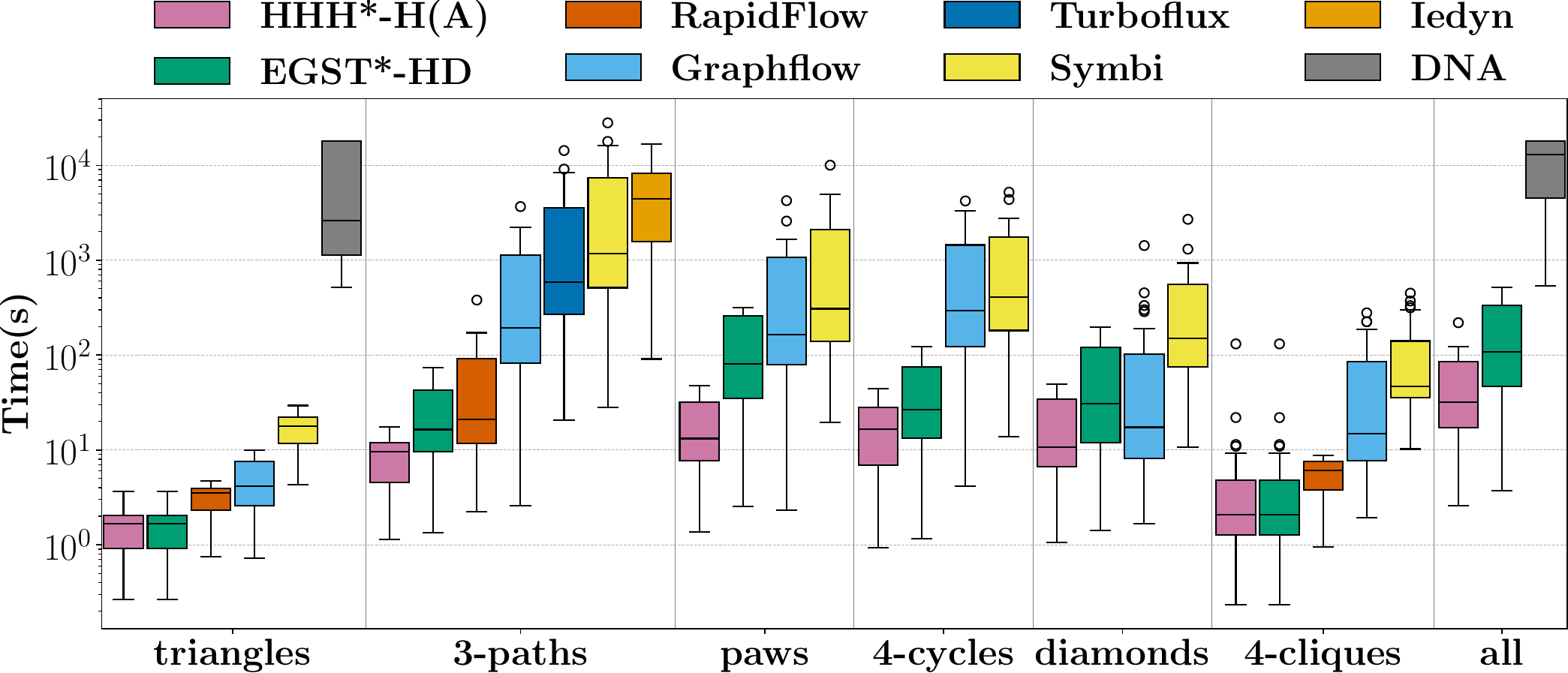}
\caption{
Running times of dynamic subgraph counting and CSM algorithms
using dataset \EngSet{}.
}
\label{exp:comp_to_dyn_baselines}
\end{figure}
\begin{figure*}[t]
  \includegraphics[width=\textwidth]{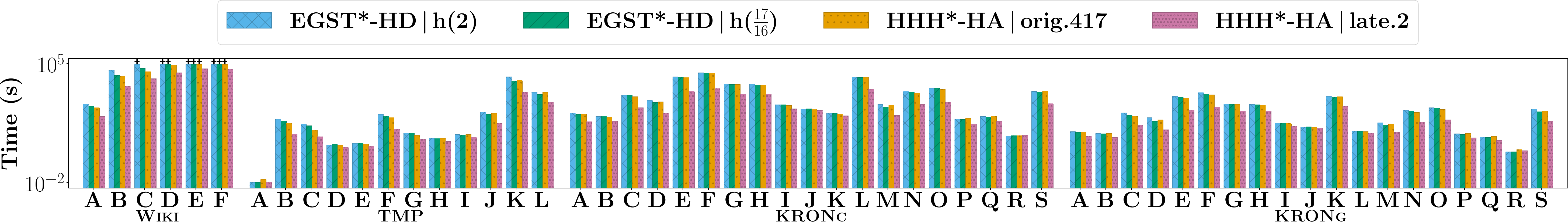}
  \caption{
  Running times of dynamic subgraph counting algorithms
using dataset \FullSet{} (counting all patterns).
  }
  \label{exp:full_data}
\end{figure*}

We compare our algorithms against \baselineDNA{} (dynamic subgraph counting) and state-of-the-art CSM algorithms~\cite{DBLP:journals/pvldb/SunSLH22,DBLP:journals/pvldb/SunSHL22} on \EngSet{}, using the best-performing configuration of our algorithms for each pattern. Specifically, \texttt{\hhhohe{}|\lateepstab{.2}} is used for all patterns simultaneously, \hhhoh{} with \lateepstab{.233} (diamonds, paws, 4-cycles), \lateepstab{.333} (3-paths), or \lateepstab{1.0} (triangles, 4-cliques); and \hdegst{} with \hindexgrBest{} (all, diamonds), \lateepstab{.333} (3-paths, 4-cycles), \lateepstab{.433} (paws), or \lateepstab{1.0} (triangles, 4-cliques). Note that the \lateepstab{1.0} variants are identical scanning algorithms without auxiliary structures. CSM algorithms count single patterns; \baselineDNA{} counts triangles and all patterns. All baselines have a \SI{5}{\hour} timeout.
Results are shown in Figure~\ref{exp:comp_to_dyn_baselines}. 
\baselineDNA{} timed out on half of the instances.
\baselineSymBi{} timed out for 3-paths on \DSWiki{}-D.
We use \SI{5}{\hour} as a lower bound for speedup computation.
\baselineRapidFlow{} crashed (segmentation faults) on up to \SI{25}{\percent} of the instances for paws, 4-cycles, and diamonds.
We therefore excluded it for these patterns.

Compared to \baselineDNA{}, \hhhohe{} is
$\geq 264.6\times$
faster for all patterns and \hhhoh{} is $\geq 2325\times$ faster on triangles.
For triangles, 3-paths, and 4-cliques, \baselineRapidFlow{} performed best among CSM baselines;
on paws, 4-cycles and diamonds, \baselineGraphflow{} did.
\hhhoh{} outperforms it across all patterns, with speedups between 
$1.8\times$
(4-cliques) and $26.2\times$ (4-cycles). On individual instances, gains can exceed two orders of magnitude (\eg,
$149.8\times$
on \DSWiki{}-D for 4-cycles). The exception can be graphs with very low pattern counts like \DScKron{}-R~(Table~\ref{tab:engeneering_set}), where \baselineGraphflow{} outperforms \hhhoh{} by up to
$1.7\times$
on 3-paths, paws, and diamonds, where auxiliary structure overhead does not pay off.

\emph{\hhhoh{} achieves speedups of
$1.8\times$–$26.2\times$
over \baselineGraphflow{}/\baselineRapidFlow{} and
$\geq 2325\times$
over \baselineDNA{} (triangles); \hhhohe{} is
$\geq 264.6\times$
faster than \baselineDNA{} (all patterns). Only on graphs with very low pattern frequency can \baselineGraphflow{} outperform \hhhoh{} by up to \SI{44}{\percent}.}

\subsection{Performance Evaluation on \FullSet{}}

We evaluate performance (Figure~\ref{exp:full_data}) by comparing the best combinations (\texttt{\hhhohe{}|\lateepstab{.2}}, \texttt{\hdegst{}|\hindexgrBest{}}) alongside their untuned counterparts (\texttt{\hhhohe{}|\origepstab{.417}}, \texttt{\hdegst{}|\hindexgr{2}}) on \FullSet{} to validate the parameter choices made on \EngSet{}~(Table~\ref{tab:graphs}).
We use a \SI{24}{\hour} timeout. For timed-out runs, we record a time of \SI{24}{\hour}, which under-estimates the reported speedups.
Partition tuning yields a
$2.6\times$
speedup 
($0.9$–$7.7\times$ per graph)
for \texttt{\hhhohe{}|\lateepstab{.2}} over \texttt{\hhhohe{}|\origepstab{.417}}, and
$1.1\times$ ($0.9$–$2.0\times$)
for \texttt{\hdegst{}|\hindexgrBest{}} over \texttt{\hdegst{}|\hindexgr{2}}. Comparing the best versions, \texttt{\hdegst{}|\hindexgrBest{}} timed out on \DSWiki{}-D/E/F after completing \SI{97.5}{\percent}/\SI{61.0}{\percent}/\SI{39.2}{\percent} of updates, while \texttt{\hhhohe{}|\lateepstab{.2}} finished in \SI{7.9}{\hour}/\SI{13.7}{\hour}/\SI{13.2}{\hour}. On all remaining instances, \texttt{\hhhohe{}|\lateepstab{.2}} achieves an overall
$2.7\times$ speedup ($0.9$–$8.3\times$),
improving runtime on every instance except those with very few pattern counts (\DScKron{}-R, \DSgKron{}-R), where scanning algorithms without auxiliary structures remain preferable. This effect is less severe for \texttt{\hdegst{}|\hindexgrBest{}}, which maintains simpler auxiliary structures. The speedup of \texttt{\hhhohe{}|\lateepstab{.2}} partly stems from storing more auxiliary structures, enabled by fewer structural restrictions and a larger $|\High|$, costing \SI{74.9}{\percent} more memory than \texttt{\hdegst{}|\hindexgrBest{}}; increasing $\varepsilon$ with \lateepstab{} reduces memory at the expense of runtime.

\emph{\texttt{\hhhohe{}|\lateepstab{.2}} outperforms \hdegst{} by multiple factors on nearly all instances, but uses more memory.}

\section{Conclusion and Practical Recommendations}

Our \hhhoh{}/\hhhohe{} algorithms are the most effective for dynamic subgraph counting, offering speedups of up to several orders of magnitude. We recommend the following configurations:
\begin{itemize}[itemsep=0ex,partopsep=1ex,parsep=1ex,labelwidth=1em,leftmargin=3em,topsep=1pt]
\item[\textbf{All}] (all patterns simultaneously): \texttt{\hhhohe{}|\lateepstab{.2}}.
\item[\dia{} \paw{} \fCycle{}] (diamond, paw, 4-cycle): \texttt{\hhhoh{}|\lateepstab{.233}}.
\item[\tPath{}] (3-path): \texttt{\hhhoh{}|\lateepstab{.333}}.
\item[\tri{} \fClique{} \claw{}] (triangle, 4-clique, claw): prefer efficient scanning/direct counting as in \hhho{}/\egsto{} without partition.
\end{itemize}
\textbf{Memory constrained:} Increase $\varepsilon$ in \lateepstab{} to trade runtime for memory, if $\varepsilon\ge0.4$ switch to \hindexgrBest{},  or to \texttt{\hdegst{}|\hindexgrBest{}} if memory remains insufficient.\\
\textbf{Practically fastest:} Use \lazyepstab{} instead of \lateepstab{}; it was $\approx \SI{1}{\percent}$ faster than \lateepstab{} in practice, but waives the amortized and worst-case update time guarantees.\\
\textbf{Small counts:} If it is known a priori that the subgraph counts will be small, auxiliary structures may create overhead; the scanning routines used by \hhhohe{}/\egsto{} for cliques or \baselineGraphflow{} can be preferable.

\section*{Acknowledgements}
OpenAI Codex assisted with language editing, reference formatting, and LaTeX/PDF checks. The authors assume responsibility for all content.

\clearpage

\appendix
\renewcommand\sectionautorefname{Appendix}
\section{Appendix}
\subsection{Extended Related Work}\label{sect:relwork_long}
Complementing the summary in \autoref{sect:relwork_short}, we give a more detailed literature overview here 
and refer the reader to detailed surveys on static~\cite{DBLP:journals/csur/RibeiroPSAS21} and dynamic~\cite{DBLP:journals/jea/HanauerHS22} subgraph counting for an even more comprehensive treatment. 
We omit results on subgraph \emph{detection}, which is only concerned with the question whether the pattern occurs at all or not, but not with its frequency.
We mention results on subgraph \emph{matching} (or \emph{enumeration}), as extracting frequencies is straightforward.

\paragraph{Static Setting}
Subgraph counting for \emph{arbitrary patterns} generalizes the $\mathcal{NP}$-hard subgraph isomorphism problem, thus ruling out efficient algorithms unless $\mathcal{P}=\mathcal{NP}$.
Much prior work is concerned with the identification of \emph{motifs}~\cite{milo2002network,itzhack2007optimal,DBLP:conf/wabi/Wernicke05,li2012netmode,DBLP:journals/tcbb/WangHWP12,DBLP:conf/sac/RibeiroS10}, \ie, patterns that occur more frequently than can be expected under a suitable random model.
Kloks et al.~\cite{DBLP:journals/ipl/KloksKM00} show how to count 4-cliques in $\bigO(m^{(\omega+1)/2})$ time and all other induced 4-vertex subgraphs in $\bigO(n^\omega + m^{1.69})$ time\footnote{$\omega$ denotes the exponent of fast matrix multiplication algorithms, $\omega < 2.373$~\cite{DBLP:conf/issac/Gall14a}}.
Ho\v{c}evar and Dem\v{s}ar~\cite{DBLP:journals/bioinformatics/HocevarD14} give an algorithm for counting induced subgraphs with up to five vertices that exploits symmetries in the pattern.
A similar approach was later employed by Ortmann and Brandes~\cite{DBLP:journals/ans/OrtmannB17} for non-induced subgraphs of size up to four.
Pinar et al.~\cite{DBLP:conf/www/PinarSV17} give an algorithm for patterns with up to five nodes.
Especially for triangles, there is also a large body of work on approximate and streaming algorithms~\cite{DBLP:conf/cocoon/JowhariG05,DBLP:conf/pods/BuriolFLMS06,DBLP:conf/kdd/LimK15,DBLP:journals/tkdd/LimJK18,DBLP:conf/pods/McGregorVV16,DBLP:journals/siamsc/KoldaPPST14,DBLP:journals/pvldb/PavanTTW13}.

\paragraph{Dynamic Setting}
Eppstein and Spiro~\cite{DBLP:journals/jgaa/EppsteinS12} describe an algorithm for maintaining the triangle count in $\bigO(h)$ amortized update time, where $h$ is again the h-index of the graph\footnote{Note that $h \in \bigO(\sqrt{m})$.}.
Eppstein et al.~\cite{DBLP:journals/tcs/EppsteinGST12} extend this approach to non-induced subgraphs of size four with an amortized update time of $\bigO(h^2)$ (\egst{}, see \autoref{sect:egst}).
Kara et al.~\cite{DBLP:conf/icdt/KaraNNOZ19} show how to count triangles in $\bigO(\sqrt{m})$ amortized update time based on a heavy-light vertex partitioning.
Based on their work, Lu and Tao~\cite{DBLP:conf/icdt/LuT21} study an optimal tradeoff between update time and approximation quality and obtain arboricity-sensitive update times.
Using a similar vertex partitioning strategy as Kara et al., Hanauer et al.~\cite{DBLP:conf/sand/HanauerHH22} give improved algorithms to count non-induced 4-vertex patterns in amortized time $\bigO(m^\frac{2}{3})$ except for the clique and the path, which take $\bigO(m)$ and $\bigO(\sqrt{m})$ update time, respectively (\hhh{}).
Their triangle counting algorithm is conceptually very similar to that by Kara et al.~\cite{DBLP:conf/icdt/KaraNNOZ19} and has the same guarantees.
Hanauer et al.\ further show that
conditioned on the combinatorial k-clique conjecture~\cite{DBLP:journals/siamcomp/AbboudBW18}, induced connected 4-vertex patterns cannot be maintained with amortized update time $\bigO(m^{1-\gamma})$ and query time $\bigO(m^{2-\gamma})$ for any $\gamma > 0$.
There also is no fully dynamic algorithm that maintains non-induced connected 4-vertex pattern counts in amortized $\bigO(m^{1-\delta})$ update time and $\bigO(m^{2-\delta})$ query time with an error probability of at most $\frac{1}{3}$ for any $\delta > 0$ unless the OMv conjecture~\cite{DBLP:conf/stoc/HenzingerKNS15} fails.
Assadi et al.~\cite{DBLP:journals/pacmmod/AssadiS25} show that 4-cycles can be maintained in $\bigO(m^{\frac{2}{3}-0.0099811})$ using fast matrix multiplication.
Dhulipala et al.~\cite{DBLP:conf/apocs/DhulipalaLSY21} consider a batch-dynamic setting, where each graph update may consist of multiple edge insertions and deletions,
to count cliques and give easily parallelizable algorithms for counting cliques with an amortized update time of $\bigO(B \sqrt{B+m})$, where $B$ is the size of the batch.

Schiller et al.~\cite{DBLP:conf/alcob/SchillerJHS15} describe a practical algorithm based on neighborhood scanning to count induced triangles and 4-vertex subgraphs in $\bigO(\MaxDegree^2)$ update time,
where $\MaxDegree$ denotes the maximum vertex degree.

\paragraph{CSM}
There also exists a large body of empirical work on continuous subgraph matching (CSM) for arbitrary subgraphs~\cite{fan2013incremental,kim2018turboflux,DBLP:journals/pvldb/MinPPGIH21,DBLP:conf/sigmod/KankanamgeSMCS17,DBLP:conf/sigmod/IdrisUV17,DBLP:journals/vldb/IdrisUVVL20,DBLP:journals/pvldb/SunSHL22}.
CSM algorithms often rely on an \textit{index}, a precomputed data structure that accelerates query processing by storing structural information about the data graph.
Edge updates then consist of index updates and matching enumeration. 
\baselineGraphflow{}~\cite{DBLP:conf/sigmod/KankanamgeSMCS17} computes incremental changes without relying on an explicit index.
\baselineTurboFlux{}~\cite{kim2018turboflux} and \baselineIEDyn{}~\cite{DBLP:conf/sigmod/IdrisUV17, DBLP:journals/vldb/IdrisUVVL20} construct indices with space and time complexity $\bigO(|E(G)| \cdot |V(Q)|)$ for data graph $G$ and query graph $Q$, while \baselineSymBi{}~\cite{DBLP:journals/pvldb/MinPPGIH21} builds an index with complexity $\bigO(|E(G)| \cdot |E(Q)|)$.
Sun et al.~\cite{DBLP:journals/pvldb/SunSLH22} provide an extensive survey and experimental study of these algorithms and give recommendations for different use cases. According to their study, there is no single dominant algorithm:
\baselineSymBi{} is particularly effective for sparse queries or long-running queries, \baselineTurboFlux{} and \baselineIEDyn{} are well-suited for tree queries, while \baselineGraphflow{} performs best on dense queries or when both query and data graphs are sparse.

\subsection{Proof of Correctness for \hhh{} 3-Path Counting}

\label{sec:hhh:corrections:proof}
\restatethelemma{hhhthreepath}
\begin{proof}
    Recall from \autoref{fig:hhh_structures_new} that $\AuxvLV[u]$ (\DrawvLV{}) counts the paths $u\text{-}x\text{-}y$ with $x\in\Low$.
    There are two cases where an extension of a path counted by $\AuxvLV[u]$ with edge $\{v,u\}$ does not result in a valid 3-path of the form $v-u-x-y$: (1) $x=v$ and (2) $y=v$.
    The number of invalid paths in (1) is $A_v= \DegreeN{u}{v}$ if $v \in \Low$ and 0 otherwise, and the number of invalid paths for (2) is $\AuxuLv[u,v]$, \ie, the number of 2-paths adjacent to both $u$ and $v$ with a low-degree joint.
\end{proof}

\subsection{The \hhh{} Algorithm with High Anchors}
\label{sec:hhh:high_anchors}

We describe the required techniques and changes to the structure and subgraph count maintenance.
We begin with the auxiliary structures. The structures \Auxt{} and \AuxuHv{} remain unchanged, since their anchors already have high degree in \cite{DBLP:conf/sand/HanauerHH22}.
For the remaining structures the change to high-degree anchors affects both edge updates and partition changes of vertices.

In case of edge updates for most structures, it is sufficient to include an additional requirement to the description in \cite{DBLP:conf/sand/HanauerHH22} for anchor vertices to be in $\High$. Note that this $\bigO(1)$-time check does not change the asymptotic runtime, when using \poptref{arc-border} this even improves the actual running time in  practice. 
The only exception is the \Auxt{} auxiliary structure, where an update of edge $e=\{u,v\}$ requires additional changes, since as before $\AuxuLv[u,v]$ was used in the update routine which is no longer possible when $u\in\Low$ or $v\in\Low$. In this case, assume \wlogtext{} $u\in\Low$ and adjust $\Auxt[h]$ by $|\{\{v,h\}\in E\}|$ for each $h \in \NeighH{u}$ and if $v\in\High$, $\Auxt[v]$ by $\big|\{x \in \Neigh{u} \mid \{v,x\}\in E\}\big|$.
This adjustment is an increment for insertions and a decrement for deletions and can be performed in time $\bigO(m^{\max(1-\varepsilon,\varepsilon)})$, which equals the previous update time bound of \Auxt{} in \cite{DBLP:conf/sand/HanauerHH22}.

We next describe how we update the count of each auxiliary structure when a vertex $v$ changes the partition. We use $v\rightarrow\Low$ ($v\rightarrow\High$) as a notation for $v$ changing to $\Low$ ($\High$). In case $v$ has a non-anchor position we extend \cite{DBLP:conf/sand/HanauerHH22} by the high anchors requirement. For $v$ in an anchor position we use the following routines:

(\AuxvLV{} \DrawvLV{}) If $v\rightarrow\Low$, we set $\AuxvLV[v] = 0$. If $v\rightarrow\High$, we compute newly formed $\AuxvLV[v] = \sum_{x \in \NeighL{v}} \DegreeN{v}{x}$, in $\bigO(m^\varepsilon)$.

(\AuxuLv{} \DrawuLv{}) If $v\rightarrow\Low$, we set $\AuxuLv[v, x] = 0$ for all $x \in \NeighNH{v}{\ell}$, where $\ell \in \NeighL{v}$. 
In practice, instead of finding $x$, we can also find all entries of \AuxuLv{} that contain $v$ and set them to zero, in case $|H|$ is smaller than $\Degree{v}\cdot \deglow$.
If $v\rightarrow\High$, we increase $uLv[v, h]$ by 1 for all $h \in \NeighNH{v}{\ell}$ where $\ell \in \NeighL{v}$. This runs in $\bigO(m^{2\varepsilon})$.

(\AuxuLLv{} \DrawuLLv{}) We find all $z \in \NeighH{y}$ for $y\in \NeighL{x}$ and $x\in \NeighL{v}$. If $v\rightarrow\Low$, we set $\AuxuLLv[v,z]=0$ and if $v\rightarrow\High$, we increase $\AuxuLLv[v,z]$ by $1$.
In practice, when $v\rightarrow\Low$ and $|\High|<\Degree{v} \cdot \deglow^2$, instead of finding $z$, it is faster to find all entries of \AuxuLLv{} that contain $v$ and set them to zero.
Hence, the time for a partition swap here is $\deglow^3\in\bigO(m^{3\varepsilon})$.

(\AuxcLV{} \DrawcLV{}) We find $ h \in \NeighH{l}$ with $l \in \NeighL{v}$.
If $v\rightarrow\Low$ set $\AuxcLV[v,h]=0$.
If $v\rightarrow\High$, increase $\AuxcLV[v,h]$ by $\Degree{l}-2$.
In practice, when $|H|<\Degree{v}\cdot \deglow$, it is faster to find all entries of \AuxcLV{} that contain $v$ and set them to zero instead of finding $h$ as described above.
The \AuxcLV{} vertex change takes $\bigO(m^{2\varepsilon})$ time.

(\AuxpLL{} \DrawpLL{}) The \AuxpLL{} has two positions for $v$ as anchor: in the triangle or in the arm. 
If $v\rightarrow\Low$, we decrease, and if $v\rightarrow\High$, we increase the count.
For $v$ as anchor in the triangle, we find $x, y \in \NeighL{v}$ with $\{x, y\} \in E$. Specifically, we adjust $\AuxpLL[v,h]$ by 1 for $h$ in $\NeighH{x}$ or $\NeighH{y}$, and by 2 if it is in both neighborhoods. 
For $v$ as anchor in the arm, we find $h\in \NeighNH{v}{l}$ and $x\in \NeighNL{v}{l}$ with $l\in \NeighL{v}$ such that $\{h, x\} \in E$ and adjust $\AuxpLL[v,h]$ by 1.
This computation takes $\bigO(m^{3\varepsilon})$ time.
In practice, in the case $v\rightarrow\Low$ and $|H|<\Degree{v} \cdot\deglow^2$, it is faster to find all entries of \AuxpLL{} that contain $v$ and set them to zero instead of finding $h$ as described above.

(\AuxcL{} \DrawcL{}) For $l \in \NeighL{v}$ and $x, y \in \NeighNH{v}{l}$, if $v\rightarrow\Low$, we decrease, and if $v\rightarrow\High$, we increase $\AuxcL[v, x, y]$ by 1.

We now describe the adjustment to the query algorithm when restricting all anchors for auxiliary structures to high-degree vertices. 
As for claws and 4-cliques no auxiliary structures are used they remain unchanged.
For the other subgraphs the technique is straightforward, we now iterate over the neighbors of low-degree vertices for which we do not store auxiliary counts anymore.
To update the subgraph count after an edge update $e=\{u,v\}$, we compute the subgraphs containing $e$ and adjust the total by adding or subtracting this count for edge insertion or deletion, respectively.

(\tri) Let $x$ be the third vertex of a triangle containing $e$. To obtain $T_\High$ the number of triangles with 
$x\in\High$, we follow \cite{DBLP:conf/sand/HanauerHH22} and count  $T_\High=|\{\{v,h\}\in E \mid h \in \NeighNH{v}{u}\}|$.
We adapt the way to obtain the number $T_\Low$ of triangles with $x\in\Low$.
Before, $T_\Low=\AuxuLv[u,v]$, but now we can only use this if $u,v \in \High$.
Otherwise, let w.l.o.g.\ $u\in\Low$. Then $T_\Low=|\{\{v,\ell\}\in E \mid \ell \in \NeighNL{v}{u}\}|$.

(\tPath) 
Counting 3-paths containing $e$ as a centerpiece is done as in \cite{DBLP:conf/sand/HanauerHH22} with $\DegreeN{v}{u}\cdot\DegreeN{u}{v}$ minus the number of triangles containing $e$.
We only need to adapt the way to obtain the number of such triangles with a third vertex $x\in\Low$ as described above.
We next consider a non-center edge $e = \{v, u\}$ in a 3-path and apply the following twice (with swapped roles of $u$ and $v$):
Let $e$ be the start of a path, \ie~$v-u-x-y$. If $x\in\High$, we follow the description in \cite{DBLP:conf/sand/HanauerHH22}.
For the case that $x\in\Low$,
we proceed as follows:
    (i) If $u,v\in\High$, directly use Lemma~\ref{lemma:hhhthreepath}.
    (ii) If $u\in\High$ and $v\in\Low$, proceed as in Lemma~\ref{lemma:hhhthreepath}, but compute \AuxuLv[$u,v$] directly as $\sum_{x\in \NeighL{v}}\mathbf{1}_{\{\{u,x\}\in E\}}$.
    (iii) If $u \in \Low$, the number of 3-paths is $\sum_{x \in \NeighNL{v}{u}}\left(\DegreeN{u}{x} - \mathbf{1}_{\{\{x,v\}\in E\}}\right)$, as we first count all paths $u-x-y$ for $x\ne v$ as $\DegreeN{u}{x}$ and then correct this count for the case where $y=v$ with $\mathbf{1}_{\{\{x,v\}\in E\}}$.

(\paw) To count paws containing $e$, we distinguish the position of $e$ to be (1) the arm of the paw or (2) adjacent to the arm and (3) the third triangle edge non-adjacent to the arm.
We need the count $t_{\{u,v\}}$ of triangles with an edge $e$ as obtained above and triangles $t_u$ ($t_v$) containing vertex $u$ ($v$) as computed by \cite[Lemma 32]{hhh_full_version}.
Then, similar to \cite{hhh_full_version}
we count (1) as $t_u+t_v-2t_{\{u,v\}}$ and (2) as $t_{\{u,v\}} \cdot (\DegreeN{v}{u}+\DegreeN{u}{v}-2)$.  In \cite{hhh_full_version}, case (3) was counted as $\AuxcLV[u,v]$ plus those where the central vertex has high degree. Now, we can only use $\AuxcLV$ if $u,v \in \High$. Otherwise \wlogtext{} $u\in\Low$, we count $\sum_{x\in \{\NeighN{v}{u} | \{x, v\} \in E\}} (\Degree{x}-2)$.

(\fCycle) To count 4-cycles containing $e$, when $u, v \in \High$ we apply \cite[Lemma 40]{hhh_full_version}. However, when at least one endpoint of $e$ is in $\Low$, we can not use the same auxiliary structures anymore. For the case only one endpoint is in $\Low$, let \wlogtext{} $u \in \Low$ and $v\in\High$. Then, we find $b \in \NeighN{v}{u}$. If $b \in \High$ we add $\AuxuLv[v, b] -1 + \AuxuHv[v, b]$ and if $b\in\Low$ we add $|\{\{a, v\} \in E| a \in N_{\overline{v}, \overline{u}}(b)\}|$ to the count.
If both $u, v \in \Low$, we simply count $|\{\{a, b\} \in E | b \in \NeighN{v}{u},a \in \NeighN{u}{v}\}|$.

(\dia) Counting diamonds containing $e$, when $u, v \in \High$ we apply \cite[Lemma 16]{hhh_full_version}.
If $u \in \Low$ and/or $v \in \Low$, we need to make adaptions. We distinguish the position of $e$ to be (1) the chord, (2) part of the 4-cycle.
For (1) we apply the exact description from the proof in \cite[Lemma 16]{hhh_full_version}.
For (2), if $u, v \in \Low$, we iterate over $x \in \NeighN{v}{u}$ and $y \in N_{\overline{u},\overline{x}}(v)$, checking whether $\{x, y\} \in E$. For each chord present from $\{\{x, v\}$, $\{y, u\}\}$, we increase the count by 1. Otherwise, \wlogtext, let $u \in \Low$ and $v \in \High$. 
First, we count all diamonds where $u$ is not adjacent to the chord. For each $x \in \NeighN{v}{u}$ with $\{x, v\} \in E$ as the chord, we count the number of triangles containing $\{x, v\}$ but not $u$. If $x \in \High$, the count is $\AuxuLv[x, v] - 1 + \mathrm{uHv}[x, v]$. Otherwise, we iterate over $y \in N_{\overline{u},\overline{v}}(x)$ and count each $y$ such that $\{y, v\} \in E$. 
Second, we count all diamonds where $u$ is adjacent to the chord, but $e$ is not the chord. For $x, y \in \NeighN{v}{u}$, we increase the count by 1 if $\{x, y\}, \{y, v\} \in E$, and by 1 if $\{x, y\}, \{x, v\} \in E$.
\subsection{Optimizing the \hhh{} Algorithm: \poptref{hhh-extra-aux} Extra Use of Auxiliary Structures}
\label{sec:hhh:extra_aux:details}

When counting triangles or 4-cliques, we can use additional auxiliary structures without changing asymptotic running times.
These dependencies are highlighted as dashed arcs in \autoref{fig:hhh_structures_new}.
For counting triangles that contain an edge $e=\{u,v\}$ and a third vertex $x$, where $u,v, x \in \High$, we look up $\mathrm{uHv}[u,v]$ instead of iterating over all vertices in $\High$.
As $\mathrm{uHv}$ can be maintained in $\bigO(m^{\max(1-\varepsilon,\varepsilon)})$ time~\cite{DBLP:conf/sand/HanauerHH22}, this does not affect the overall cost to maintain triangle counts, which also is $\bigO(m^{\max(1-\varepsilon,\varepsilon)})$ when only using \AuxuLv{}.
As mentioned earlier, 4-clique counts can be maintained in worst-case $\bigO(m)$ time without using any auxiliary counts.
If we use $\AuxcL$ with high-degree anchors, \ie, \poptref{high-anchors-hhh},
and an edge $\{u, v\}$ with $u, v \in \High$ is inserted or deleted, we use $\AuxcL$ to count 4-cliques $\{u, v, x, y\}$ where $x\in\Low$ and $y\in\High$ 
by adding up $\AuxcL[u,v,y]$ for each $y \in \High$.
This requires a single pass over $\High$ and thus takes $\bigO(m^{1-\varepsilon})$ time.
As $\AuxcL$ can be maintained in $\bigO(m^{2\varepsilon})$ and the update time for other cases remains $\bigO(m)$, we obtain the same (now amortized) update time as long as $\varepsilon \leq \frac{1}{2}$.

\paragraph{Experimental Evaluation.}
\label{sec:hhh:extra_aux:exp}
To evaluate the practical improvement we compare \texttt{\hhhohe{}|\origepstab{.417}} and \texttt{\hhhoh{}|\origepstab{.417}} on dataset \EngSet{}.

\begin{figure}[t]

\includegraphics[width=\linewidth]{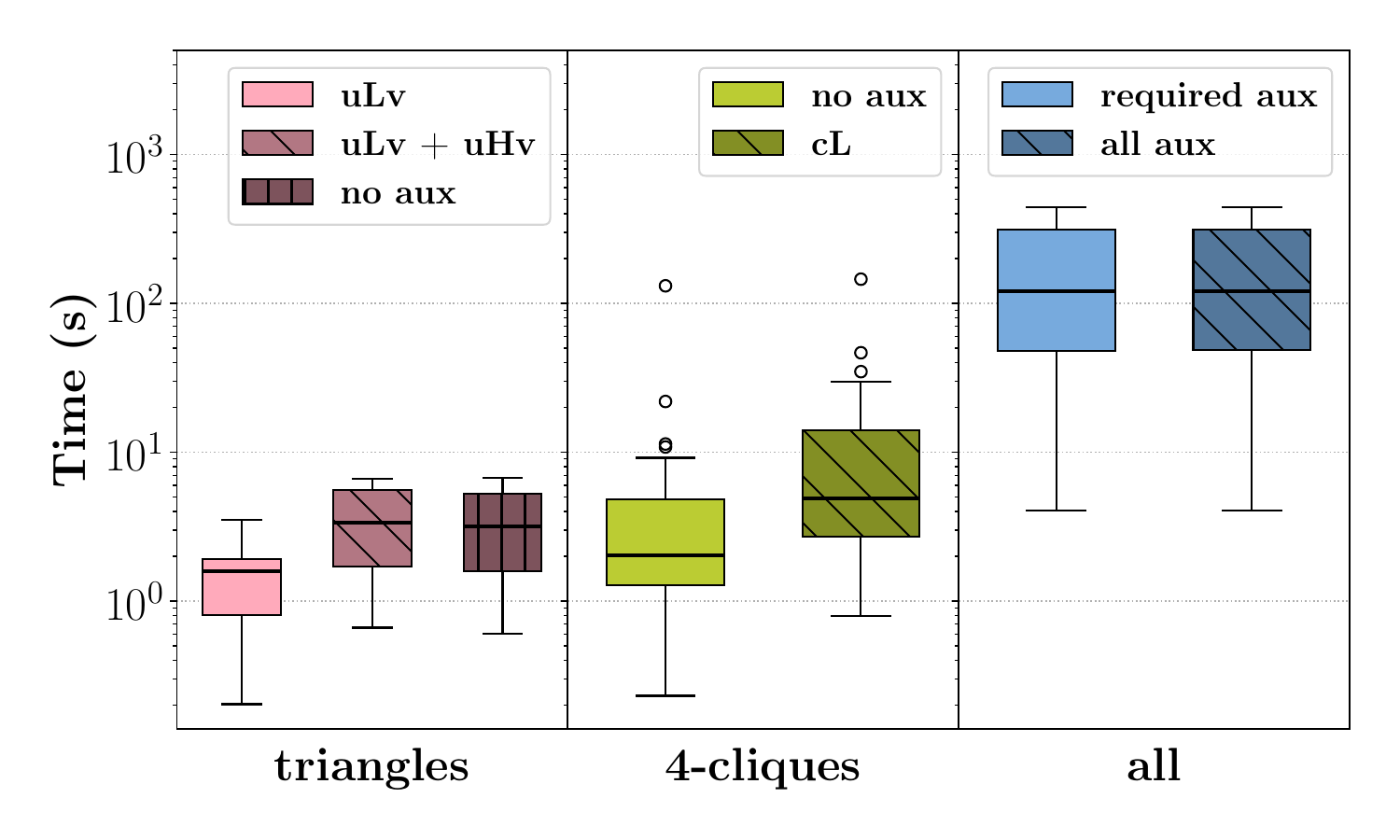}
\caption{Comparing the total update times for different patterns using more  auxiliary structures than required in the \hhh{}  algorithm using \origepstab{} \epsT{} with $\varepsilon=0.417$ on dataset \EngSet{}.} 
\label{exp:hhh_add_aux}
\end{figure}

If we are only interested in 4-cliques or triangles, not maintaining (and using) any auxiliary structures is $2.59\times$ (4-cliques) and $2.25\times$ (triangles) faster. 
However, when counting all patterns simultaneously, using all available auxiliary structures performed similarly (within $2\%$).
While \poptref{hhh-extra-aux} reduces neighbor traversals, increased lookup times and branch prediction misses slightly offset the gains, leading to marginally  faster running times in the geometric mean ($0.1\%$)
with \poptref{hhh-extra-aux}.
\subsection{Bounding Reclassifications during Major Rebalancing}
\label{sec:optimizing_epstab:proofs:reclassified_bound}
\restatethelemma{n_wrong_vertices}
\begin{proof}
Let $m_i$ and $m_{i+1} = m$ be the number of edges in the last and current MR step, respectively.
Let $\High$ and $\High'$ be the set of high-degree vertices immediately before and right after the MR, respectively.
Note that the number of current edges $m$ remains unchanged during an MR.
By definition and the properties of the \epsT, $m_i \in\Theta(m)$, so $|\High|\in\bigO(m_i^{1-\varepsilon})=\bigO(m^{1-\varepsilon})$, and $|\High'|\in\bigO(m^{1-\varepsilon})$.
In the worst case,
$\High \cap \High' = \emptyset$, so $|\High| + |\High'| = \bigO(m^{1-\varepsilon})$ vertices are classified differently after an MR than before.
\end{proof}

\subsection{Cost of \softepstab{} Major Rebalancing}
\label{sec:optimizing_epstab:proofs:soft}

\restatethelemma{soft}
\begin{proof}
Let $n^\prime$ be the number of vertices that need to change partition in an MR step, where $n' \in \bigO(m^{1-\varepsilon})$ by Lemma~\ref{lemma:n_wrong_vertices}.
Thus, an MR step with \softepstab{} requires
$\bigO(T_\text{VPU} \cdot n') = \bigO(m^\varepsilon \cdot T_\text{aux} \cdot m^{1-\varepsilon}) = \bigO(m \cdot T_\text{aux})$ time in total.

\end{proof}

\subsection{Degree Bound Degradation under \lazyepstab{}}
\label{sec:optimizing_epstab:proofs:lazy}

\restatethelemma{lazy}
\begin{proof}
If $\varepsilon= 0$, $\Low = \emptyset$, and if $\varepsilon = 1$, $\Degree{v} \in \Theta(m)$, so the statement trivially holds.
Otherwise, assume we are in epoch $i$.
Let $v \in \Low$ with $\Degree{v} = \lfloor\frac{1}{2}\theta_i\rfloor - 1$.
Delete all edges not incident to $v$.
Then, $\Degree{v} = m$, and, as no edge incident to $v$ was updated, $v$ remains in $\Low$.
\end{proof}

\subsection{Cost of \lateepstab{} Major Rebalancing}
\label{sec:optimizing_epstab:proofs:late}

\restatethelemma{late}
\begin{proof}
Since \lateepstab{} performs the same steps as \softepstab{}, it has the same total cost of $\bigO(m\cdot T_\text{aux})$ by Lemma~\ref{lemma:soft}
and (1) follows.
By construction, all vertex checks are done within the current epoch.
For (2), recall that there may be up to two VPUs per update in the original \epsT{}, which result from checking the end vertices of the updated edge.
With \lateepstab{}, we perform $\RebalanceFactor(\RebalanceFactor - 1)^{-1}$ VPUs per update in addition and hence $\bigO(1)$ in total per update.
Note that the threshold $\theta_i$ changes only by a constant factor between one epoch and the next.
Hence, each vertex that needs to change partition still has degree in $\bigO(m^{\varepsilon})$ when the change is performed.
\end{proof}

\subsection{Impact of \epsT{}s on Worst-case Update Times in \hhh{}}
\label{sec:worst-case-times:proofs}
We analyze the \emph{worst-case} update times for each of the four variants of the \epsT{}:

\restatethelemma{epst-worstcase}

After an edge update, the following steps are performed: the partition is updated in $\bigO(1)$; if triggered, an MR is executed at cost $T_{\text{MR}}$; if required, VPUs of the endpoints are performed at cost $T_{\text{VPU}}$; auxiliaries are updated at cost $T_{\text{aux}}$; and counts are updated using auxiliary structures at cost $T_{\text{count}}$. The worst-case update time is thus $\bigO(T_{\text{count}} + T_{\text{MR}} + T_{\text{VPU}} + T_{\text{aux}})$.

To derive these terms, we use $\tau_{\text{aux}}$ and $\tau_{\text{VPU}}$: the cost of an edge update and a VPU to a single auxiliary structure, respectively, derived from~\cite{hhh_full_version} (see \autoref{tab:P:hhh:aux:short}). The total cost $T_{\text{aux}}$ (resp.\ $T_{\text{VPU}}$) of a subgraph is the sum of $\tau_{\text{aux}}$ (resp.\ $\tau_{\text{VPU}}$) over all its required auxiliary structures. For example, $T_{\text{aux}}(\text{3-path})=\tau_{\text{aux}}(\AuxvLV)+\tau_{\text{aux}}(\AuxuLv)$.

\begin{corollary}\label{remark:orig-soft:worstcase}
Using \origepstab{}/\softepstab{}, the worst-case update time for subgraph counting is
$\bigO(m^{3/2})$ for triangles and 3-paths, and
$\bigO(m^{5/3})$ for paws, 4-cycles, and diamonds.
\end{corollary}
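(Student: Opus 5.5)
The worst-case bound will come from the decomposition already fixed above: every update costs $\bigO(T_{\text{count}} + T_{\text{MR}} + T_{\text{VPU}} + T_{\text{aux}})$. With \origepstab{} and \softepstab{}, an update that triggers a major rebalancing has to finish the entire MR in that single update. So the plan is to show that $T_{\text{MR}}$ dominates the other three terms and then evaluate it at the theoretically optimal $\varepsilon^*$ for each pattern. For \origepstab{}, an MR recomputes all auxiliary structures from scratch by simulating $m$ edge insertions, giving $T_{\text{MR}} \in \bigO(m \cdot T_\text{aux})$. For \softepstab{}, Lemma~\ref{lemma:soft} gives the same total cost $\bigO(m\cdot T_\text{aux})$: by Lemma~\ref{lemma:n_wrong_vertices} at most $\bigO(m^{1-\varepsilon})$ VPUs are needed, each costing $T_\text{VPU}\in\bigO(\deglow\cdot T_\text{aux}) = \bigO(m^\varepsilon T_\text{aux})$. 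Both variants therefore share the same worst case, namely $\bigO(m\cdot T_\text{aux})$ plus lower-order terms.

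The next step is to read off $T_\text{aux}$ for each pattern by summing $\tau_{\text{aux}}$ over its required auxiliary structures (\autoref{tab:P:hhh:aux:short}, \autoref{fig:hhh_structures_new}). In \hhh{}, every auxiliary structure can be updated within the amortized bound of its pattern, so $T_\text{aux} \in \bigO(m^{\max(1-\varepsilon,\varepsilon)})$ for triangles and 3-paths ($\AuxuLv$, $\AuxvLV$), and $T_\text{aux}\in\bigO(m^{\max(1-\varepsilon,2\varepsilon)})$ for paws, 4-cycles and diamonds. The latter covers structures such as $\AuxuLLv$, $\AuxcLV$, $\AuxpLL$ and $\AuxuHv$, which involve scanning over $\High$ or two low-degree hops. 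At $\varepsilon^*=1/2$ the first bound becomes $m^{1/2}$, and at $\varepsilon^*=1/3$ the second becomes $m^{2/3}$. Hence $T_{\text{MR}}\in\bigO(m^{3/2})$ and $\bigO(m^{5/3})$, respectively.

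It remains to check that the other terms are dominated. $T_{\text{count}}$ and $T_\text{aux}$ are at most the amortized bounds, which are $m^{1/2}$ and $m^{2/3}$. A single endpoint VPU costs $\bigO(m^\varepsilon T_\text{aux})$, i.e.\ $m$ or $m^{4/3}$. For the 4-cycle and diamond one should take the finer per-structure VPU costs from \autoref{tab:P:hhh:aux:short}, such as $\deglow^3$ or $\deglow|\High|$ at $\varepsilon=1/3$. These give $\bigO(m)$, so they are also dominated by the MR term. I expect the main obstacle to be this bookkeeping: confirming from the cited full version that no single structure has a per-edge cost $\tau_\text{aux}$ exceeding the stated $T_\text{aux}$ at $\varepsilon^*$. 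The VPU terms are the natural candidates, since they carry the $\deglow^3\changesperupdate$ factors in \autoref{tab:generalP}. I would handle them by noting that in the worst case there are at most two endpoint VPUs per update, so their non-amortized cost is just $\bigO(\tau_\text{VPU})$, which stays below $m^{5/3}$. Finally, the bound is tight: an update that triggers an MR when all structures are nonempty does spend $\Theta(m\cdot T_\text{aux})$, which establishes the stated worst-case bounds.
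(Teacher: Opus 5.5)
Your proposal is correct and follows the paper's proof: $T_{\text{MR}}\in\bigO(m\cdot T_\text{aux})$ dominates the other terms (with \softepstab{} matching this via Lemma~\ref{lemma:soft}), and summing $\tau_\text{aux}$ from \autoref{tab:P:hhh:aux:short} at $\varepsilon^*=1/2$ and $\varepsilon^*=1/3$, respectively, gives $m^{3/2}$ and $m^{5/3}$. There are two harmless slips. First, a single VPU at $\varepsilon^*=1/3$ costs $m^{1/3}\cdot m^{2/3}=m$, not $m^{4/3}$. Second, your closing tightness claim is not needed for an $\bigO$-bound and is unjustified, since $T_\text{aux}$ only upper-bounds the per-edge cost, so an MR need not actually spend $\Theta(m\cdot T_\text{aux})$.
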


\begin{proof}

Under \origepstab{}, an MR recomputes all auxiliaries from scratch in a
single step, so $T_{\text{MR}} \in \bigO(m \cdot T_{\text{aux}})$, which
dominates all other terms. Using \softepstab{} an MR has the same time by Lemma~\ref{lemma:soft}.
Taking $T_{\text{aux}}$ per auxiliary from \cref{tab:P:hhh:aux:short},
summing over the auxiliaries required by each subgraph, and substituting
$\deglow \in \bigO(m^{\varepsilon})$, $|\High| \in \bigO(m^{1-\varepsilon})$
with the optimal $\varepsilon^*$ from~\cite{hhh_full_version}
yields the bounds listed in \cref{tab:worstcase}.
\end{proof}

\begin{table}[t!]
\centering\small
\caption{Per-auxiliary costs (in $\bigO$)~\cite{hhh_full_version}.
$\tau_{\text{aux}}$/$\tau_{\text{VPU}}$: cost to perform an edge update/a VPU to a single auxiliary structure.\\
}
\label{tab:P:hhh:aux:short}
\begin{tabular}{lcc}
\toprule
Auxiliary Structure & $\tau_{\text{aux}}$ & $\tau_{\text{VPU}}$ \\
\midrule
\DrawvLV{}  \AuxvLV  & $\deglow$              & $\deglow$ \\
\DrawuLv{}  \AuxuLv  & $\deglow$              & $\deglow^2$ \\
\Drawt{}    \Auxt    & $|\High|$    & $\deglow^2$
\\
\DrawuLLv{} \AuxuLLv & $\deglow^2$            & $\deglow^3$ \\
\DrawcLV{}  \AuxcLV  & $\deglow^2$            & $\deglow^2$ \\
\DrawpLL{}  \AuxpLL  & $\deglow^2$            & $\deglow^3$ \\
\DrawuHv{}  \AuxuHv  & $|\High|$              & $\deglow^2 + \deglow \cdot |\High|$ \\
\DrawcL{}   \AuxcL   & $\deglow^2$            & $\deglow^3$ \\
\bottomrule
\end{tabular}
\end{table}

\begin{table*}[t!]
\centering\small
\caption{Worst-case update times (in $\bigO$) to count subgraphs with \hhh{} using different partitions. Recall that for \origepstab{}, \softepstab{} and \lateepstab{} $\deglow \in\bigO(m^{\varepsilon})$, $|\High| \in\bigO(m^{1-\varepsilon})$ and for \lazyepstab{} $\deglow\in\bigO(m)$.\\
$T_{\text{aux}}$/$T_{\text{VPU}}$: cost for edge update/VPU to all required auxiliary structures (combined from \autoref{tab:P:hhh:aux:short})\\
$T_{\text{count}}$: counting cost given the auxiliaries (from~\cite{hhh_full_version} and \cref{sec:hhh:extra_aux:details}).\\
Worst-case update times using\\
\lateepstab{}/\lazyepstab{}: $\bigO(T_{\text{count}} + T_{\text{aux}} + T_{\text{VPU}})$, or\\
\origepstab{}/\softepstab{}: $\bigO(T_{\text{count}} + T_{\text{aux}} +T_{\text{VPU}} + m \cdot T_{\text{aux}})$.\\
Shaded terms apply only with \poptref{high-anchors-hhh}.
}
\label{tab:worstcase}
\resizebox{\linewidth}{!}{\begin{tabular}{r|l|c|c|c|c|c|c}
 Subgraph & Required auxiliaries & $T_{\text{aux}}$ & $T_{\text{VPU}}$ & $T_{\text{count}}$ & \lateepstab{} & \origepstab{}/\softepstab{} & \lazyepstab{}\\
\hline
triangle \tri
  & \AuxvLV
  & $\deglow$
  & $\deglow$
  & $|\High|$ \bgcolortext{gray}{${}+\deglow$}
  & $m^{1/2}$ & $m^{3/2}$ & $m$ \\
3-path \tPath
  & \AuxvLV, \AuxuLv
  & $\deglow$
  & $\deglow^2$
  & $|\High|$ \bgcolortext{gray}{${}+\deglow$}
  & $m$ & $m^{3/2}$ & $m^2$ \\
paw \paw
  & \Auxt, \AuxuLv, \AuxcLV
  & $|\High| + \deglow^2$
  & $\deglow^2$
  & $\deglow^2$
  & $m^{2/3}$ & $m^{5/3}$ & $m^2$ \\
4-cycles \fCycle
  & \AuxuLv, \AuxuLLv, \AuxuHv
  & $|\High| + \deglow^2$
  & $\deglow^3 + \deglow \cdot |\High|$
  & $|\High| + \deglow^2$
  & $m$ & $m^{5/3}$ & $m^3$\\
diamond \dia
  & \AuxuLv, \AuxpLL, \AuxcL
  & $\deglow^2$
  & $\deglow^3$
  & $|\High| + \deglow^2$
  & $m$ & $m^{5/3}$ & $m^3$\\
\end{tabular}}
\end{table*}

\begin{corollary}\label{remark:lazy:worstcase}
Using \lazyepstab{}, the worst-case update time for subgraph counting is
$\bigO(m)$ for triangles, $\bigO(m^2)$ for 3-paths and paws, and
$\bigO(m^3)$ for 4-cycles and diamonds.
\end{corollary}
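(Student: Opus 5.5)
We plan to instantiate the general decomposition stated just before \cref{remark:orig-soft:worstcase}: the worst-case update time is $\bigO(T_{\text{count}} + T_{\text{MR}} + T_{\text{VPU}} + T_{\text{aux}})$. Under \lazyepstab{} no major rebalancing work is done in a single step. At the epoch boundary only the endpoints of the triggering edge are reclassified; every other misclassified vertex is handled by a VPU only when an incident edge is updated. Hence $T_{\text{MR}}$ collapses to at most two VPUs, and the bound becomes $\bigO(T_{\text{count}} + T_{\text{aux}} + T_{\text{VPU}})$ per update, with at most a constant number of VPUs.

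The key change relative to \lateepstab{} is that the degree bounds degrade. By Lemma~\ref{lemma:lazy}, $\deglow$ can be $\Theta(m)$ under an adversarial sequence, so we substitute $\deglow \in \bigO(m)$. We will also argue that $|\High| \in \bigO(m)$ trivially, since every high vertex (of positive degree) is counted against an edge. The only exception is stale high vertices of degree zero, which we exclude from iterations by \poptref{arc-border}, or else bound by $n$ and then by $m$ for non-isolated vertices.

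With these substitutions we read off the columns $T_{\text{aux}}$, $T_{\text{VPU}}$, $T_{\text{count}}$ of \cref{tab:worstcase}, combined from \cref{tab:P:hhh:aux:short}, pattern by pattern:
\begin{itemize}
\item triangle: $\deglow + \deglow + |\High| = \bigO(m)$;
\item 3-path: $T_{\text{VPU}} = \deglow^2 = \bigO(m^2)$;
\item paw: $|\High| + \deglow^2 = \bigO(m^2)$;
\item 4-cycle: $T_{\text{VPU}} = \deglow^3 + \deglow|\High| = \bigO(m^3)$;
\item diamond: $T_{\text{VPU}} = \deglow^3 = \bigO(m^3)$.
\end{itemize}
In each case the maximum term gives the claimed bound.

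The main obstacle is justifying that the per-VPU and per-edge costs of \cref{tab:P:hhh:aux:short}, derived in terms of $\deglow$, remain valid when a vertex in $\Low$ has degree far beyond the threshold. Those costs should be read with $\deglow$ as the actual maximum degree in $\Low$, not the threshold. We would check that the routines only iterate over actual neighborhoods, so the bounds hold with the true $\deglow$. We would also argue that the VPU of a stale vertex performed upon an incident update costs $\tau_{\text{VPU}}$ evaluated at its current degree, which is $\bigO(m)$. Finally, the statement is only an upper bound, so no matching lower-bound construction is needed.
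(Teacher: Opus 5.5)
Your proposal is correct and takes essentially the same route as the paper. Under \lazyepstab{} the MR itself costs $\bigO(1)$ beyond the endpoint VPUs, so the worst case is $\bigO(T_{\text{count}}+T_{\text{aux}}+T_{\text{VPU}})$, and substituting $\deglow\in\bigO(m)$ (Lemma~\ref{lemma:lazy}) into the per-pattern entries of \cref{tab:worstcase} gives the stated bounds; your explicit arguments for $|\High|\in\bigO(m)$ and for the VPU cost of stale low vertices of large degree fill in details the paper's short proof leaves implicit.
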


\begin{proof}
Under \lazyepstab{}, an MR takes $O(1)$ time, but $\deglow\notin\bigO(m^\varepsilon)$ (Lemma~\ref{lemma:lazy}) but $\deglow\in\bigO(m)$ like any degree.
Hence, the worst-case update time is $\bigO(T_{\text{count}} + T_{\text{aux}} + T_{\text{VPU}})$.
The pattern costs $T_{\text{aux}}$ and $T_{\text{VPU}}$ are listed in
\cref{tab:worstcase}.

\end{proof}

\begin{corollary}\label{lemma:late:worstcase}
With \lateepstab{} rebalancing, the worst-case update time for subgraph
counting via \hhh{} and \epsT{} is:
$\bigO(m^{1/2})$ for triangles,
$\bigO(m^{2/3})$ for paws, and
$\bigO(m)$ for 3-paths, 4-cycles, and diamonds, while preserving
amortized bounds.
\end{corollary}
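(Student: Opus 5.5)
Under \lateepstab{} no update ever runs a monolithic MR. By Lemma~\ref{lemma:late}(2), each update performs only $\bigO(1)$ VPUs: the at most two VPUs for the endpoints of the updated edge, plus the $\lceil\RebalanceFactor/(\RebalanceFactor-1)\rceil$ scheduled checks from the deferred rebalancing. Each of these VPUs touches a vertex of degree $\bigO(m^\varepsilon)$. Moreover, the invariants $\deglow\in\bigO(m^\varepsilon)$ and $|\High|\in\bigO(m^{1-\varepsilon})$ remain valid at all times, because all reclassifications are finished before the next MR and the threshold changes only by a constant factor between epochs.

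The worst-case cost of a single update is therefore $\bigO(T_{\text{count}} + T_{\text{aux}} + T_{\text{VPU}})$, with no $m\cdot T_{\text{aux}}$ term. This matches the decomposition stated before Corollary~\ref{remark:orig-soft:worstcase}, except that $T_{\text{MR}}$ is dropped and the VPU term is multiplied by a constant.

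Next I would read off $T_{\text{aux}}$, $T_{\text{VPU}}$ and $T_{\text{count}}$ per pattern by summing the per-structure costs of \autoref{tab:P:hhh:aux:short} over the auxiliaries each pattern needs (the rows of \autoref{tab:worstcase}). I then substitute $\deglow=m^{\varepsilon^*}$ and $|\High|=m^{1-\varepsilon^*}$, with $\varepsilon^*=\nicefrac12$ for triangles and 3-paths and $\varepsilon^*=\nicefrac13$ for paws, 4-cycles and diamonds.
\begin{itemize}
\item \emph{Triangles:} $|\High|+\deglow=m^{1/2}$.
\item \emph{3-paths:} the VPU term $\deglow^2$ (from \AuxuLv) gives $m$.
\item \emph{Paws:} $|\High|+\deglow^2=m^{2/3}$ dominates, since $T_{\text{VPU}}=\deglow^2=m^{2/3}$.
\item \emph{4-cycles:} $\deglow\cdot|\High|=m^{1/3}\cdot m^{2/3}=m$ from \AuxuHv{} dominates.
\item \emph{Diamonds:} $\deglow^3=m$ from \AuxpLL{} and \AuxcL{} dominates.
\end{itemize}
The shaded high-anchor terms are at most $\deglow$ or $\deglow^2$, so they do not change these maxima.

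Preservation of the amortized bounds follows from Lemma~\ref{lemma:late}(1). The total MR cost is still $\bigO(m\cdot T_{\text{aux}})$, and it is spread over $\Omega(m)$ updates, so it contributes $\bigO(T_{\text{aux}})$ amortized, exactly as for \origepstab{}. The remaining amortized terms are unchanged.

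The main obstacle is justifying that the per-structure VPU costs in \autoref{tab:P:hhh:aux:short} really are worst-case bounds when applied to a vertex being moved late. In particular, a vertex moving to $\High$ must have degree $\bigO(m^\varepsilon)$ at the moment of the move. I would argue this from Lemma~\ref{lemma:late}(2): a vertex still waiting in $\Low$ has degree below $\frac32\theta_i$ of the old epoch, which is $\bigO(m^\varepsilon)$ because $\theta_{i+1}/\theta_i$ is constant. I would also need to check that scanning $\bigO(1)$ edges per update to find the misclassified vertices costs only $\bigO(1)$ worst-case, for example by iterating over a snapshot of the edge list.
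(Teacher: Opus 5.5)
Your proposal is correct and follows the paper's proof essentially step for step. With \lateepstab{} the MR term vanishes and each update performs only $\bigO(1)$ VPUs on vertices of degree $\bigO(m^{\varepsilon})$, so the worst case is $\bigO(T_{\text{count}}+T_{\text{aux}}+T_{\text{VPU}})$; you evaluate this per pattern from \autoref{tab:P:hhh:aux:short} and \autoref{tab:worstcase} at $\varepsilon^*$, and the amortized bounds survive because the total rebalancing work is unchanged. One caution on your closing aside: literally copying a snapshot of the edge list at MR time costs $\Theta(m)$ in a single update, which breaks the $\bigO(m^{1/2})$ triangle bound, so the scan should instead advance a cursor over a live structure, for instance a linked list of non-isolated vertices (any vertex touched after the cursor has passed it is rechecked at that update anyway).
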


\begin{proof}
Under \lateepstab{}, the VPUs from an MR are spread over $\Omega(m)$
subsequent updates (Lemma~\ref{lemma:late}), so each update performs $\bigO(1)$
VPUs and no MR cost is incurred, giving a worst-case of
$\bigO(T_{\text{count}} + T_{\text{aux}} + T_{\text{VPU}})$.
The per-auxiliary costs $T_{\text{aux}}$ and $T_{\text{VPU}}$ are listed in
\cref{tab:P:hhh:aux:short}.
For each subgraph, we sum $T_{\text{aux}}$ and $T_{\text{VPU}}$ over all
required auxiliaries and substitute
$\deglow \in \bigO(m^{\varepsilon})$,
$|\High| \in \bigO(m^{1-\varepsilon})$ with the optimal $\varepsilon^*$
from~\cite{hhh_full_version}.
\cref{tab:worstcase} summarizes the results.
For example, triangles use only \AuxvLV{} with
$\varepsilon^* = \tfrac{1}{2}$:
$T_{\text{count}} + T_{\text{aux}} + T_{\text{VPU}}
= \bigO(m^{1/2} + m^{1/2} + m^{1/2}) = \bigO(m^{1/2})$.
Amortized bounds are preserved since the total rebalancing work over each
window remains unchanged.
\end{proof}

The amortized update times are computed as $\bigO(T_{\text{count}} + T_{\text{VPU}} \cdot \changesperupdate + T_{\text{aux}})$~\cite{DBLP:conf/sand/HanauerHH22}.
Hence, \softepstab{}/\lateepstab{} have the same amortized times as \origepstab{} in~\cite{DBLP:conf/sand/HanauerHH22}.
For \lazyepstab{} we can argue that $\changesperupdate \in \bigO(1/m)$ by a similar argument as Lemma~\ref{lemma:hindexc}. Hence, from~\autoref{tab:worstcase} the times in~\autoref{tab:worstcase-times} follow.
\subsection{Proof of Correctness for \egst{} 3-Path Counting}
\label{sec:egst:3path_proof}

\begin{restatable}{lemma}{egstthreepath}
\label{lemma:egstthreepath}
Assume $\Sc{0}$ and $\Sc{2}$ are maintained. Let $\{v,u\}$ be an edge with $u\in\High$. The number of 3-paths $v\text{-}u\text{-}x\text{-}y$ with $x\in\Low$ can be counted in $\bigO(h)$ time as $\Sc{0}[u] - C + \sum_{h\in\High\setminus\{v\}}\Sc{2}[u,h] - D$, where, if $v\in\Low$, $C = |\NeighL{v}| + |\{\{u,\ell\}\in E \mid \ell\in\NeighL{v}\}|$ and $D = |\{\{v,h\}\in E \mid h\in\High\}|$, and $C=D=0$ otherwise.
\end{restatable}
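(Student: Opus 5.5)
The plan is a direct inclusion--exclusion argument on the 2-paths hanging off $u$, in the same style as the proof of Lemma~\ref{lemma:hhhthreepath}. Recall from \autoref{fig:hhh_structures_new} the two structures involved. $\Sc{0}[u]$ counts paths $u\text{-}x\text{-}y$ with $x,y\in\Low$. $\Sc{2}[u,h]$ counts paths $u\text{-}x\text{-}h$ with $x\in\Low$ and $u,h\in\High$ distinct.

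\textbf{Step 1: count all candidate 2-paths.} I first count every 2-path $u\text{-}x\text{-}y$ with $x\in\Low$ and $y\neq u$. Splitting by the class of the endpoint $y$ gives $\Sc{0}[u]$ for $y\in\Low$ and $\sum_{h\in\High}\Sc{2}[u,h]$ for $y\in\High$. Prepending the edge $\{v,u\}$ to such a path yields a valid 3-path $v\text{-}u\text{-}x\text{-}y$ exactly when $x\neq v$ and $y\neq v$. So it remains to subtract the extensions that violate one of these two conditions.

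\textbf{Step 2: $v\in\High$.} Here $x=v$ is impossible, since $x\in\Low$. The case $y=v$ can only occur in the high-endpoint term, and it is excluded by summing over $h\in\High\setminus\{v\}$. Hence $C=D=0$.

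\textbf{Step 3: $v\in\Low$.} Then $y=v$ occurs only in $\Sc{0}[u]$, and the case $x=v$ appears in both terms. I handle the three resulting subcases as follows.
\begin{enumerate}[itemsep=0ex]
\item Paths $u\text{-}v\text{-}y$ with $y\in\Low$ and $y\neq u$. Since $u\in\High$, every element of $\NeighL{v}$ differs from $u$, so this gives $|\NeighL{v}|$.
\item Paths $u\text{-}x\text{-}v$ with $x\in\Low$. These correspond exactly to $\ell\in\NeighL{v}$ with $\{u,\ell\}\in E$. Together with the previous subcase this yields $C$.
\item Paths $u\text{-}v\text{-}h$ with $h\in\High$. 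These correspond to the high neighbours $h$ of $v$, which gives $D$. The sum already omits $h=v$ automatically.
\end{enumerate}
The cases $x=v$ and $y=v$ are disjoint, because $x$ and $y$ are distinct vertices of a path. Hence no double subtraction occurs.

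\textbf{Main obstacle: the term $h=u$ in $D$.} I expect the only delicate point to be how $h=u$ is treated in $D$. The edge $\{v,u\}$ has $u\in\High$, so $u$ is itself a high neighbour of $v$. The path $u\text{-}v\text{-}u$, however, is not a 2-path and is never counted by $\Sc{2}$. I would therefore first make explicit the convention on the diagonal entry $\Sc{2}[u,u]$, namely whether it is zero or whether it stores degenerate walks. Then I would verify that $D$ subtracts exactly the $h\neq u$ terms, or equivalently that the diagonal term cancels. If needed, I would adjust the statement by restricting $D$ to $h\neq u$, which is an off-by-one correction.

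\textbf{Running time.} The sum over $\High$ costs $\bigO(|\High|)=\bigO(h)$. Since $v\in\Low$ has degree $\bigO(h)$, computing $C$ and $D$ by scanning $\NeighL{v}$ and $\NeighH{v}$ with $\bigO(1)$ adjacency tests also costs $\bigO(h)$. This gives the claimed $\bigO(h)$ bound.
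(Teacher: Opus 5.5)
Your decomposition is the same as the paper's: $\Sc{0}[u]$ covers $y\in\Low$, the $\Sc{2}$-sum covers $y\in\High$, and you subtract the extensions with $x=v$ or $y=v$. Your subcases 1 and 2 reproduce $C$ correctly, and the case $v\in\High$ is fine. The point you list as the main obstacle is a genuine off-by-one error in the statement, not a question of convention. You should settle it rather than leave it conditional, because subcase 3 as you wrote it (``these correspond to the high neighbours $h$ of $v$, which gives $D$'') is false.

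A 2-path $u\text{-}x\text{-}h$ requires $h\neq u$. So the invalid terms with $x=v$ in the $\Sc{2}$-sum are the pairs $(v,h)$ with $h\in\NeighH{v}\setminus\{u\}$. Since $u\in\NeighH{v}$, there are $|\NeighNH{u}{v}|=D-1$ of them, not $D$.

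This uses the convention $\Sc{2}[u,u]=0$, and that convention is forced. Suppose instead $\Sc{2}[u,u]$ counted the degenerate walks $u\text{-}x\text{-}u$, so that $\Sc{2}[u,u]=\DegreeL{u}$. Then the sum would contain $\DegreeL{u}$ spurious terms with $y=u$, and the formula would already fail for $v\in\High$.

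Here is a concrete counterexample to the statement as given. Let $u\in\High$ have exactly two neighbours $v,w\in\Low$, with no other edges; this configuration arises under the \hInd{} partition with $h=1$. There is no 3-path $v\text{-}u\text{-}x\text{-}y$. Yet $\Sc{0}[u]=C=0$ and $D=1$, so the expression evaluates to $-1$, or to $+1$ under $\Sc{2}[u,u]=\DegreeL{u}$.

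The paper's proof makes the same slip. It asserts that the number of invalid paths with $x=v$ is $|\{h\in\High\mid\{v,h\}\in E\}|$, overlooking that $u$ itself lies in this set. Your argument, completed with the correction you anticipated, proves the corrected identity: take $\Sc{2}[u,u]=0$ (equivalently, sum over $\High\setminus\{u,v\}$) and $D=|\NeighH{v}|-1$ when $v\in\Low$. Your $\bigO(h)$ running-time argument is unaffected.
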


\begin{proof}
    To count paths with $y\in\Low$, recall that $\Sc{0}[u]$ counts paths $u\text{-}x\text{-}y$ with $x,y\in\Low$. Two cases counted in $\Sc{0}[u]$ do not yield valid 3-paths $v\text{-}u\text{-}x\text{-}y$: (1) $x=v$ and (2) $y=v$. Case (1) contributes $|\NeighL{v}|$ invalid paths if $v\in\Low$, and 0 otherwise. Case (2) contributes $|\{\ell\in\NeighL{v}\mid\{u,\ell\}\in E\}|$ if $v\in\Low$, and 0 otherwise. Hence, paths with $y\in\Low$ are counted by $\Sc{0}[u]-C$ where $C=|\NeighL{v}|+|\{\ell\in\NeighL{v}\mid\{u,\ell\}\in E\}|$ if $v\in\Low$ and 0 otherwise.
    
    To count paths with $y\in\High$, recall that $\Sc{2}[u,h]$ counts vertices $x\in\Low$ adjacent to both $u$ and $h\in\High$. Summing $\Sc{2}[u,y]$ over $y\in\High\setminus\{v\}$ counts all such paths, but includes invalid ones where $x=v$. The number of these is $|\{h\in\High\mid\{v,h\}\in E\}|$ if $v\in\Low$, and 0 otherwise. Hence, paths with $y\in\High$ are counted by $\sum_{h\in\High\setminus\{v\}}\Sc{2}[u,h]-D$ where $D=|\{h\in\High\mid\{v,h\}\in E\}|$ if $v\in\Low$ and 0 otherwise.
    
    The running time is $\bigO(|\High|+\deglow)=\bigO(h)$, as we iterate over all elements of $\High$ or neighbors of low-degree vertices.
\end{proof}

\subsection{Direct Partition Swaps of the \egst{} Algorithm}
\label{sec:egst:direct:details}

As an alternative to performing partition swaps via simulating repeated edge updates~\cite{DBLP:journals/tcs/EppsteinGST12}, we describe how to update auxiliary structures directly by scanning the 2-hop neighborhood of the moving vertex.
Consider a vertex $v$ moving from $\Low$ to $\High$ (the reverse case is symmetric). Note that upon a partition change $\Degree{v}\in\bigO(\deglow)$ and for the \hInd{} partition $\deglow\in\bigO(h)$. For each structure $\Sc{i}$, we first reduce counts where $v$ served as a low-degree non-anchor vertex, then increase counts where $v$ becomes an anchor. 
All partition swaps have the same running time as in \cite{DBLP:journals/tcs/EppsteinGST12} with $\bigO(h^2)$ for $\Sc{0}-\Sc{2}$ and $\bigO(h^3)$ for $\Sc{3}-\Sc{7}$.

\begin{enumerate}[label=$S_{\arabic*}$, start=0]

\item \DrawSZero\ (anchor $a\in\High$, path $a$-$b$-$c$ with $b,c\in\Low$).
\emph{Reduce:} for each $b\in\NeighL{v}$ and $h\in\High$, reduce $\Sc{0}[h]$ by $[\{b,h\}\in E]$ (case $v=c$) and by $[\{v,h\}\in E]$ (case $v=b$).
\emph{Increase:} for each $b\in\NeighL{v}$, increase $\Sc{0}[v]$ by $\DegreeL{b}$.
Time: $\bigO(\deglow\cdot|\High|)$.

\item \DrawSOne\ (anchor $a\in\High$, triangle $a$-$b$-$c$ with $b,c\in\Low$).
\emph{Reduce} ($v=b$ by symmetry): for each $c\in\NeighL{v}$ and $a\in\NeighH{v}$ with $\{c,a\}\in E$, reduce $\Sc{1}[a]$ by 1.
\emph{Increase} ($v=a$): for each $b\in\NeighL{v}$, increase $\Sc{1}[v]$ by $|\{c\in\NeighL{b}:\{c,v\}\in E\}|$.
Time: $\bigO(\deglow^2)$.

\item \DrawSTwo\ (anchors $a,c\in\High$, path $a$-$b$-$c$ with $b\in\Low$).
\emph{Reduce} ($v=b$): for each pair $a,c\in\NeighH{v}$, reduce $\Sc{2}[a,c]$ by 1.
\emph{Increase} ($v=a$): for each $b\in\NeighL{v}$ and $c\in\NeighH{b}$, increase $\Sc{2}[v,c]$ by 1.
Time: $\bigO(\deglow^2)$.

\item \DrawSThreeH\ (anchors $a,d\in\High$, path $a$-$b$-$c$-$d$ with $b,c\in\Low$).
\emph{Reduce} ($v=b$): for each $a\in\NeighH{v}$, $c\in\NeighL{v}$, $d\in\NeighH{c}$, reduce $\Sc{3}[a,d]$ by 1.
\emph{Reduce} ($v=c$): for each $b\in\NeighL{v}$, $a\in\NeighH{b}$, $d\in\NeighH{v}$, reduce $\Sc{3}[a,d]$ by 1.
\emph{Increase} ($v=a$): for each $b\in\NeighL{v}$, $c\in\NeighL{b}$, $d\in\NeighH{c}$, increase $\Sc{3}[v,d]$ by 1.
Time: $\bigO(\deglow^3)$.
When anchors may have any degree~\cite{DBLP:journals/tcs/EppsteinGST12}, replace $\NeighH{\cdot}$ by $\Neigh{\cdot}$ at anchor positions; the bound remains $\bigO(\deglow^3)$.

\item \DrawSFour\ (anchors $a,d\in\High$, non-anchors $b,c\in\Low$, $b$ central).
\emph{Reduce} ($v=b$): for each $a,d\in\NeighH{v}$ and $c\in\NeighL{v}$, reduce $\Sc{4}[a,d]$ by 1.
\emph{Reduce} ($v=c$): for each $b\in\NeighL{v}$ and $a,d\in\NeighH{b}$, reduce $\Sc{4}[a,d]$ by 1.
\emph{Increase} ($v=a$): for each $b\in\NeighL{v}$, $c\in\NeighL{b}$, $d\in\NeighH{b}$, increase $\Sc{4}[v,d]$ by 1.
Time: $\bigO(\deglow^3)$.

\item \DrawSFive\ (anchors $a,c\in\High$ with $a$ in triangle, non-anchors $b,d\in\Low$, $b$ central).
\emph{Reduce} ($v=b$): for each $a,c\in\NeighH{v}$ and $d\in\NeighL{v}$ with $\{a,d\}\in E$, reduce $\Sc{5}[a,c]$ by 1.
\emph{Reduce} ($v=d$): for each $b\in\NeighL{v}$ and $a,c\in\NeighH{b}$ with $\{a,v\}\in E$, reduce $\Sc{5}[a,c]$ by 1.
\emph{Increase} ($v=a$): for each $b,d\in\NeighL{v}$ with $\{b,d\}\in E$ and $c\in\NeighH{b}$, increase $\Sc{5}[v,c]$ by 1.
\emph{Increase} ($v=c$): for each $b\in\NeighL{v}$, $a\in\NeighH{b}$, $d\in\NeighL{b}$ with $\{a,d\}\in E$, increase $\Sc{5}[a,v]$ by 1.
Time: $\bigO(\deglow^3)$.

\item \DrawSSix\ (anchors $a,c\in\High$, non-anchors $b,d\in\Low$).
\emph{Reduce} ($v=b$ by symmetry): for each $d\in\NeighL{v}$ and $a,c\in\NeighH{v}$ with $\{a,d\}\in E$ and $\{c,d\}\in E$, reduce $\Sc{6}[a,c]$ by 1.
\emph{Increase} ($v=a$): for each $b,d\in\NeighL{v}$ with $\{b,d\}\in E$ and $c\in\NeighH{b}$ with $\{c,d\}\in E$, increase $\Sc{6}[v,c]$ by 1.
Time: $\bigO(\deglow^3)$.

\item \DrawSSeven\ (anchors $a,c,d\in\High$, non-anchor $b\in\Low$).
\emph{Reduce} ($v=b$): for each triple $a,c,d\in\NeighH{v}$, reduce $\Sc{7}[a,c,d]$ by 1.
\emph{Increase} ($v=a$): for each $b\in\NeighL{v}$ and $c,d\in\NeighH{b}$, increase $\Sc{7}[v,c,d]$ by 1.
Time: $\bigO(\deglow^3)$.

\end{enumerate}
\subsection{Proof of the Gradual Factor for the \hInd{} Partition}
\label{sec:h_index:c:proof}

\begin{restatable}{lemma}{hindexc}
\label{lemma:hindexc}

    Let $h$ be the \hInd{} after an update operation.
    If a vertex $v$ moves from $\Low\to\High$ when $\Degree{v}\ge c\cdot h$ for any $c>1$ and from $\High\to\Low$ when $\Degree{v}<h$, the resulting partition has $\changesperupdate\in\bigO(1/h)$ amortized changes per update.
\end{restatable}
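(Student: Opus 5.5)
We will use a potential (charging) argument in the style of Eppstein and Spiro~\cite{DBLP:journals/jgaa/EppsteinS12}, with the constant $2$ replaced by $c>1$. The key fact about the \hInd{} is that one edge update changes two degrees by one each. Hence $h$ changes by at most one per update: removing an edge can lower the number of vertices with degree $\ge h$ by at most two, and we check that $h$ drops by at most one.

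The plan is to charge each partition change to $\Omega(h)$ degree changes at the moving vertex since its previous change. Consider first a move $v:\High\to\Low$ at time $t_1$, so $\Degree{v}<h_{t_1}$, and suppose $v$ next moves $\Low\to\High$ at time $t_2$, so $\Degree{v}\ge c\,h_{t_2}$. If $h_{t_2}\ge h_{t_1}/\alpha$ for a suitable constant $\alpha$ with $c/\alpha>1$, then $v$'s degree grew by at least $(c/\alpha-1)h_{t_1}$. That is $\Omega(h)$ updates incident to $v$, provided $c-1$ is treated as a constant.

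Symmetrically, after a move $\Low\to\High$ at $t_1$ (degree $\ge c h_{t_1}$) followed by $\High\to\Low$ at $t_2$ (degree $<h_{t_2}$), degree dropped by $c h_{t_1}-h_{t_2}$. This is $\Omega(h)$ whenever $h_{t_2}\le h_{t_1}\cdot c'$ for some $1<c'<c$.

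The problematic cases are when $h$ itself changes by a constant factor between consecutive moves of $v$. To handle them, we charge such moves to the updates that changed $h$. Since $h$ changes by at most one per update, a factor-$\beta$ change of $h$ requires $\Omega(h)$ updates. It remains to bound how many vertices can be charged to one such window. At any time the vertices that could move $\Low\to\High$ have degree $\ge c h$, and at most $h+O(1)$ vertices have degree $>h$. The vertices in $\High$ number $O(h)$, since each has degree $\ge h$ or recently had degree $\ge c h'$ with $h'=\Theta(h)$. Therefore only $O(h)$ vertices are charged across a window of $\Omega(h)$ updates, which costs $O(1)$ per update. This only shows $O(1)$ amortized so far, so we must sharpen the argument.

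The sharpening is to use a potential $\Phi=\sum_v \phi(v)/h$. Here $\phi(v)$ measures how far $v$ is, in degree, from its next threshold relative to its current side: $\phi(v)=\max(0,\Degree{v}-h)$ for $v\in\Low$, and $\max(0, c h-\Degree{v})$ for $v\in\High$, suitably scaled. An edge update changes two degrees, contributing $O(1/h)$ to $\Phi$. A change of $h$ by one shifts $\phi$ only for vertices with degree near $h$ (of which there are $O(h)$), contributing $O(h)/h=O(1)$ to $\Phi$. We expect that step to be the main obstacle. The remedy is to choose $\phi$ with a dead zone of width $\Theta((c-1)h)$ and to average over the $\Omega(h)$ updates needed to move $h$ out of a constant window, so that the amortized $\Phi$-increase is $O(1/h)$. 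Each partition change releases $\Omega(1)$ potential, because at that moment the vertex sits at the far end of its dead zone. Hence $\changesperupdate\in\bigO(1/h)$, with the constant depending on $1/(c-1)$.
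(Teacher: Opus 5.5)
Your charging of moves during which $h$ stays within a constant factor is fine. The gap is at the step you flag as the main obstacle. The only control you use on how fast the \hInd{} drifts is that $h$ changes by at most one per update, so a constant-factor change of $h$ takes $\Omega(h)$ updates. That is a factor $h$ too weak for both of your arguments. In the window-charging version you already see that $\bigO(h)$ charged vertices per $\Omega(h)$ updates gives only $\bigO(1)$.

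The potential does not change this arithmetic. After dividing by $h$, $\Phi$ has total size up to $\Theta(h)$: at most $h$ low vertices have degree $>h$, $\High$ has $\bigO(h)$ vertices, and each contributes up to $\Theta(1)$. A constant-factor drift of $h$ consists of $\Theta(h)$ unit changes, each worth $\bigO(1)$ by your own estimate. A dead zone of width $\Theta((c-1)h)$ with a frozen reference value only postpones this cost to the moment $h$ leaves its window and the reference is reset. At that moment $\Phi$ can really jump by $\Theta(h)$. For example, if $h$ falls from $h_0$ to $h_0/\gamma$, each of up to $\Theta(h_0)$ low vertices whose degree sits at the edge of the dead zone for $h_0$ (and still below $c\,h_0/\gamma$) goes from potential $0$ to $\Theta(1)$. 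Dividing $\Theta(h)$ by $\Omega(h)$ updates gives $\Theta(1)$ per update. So your concluding claim that the amortized $\Phi$-increase is $\bigO(1/h)$ does not follow.

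The missing ingredient is the quadratic bound of Eppstein and Spiro~\cite{DBLP:journals/jgaa/EppsteinS12} (their Lemma~1, which the paper's proof invokes): moving the \hInd{} from $h$ to $h'$ requires $\Omega((h'-h)^2)$ updates.
\begin{itemize}
\item For an increase, at most $h$ vertices have degree $>h$ beforehand, so at least $h'-h$ vertices must each gain at least $h'-h$ in degree.
\item For a decrease, at least $h-h'$ of the $h$ vertices of degree $\ge h$ must each fall to degree $\le h'$.
\end{itemize}
Hence a constant-factor drift of $h$ costs $\Omega(h^2)$ updates. Either your window charging or the reset jump of your potential then amortizes to $\bigO(h)/\Omega(h^2)=\bigO(1/h)$.

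The paper builds this fact directly into credits rather than a potential. Each insertion gives every vertex of $\High$ a credit of $1/((c-1)^2h_i^2)$, for a total of $\bigO(|\High|/h^2)=\bigO(1/h)$. Each deletion gives its high endpoint $1/((c-1)h_i)$. A vertex leaving $\High$ has then collected $\Omega(1)$:
\begin{itemize}
\item if $h$ exceeded $ch$ in the meantime, from the $\Omega(h^2)$ insertions this required;
\item otherwise, from the insertions that raised $h$ to its maximum $h'\le ch$, together with the more than $ch-h'$ incident deletions that brought its degree below $h'$.
\end{itemize}
Also, only $\High\to\Low$ moves need to be charged: the number of $\Low\to\High$ moves exceeds it by at most the current size of $\High$. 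This spares you the case of low vertices promoted merely because $h$ decreased.
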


\begin{proof}
    We follow the proof scheme in ~\cite{DBLP:journals/jgaa/EppsteinS12}.
    Note that~\cite[Lemma 1]{DBLP:journals/jgaa/EppsteinS12} states that a sequence of operations during which $|\High|$ changes from $h$ to $h^\prime>h$  includes at least $(h^\prime-h)^2$ insert operations.

    Hence, when the \hInd{} changes from $h$ to $h^\prime=ch$ this requires at least $(c-1)^2 h^2$ insert operations.

    We consider the case that a vertex $v$ is removed from $\High$. Let $h$ be the \hInd{} when $v$ was inserted into $\High$. Let $h^\prime = \max(h,h_i)$ for any \hInd{} $h_i$ in an operation between the insertion and removal from $\High$.

    The authors of~\cite{DBLP:journals/jgaa/EppsteinS12} provided a charging scheme to show that any vertex in $\High$ remains in $\High$ for a series of updates. We adapt this charging scheme: For every edge insert we give each element $x\in\High$ a credit of $\frac{1}{(c-1)^2 h_i^2}$ and for every edge removal $\frac{1}{(c-1)h_i}$ adjacent to an $x\in\High$ to that $x$, where $h_i$ is the \hInd{} after the operation.

    To show that any $v$ removed from $\High$ still must have accumulated $\Omega(1)$ credits ~\cite[Lemma 2]{DBLP:journals/jgaa/EppsteinS12}, we consider two cases: (1) the \hInd{} grew over $c h$, then $v$ has credit at least $\frac{(c-1)^2 h^2}{(c-1)^2 (c h)^2} = \Omega(1)$ for any constant $c>1$. Now assume that, (2) the \hInd{}  was always at most $c h$. Then $v$ received at least $\frac{(h^\prime - h)^2}{(c-1)^2 (c h)^2}$ credit for insertions and  $\frac{ch - h^\prime}{c h}$ credit for deletions.
    Note that by definition $h \le h' \le ch$, then $\frac{(h^\prime - h)^2}{(c-1)^2 (c h)^2} + \frac{ch - h^\prime}{c h} = \Omega(1)$ for any constant $c>1$.
    
    Given that
    any element removed from $\High$ has gained $\Omega(1)$ credits,
    the proof that the partition has $\changesperupdate\in \bigO(1/h)$ amortized changes to the partition per update follows directly from~\cite[Theorem 2]{DBLP:journals/jgaa/EppsteinS12}. 
\end{proof}

\subsection{The \egst{} Algorithm with an Abstract Partition}
\label{sec:abstract_p:egst:deviation}
We follow~\cite{DBLP:journals/tcs/EppsteinGST12} and provide a detailed description of every update step to derive the amortized update times for each pattern. To this end we start with the auxiliary structures and use these to derive the patterns.

\paragraph{Auxiliary structure updates.} The amortized update time consists of edge updates which happen every update and partition swaps that only happen at an amortized rate of $\changesperupdate$ changes per update.
An edge update $e=\{u,v\}$ affects auxiliary structures only if at least one endpoint is in $\Low$~\cite{DBLP:journals/tcs/EppsteinGST12}. When exactly one endpoint is low, we assume \wlogtext{} that $u\in\Low$ and $v\in\High$. 
Below we describe only edge insertions; deletions are symmetric.
\begin{enumerate}[label=$S_{\arabic*}$, start=0]
\item \DrawSZero: If $u,v\in\Low$, for each $h\in \High$, increase $\Sc{0}[h]$ by 2 if $h\in\NeighH{u}\cap\NeighH{v}$, or 1 if $h$ is in exactly one of $\NeighH{u},\NeighH{v}$.
If $v\in\High$, increase $\Sc{0}[v]$ by $\DegreeL{u}$. Time: $\bigO(|\High|)$.

\item \DrawSOne: If $u,v\in\Low$, for each $h\in \NeighH{u}$ with $\{h,v\}\in E$ increase $\Sc{1}[h]$ by 1. If $v\in\High$, for each $\ell\in \NeighL{u}$ with $\{\ell,v\}\in E$, increase $\Sc{1}[v]$ by 1. Time: $\bigO(\deglow)$.

\item \DrawSTwo: If $u,v\in\Low$, no update is needed. If $v\in\High$, increase $\Sc{2}[v,h]$ by 1 for each $h\in \NeighNH{v}{u}$. Time: $\bigO(\deglow)$.

\item \DrawSThreeH: We first consider the variant where anchors must be high-degree (Section~\ref{egst:engineering}). If $u,v\in\Low$, for each pair $h_u \in \NeighH{u}$, $h_v \in \NeighH{v}$ with $h_u\neq h_v$, increase $\Sc{3}[h_u,h_v]$ by 1. If $v\in\High$, for each $\ell\in\NeighL{u}$ and $h \in \NeighH{\ell}$, increase $\Sc{3}[v,h]$ by 1. Time: $\bigO(\deglow^2)$. In the original formulation~\cite{DBLP:journals/tcs/EppsteinGST12}, anchors may have any degree; the update is analogous with $\Neigh{\cdot}$ replacing $\NeighH{\cdot}$ at anchor positions, also taking $\bigO(\deglow^2)$ time.

\item \DrawSFour: If $u,v\in\Low$, for each pair $h_1,h_2 \in \NeighH{u}$, increase $\Sc{4}[h_1,h_2]$ by 1; symmetrically for $\NeighH{v}$. If $v\in\High$, for each $h\in \NeighNH{v}{u}$, increase $\Sc{4}[v,h]$ by $\DegreeL{u}$. Time: $\bigO(\deglow^2)$.

\item \DrawSFive: If $u,v\in\Low$, for each pair $h_1,h_2 \in \NeighH{u}$ with $\{h_1,v\}\in E$, increase $\Sc{5}[h_1,h_2]$ by 1; symmetrically for $\NeighH{v}$ with adjacency to $u$. If $v\in\High$, for each $h\in \NeighNH{v}{u}$ and $\ell \in \NeighL{u}$, increase $\Sc{5}[v,h]$ by $[\{\ell,v\}\in E] + [\{\ell,h\}\in E]$, accounting for $e$ being in the triangle or the arm, respectively. Time: $\bigO(\deglow^2)$.

\item \DrawSSix: If $u,v\in\Low$, for each pair $h_1,h_2 \in \NeighH{u}\cap\NeighH{v}$, increase $\Sc{6}[h_1,h_2]$ by 1. If $v\in\High$, for each $h\in \NeighNH{v}{u}$ and $\ell \in \NeighL{u}$ with $\{\ell,v\}\in E$ and $\{\ell,h\}\in E$, increase $\Sc{6}[v,h]$ by 1. Time: $\bigO(\deglow^2)$.

\item \DrawSSeven: If $u,v\in\Low$, no update is needed. If $v\in\High$, for each pair $h_1,h_2\in \NeighNH{v}{u}$, increase $\Sc{7}[v,h_1,h_2]$ by 1. Time: $\bigO(\deglow^2)$.
\end{enumerate}
Partition swaps are handled by simulating a deletion of all incident edges followed by reinsertion,
costing $\bigO(\deglow \cdot \mathcal{T})$ where $\mathcal{T}$ is the per-edge update time for each structure~\cite{DBLP:journals/tcs/EppsteinGST12} or by the practical optimization as given in \autoref{sec:egst:direct:details} resulting in the same time.
\begin{lemma}
The amortized update time for auxiliary structures under partition \Partition{} is
$\bigO(|\High| + \deglow|\High|\cdot \changesperupdate)$ for \Sc{0},
$\bigO(\deglow + \deglow^2\cdot \changesperupdate)$ for \Sc{1}-\Sc{2}, and
$\bigO(\deglow^2 +\deglow^3\cdot \changesperupdate)$ for \Sc{3}-\Sc{7}.
\end{lemma}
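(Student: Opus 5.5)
The amortized bound is the sum of two contributions. The first is the per-update edge cost $\mathcal{T}_i$ for structure $\Sc{i}$, which is paid on every graph update. The second is the cost of partition swaps, which occur at an amortized rate of $\changesperupdate$ per update. So for each $\Sc{i}$ I would establish
\[
\text{amortized time} \in \bigO\bigl(\mathcal{T}_i + T_{\text{swap},i}\cdot\changesperupdate\bigr),
\]
where $T_{\text{swap},i}$ bounds the cost of moving a single vertex between $\High$ and $\Low$. The proof then reduces to reading off $\mathcal{T}_i$ from the case analysis of edge insertions given just above (deletions being symmetric) and bounding $T_{\text{swap},i}$.

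For the edge costs, I would check each case listed above.
\begin{itemize}
\item For $\Sc{0}$, the case $u,v\in\Low$ iterates over $\High$, giving $\mathcal{T}_0\in\bigO(|\High|)$. The case $v\in\High$ is $\bigO(1)$, provided $\DegreeL{u}$ is maintained alongside the partitioned neighbourhoods of \poptref{arc-border}.
\item For $\Sc{1}$ and $\Sc{2}$, only one neighbourhood of the low endpoint $u$ is scanned, so $\mathcal{T}\in\bigO(\deglow)$.
\item For $\Sc{3}$--$\Sc{7}$, the loops range over pairs drawn from neighbourhoods of low-degree vertices, or over a 2-hop walk through one low vertex. This gives $\mathcal{T}\in\bigO(\deglow^2)$. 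Adjacency tests cost $\bigO(1)$ expected time via hashing.
\end{itemize}

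For swaps, I would use the simulation argument. A vertex $v$ changing partition has $\Degree{v}\in\bigO(\deglow)$ at the moment of the change, because it crosses the threshold only while its degree is near the boundary. The swap is realized by deleting and reinserting its $\bigO(\deglow)$ incident edges, each at cost $\mathcal{T}_i$, so $T_{\text{swap},i}\in\bigO(\deglow\cdot\mathcal{T}_i)$. The direct swap of \autoref{sec:egst:direct:details} stays within the same bounds. This yields $\deglow|\High|$ for $\Sc{0}$ (consistent with the direct bound stated there), $\deglow^2$ for $\Sc{1}$--$\Sc{2}$, and $\deglow^3$ for $\Sc{3}$--$\Sc{7}$. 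Multiplying by $\changesperupdate$ and adding $\mathcal{T}_i$ gives the claimed terms.

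The main obstacle is justifying $\Degree{v}\in\bigO(\deglow)$ at swap time for an abstract partition. For $\High\to\Low$ moves the bound holds immediately after the move, since $v$ is then low. For $\Low\to\High$ moves I would argue that $v$ was in $\Low$ one update earlier, so $\Degree{v}\le\deglow+1$. The reverse direction deserves care. The simulated edge updates must be evaluated with $v$ in the state where its neighbourhood is small, meaning deletions are done while $v$ is still low, or the direct 2-hop scan is used. Either way, the degree of $v$ is the low one, which is $\bigO(\deglow)$ by the definition of $\deglow$ as a bound over the degrees of low vertices. This is the only place where the abstract partition needs an assumption beyond the definitions, namely that swaps happen at the threshold, as is the case for both the \epsT{} and the \hInd{}.
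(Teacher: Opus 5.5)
Your proposal is correct and follows the same route as the paper, which likewise reads off the per-edge costs $\mathcal{T}_i$ from the case analysis for $\Sc{0}$--$\Sc{7}$, bounds a partition swap by simulating deletion and reinsertion of the $\bigO(\deglow)$ incident edges (or, equivalently, by the direct 2-hop scan), and charges this swap cost at the amortized rate $\changesperupdate$. Your explicit argument that a swapping vertex has degree at most $\deglow+1$ is a useful detail the paper only asserts, and since it follows directly from the definition of $\deglow$, the extra ``swaps happen at the threshold'' assumption in your closing remark is not actually needed.
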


\paragraph{Subgraph count updates.}
We describe how updating edge $e=\{u,v\}$ affects each subgraph count. Below we describe only edge insertions resulting in increase of the count; deletions are symmetric and decrease the count. In case at least one endpoint is in $\Low$, we assume this is $u$. Let $\bigO(\Sc{i})$ refer to the time it takes to maintain the auxiliary structure \Sc{i}.
 
\begin{enumerate}[label=\textbf{\alph*)}]
\renewcommand{\labelenumi}{}

\item[\claw] (claw): Increase the count by $\binom{\Degree{u}-1}{2} + \binom{\Degree{v}-1}{2}$, since each claw containing $e$ has its center at $u$ or $v$. Time: $\bigO(1)$.

\item[\tri] (triangle):\label{label_triangle} Increase the count by the number of vertices $x$ adjacent to both $u$ and $v$. If $u,v\in\High$, use $\Sc{2}[u,v]$ to count $x\in\Low$, and additionally for each $x\in\NeighH{u}$ checking adjacency to $v$. If $u\in\Low$, for each $x\in\Neigh{u}$ and test $\{x,v\}\in E$. Time: $\bigO(|\High|+\deglow) + \bigO(\Sc{2})=\bigO(|\High|+\deglow+\deglow^2\cdot\changesperupdate)$.

\item[\tPath] (3-path): If $e$ is the center edge ($x$--$u$--$v$--$y$), increment by $(\Degree{u}-1)(\Degree{v}-1)$ and subtract triangles through $e$, which are counted as in \tri{} above. If $e$ is a leaf edge, consider both orientations (once with swapped roles of $u$ and $v$): Assuming $v$--$u$--$x$--$y$, for each $x\in\NeighN{v}{u}$ we need to increment by $\Degree{x}-1-[\{x,v\}\in E]$. If $u\in\Low$, iterate over all such $x$ directly. If $u\in\High$, handle $x\in\High$ by iteration over $\High$ and $x\in\Low$ via Lemma~\ref{lemma:egstthreepath} using $\Sc{0}$ and $\Sc{2}$. Time: $\bigO(|\High|+\deglow) + \bigO(\Sc{0},\Sc{2})=\bigO(|\High|+\deglow+\deglow^2\cdot\changesperupdate)$.

\item[\paw] (paw): If $e$ is the arm, consider both orientations, once with swapped roles for $u$ and $v$: Assume $v$--$u$--$x$--$y$ with $\{u,y\}\in E$ and count triangles containing $u$ but not $e$. If $u\in\Low$, for each pair $x,y\in\NeighN{v}{u}$ with $\{x,y\}\in E$, increment by 1.
If $u\in\High$, use $\Sc{1}[u]$ for triangles with two low vertices; since $\Sc{1}[u]$ counts all such triangles through $u$, when $v\in\Low$ we must subtract $|\{x\in\NeighL{v}:\{x,u\}\in E\}|$ to exclude triangles that contain $v$ (as those involve $e$). For triangles with exactly one low vertex, sum $\Sc{2}[u,h]$ over $h\in\NeighNH{v}{u}$, where each $\Sc{2}[u,h]$ counts low vertices adjacent to both $u$ and $h$; again when $v\in\Low$, subtract 1 for each such $h$ with $\{h,v\}\in E$ to exclude triangles through $e$. For all-high triangles, iterate over pairs in $\NeighNH{v}{u}$ checking mutual adjacency directly.

If $e$ is in the triangle ($u$--$v$--$x$--$u$ with arm to $y$), find common neighbors $x$ of $u,v$; each such triangle contributes $\Degree{u}+\Degree{v}+\Degree{x}-6$ paws, one for each choice of an arm edge from $\{u,v,x\}$ to a vertex outside the triangle.
If at least one endpoint of $e$ is in $\Low$, this is handled by neighbor iteration. If $u,v\in\High$, let $n_t$ count the triangles through $e$. We split the contribution $\Degree{x}-2$ (arms from $x$) from $\Degree{u}+\Degree{v}-4$ (arms from $u$ or $v$), accumulating the latter globally over all triangles. For $x\in\Low$: increase $n_t$ by $\Sc{2}[u,v]$, which counts low vertices adjacent to both $u$ and $v$; for the $\Degree{x}-2$ arms, use $\Sc{4}[u,v]$ to count those with $y\in\Low$ 
and sum $\Sc{7}[u,v,y]$ over $y\in\High$.
For $x\in\High$: iterate over $x\in\NeighH{u}$, and if $\{x,v\}\in E$, increase $n_t$ by 1 and the paw count by $\Degree{x}-2$. Finally, increase the paw count by $n_t\cdot(\Degree{u}+\Degree{v}-4)$, accounting for all arms originating at $u$ or $v$ across all $n_t$ triangles.

Time: $\bigO(|\High|^2+\deglow^2) + \bigO(\Sc{1},\Sc{2},\Sc{4},\Sc{7})=\bigO(|\High|^2 + \deglow^2 + \deglow^3\cdot\changesperupdate)$.

\item[\fCycle] (4-cycle): Let $x,y$ be the two vertices not in $e$, forming the path $u$--$x$--$y$--$v$. We case-split on how many of $x,y$ lie in $\High$.

If $x,y\in\Low$: Use $\Sc{3}[u,v]$ which directly counts such 2-paths among low vertices. When applying \poptref{high-anchors-egst}, $\Sc{3}$ is only available for high-degree anchors, so if $u\in\Low$ we instead iterate over 2-paths among low-degree neighbors of $u$, costing an additional $\bigO(\deglow^2)$.

If exactly one of $x,y$ is high, say $x\in\High$: iterate over $h\in\NeighH{u}$ as candidates for $x$. When $v\in\High$, use $\Sc{2}[h,v]$ to count low vertices $y$ adjacent to both $h$ and $v$, correcting by subtracting 1 when $u\in\Low$ and $\{h,v\}\in E$ to exclude paths through $u$. When $v\in\Low$, instead check $\{h,x\}\in E$ for each $x\in\NeighNL{u}{v}$.

If $x,y\in\High$: iterate over all pairs $h_1\in\NeighH{u}$, $h_2\in\NeighH{v}$ and increase by 1 whenever $\{h_1,h_2\}\in E$.

Time with \poptref{high-anchors-egst}: $\bigO(\deglow^2+|\High|\cdot\deglow+|\High|^2) + \bigO(\Sc{2},\Sc{3})=\bigO(|\High|\cdot\deglow + \deglow^2 +|\High|^2 + \deglow^3\cdot\changesperupdate)$.

\item[\dia] (diamond): Edge $e$ is either the diagonal or a side of the outer 4-cycle; we sum both.

\emph{$e$ is the diagonal.} We need pairs of common neighbors of $u,v$. If $u\in\Low$, iterate over pairs in $\NeighN{v}{u}$ checking adjacency to $v$. If $u,v\in\High$, compute the total common-neighbor count $cn$ via $\Sc{2}[u,v]$ (low common neighbors) plus scanning $\NeighH{u}$ (high common neighbors); we then increase the diamond count by $\binom{cn}{2}$.

\emph{$e$ is on the outer cycle.} Let $u$--$x$--$y$--$v$ form the outer cycle. Then either $u$--$y$ or $v$--$x$ form the diagonal.\\
\emph{$x,y\in\High$:} Iterate over pairs in $\NeighNH{v}{u}$ (and $\NeighNH{u}{v}$), checking mutual adjacency and adjacency to the opposite endpoint.\\
\emph{At least one of $x,y$ in $\Low$:}
If $u,v\in\High$: 
We further distinguish (1) to count exactly one of $x,y$ low.
For $x\in\High$, iterate over $x\in\NeighNH{v}{u}$: if $v$--$x$ is the diagonal, require $\{x,u\}\in E$ and use $\Sc{2}[x,v]$; if $u$--$y$ is the diagonal, use $\Sc{7}[u,v,x]$. Symmetrically for $y\in\High$, iterate over $y\in\NeighNH{u}{v}$: if $u$--$y$ is the diagonal, require $\{y,v\}\in E$ and use $\Sc{2}[u,y]$; if $v$--$x$ is the diagonal, use $\Sc{7}[v,u,y]$.
And (2) to count both $x,y\in\Low$, use $\Sc{5}[u,v]$.\\
If $u\in\Low$: For diamonds with diagonal $x$--$y$, iterate over $x\in\NeighN{v}{u}$ with $\{x,y\}\in E$, count common neighbors of $x,y$ excluding $u$: if $x,y\in\High$ use $\Sc{2}[x,y]-1$; otherwise enumerate neighbors of the low endpoint checking adjacency. For diamonds with diagonal $u$--$y$, iterate over pairs $x,y\in\NeighN{v}{u}$ with $\{x,y\},\{y,v\}\in E$.

Time: $\bigO(|\High|^2+\deglow^2) + \bigO(\Sc{2},\Sc{5},\Sc{7})=\bigO(|\High|^2 + \deglow^2 +\deglow^3\cdot\changesperupdate)$.

\item[\fClique] (4-clique): Find two vertices $x,y$ such that $\{u,v,x,y\}$ forms a complete graph. If $u\in\Low$, iterate over pairs of neighbors $x,y$ of $u$ and check $\{x,v\},\{y,v\},\{x,y\}\in E$. If $u,v\in\High$, use $\Sc{6}[u,v]$ for $x,y\in\Low$; for exactly one high, iterate over $h\in\NeighH{u}\cap\NeighH{v}$ and use $\Sc{7}[u,v,h]$; for $x,y\in\High$, iterate over pairs in $\NeighH{u}\cap\NeighH{v}$ checking mutual adjacency. Time: $\bigO(|\High|^2+\deglow^2) + \bigO(\Sc{6},\Sc{7}) = \bigO(|\High|^2 + \deglow^2 + \deglow^3\cdot\changesperupdate)$.

\end{enumerate}

\begin{lemma}
The amortized update times for counting undirected four-vertex subgraphs using \egst{} with partition \Partition{} are listed in Table~\ref{tab:generalP}.
\end{lemma}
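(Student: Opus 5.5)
The plan is to assemble, for each pattern, the amortized cost as the sum of three contributions. The first is the per-update counting cost $T_{\text{count}}$ given in the subgraph-count case analysis above. The second is the per-update cost $T_{\text{aux}}$ of maintaining every auxiliary structure that pattern needs under edge updates. The third is the partition-swap cost $T_{\text{VPU}}$ for those structures, multiplied by the amortized number $\changesperupdate$ of partition changes per update. This is the same decomposition $\bigO(T_{\text{count}} + T_{\text{aux}} + T_{\text{VPU}}\cdot\changesperupdate)$ used for \hhh{}. For the auxiliary terms I would invoke the preceding lemma directly:
\begin{itemize}
\item $\Sc{0}$ costs $\bigO(|\High| + \deglow|\High|\cdot\changesperupdate)$;
\item $\Sc{1}$ and $\Sc{2}$ cost $\bigO(\deglow + \deglow^2\changesperupdate)$;
\item $\Sc{3}$ through $\Sc{7}$ cost $\bigO(\deglow^2 + \deglow^3\changesperupdate)$.
\end{itemize}
Swaps may be done either by simulating deletion and reinsertion of the $\bigO(\deglow)$ incident edges or via the direct 2-hop scan; both give the same bound because a swapping vertex has degree $\bigO(\deglow)$.

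Next I would go pattern by pattern, using the dependency arcs of \autoref{fig:hhh_structures_new} and the case analysis above. The claw costs $\bigO(1)$ and is dropped. Triangles use $\Sc{2}$ plus $\bigO(|\High|+\deglow)$ scanning, which gives the tabulated bound. For 3-paths, Lemma~\ref{lemma:egstthreepath} needs $\Sc{0}$ and $\Sc{2}$. Here I must check that the $\Sc{0}$ swap term $\deglow|\High|\changesperupdate$ is absorbed by the stated bound. Since $\changesperupdate\le 1$ amortized, $\deglow|\High|\changesperupdate \le \deglow|\High|$, which does not obviously fit. I would therefore argue instead that $\Sc{0}[v]$ for a newly high $v$ is recomputed in $\bigO(\deglow^2)$ and the reduce step touches only $\NeighH{b}$. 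This replaces the term by $\deglow^2\changesperupdate$, and I expect this is the intended accounting.

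For paw, diamond and 4-clique, the counting cost is $\bigO(|\High|^2+\deglow^2)$ from pair iterations over $\NeighH{\cdot}$, and the structures $\Sc{1},\Sc{2},\Sc{4},\Sc{5},\Sc{6},\Sc{7}$ contribute $\deglow^2+\deglow^3\changesperupdate$. The sum matches the table. For 4-cycles, the extra $|\High|\deglow$ term comes from the mixed case where $h\in\NeighH{u}$ is checked against $\NeighNL{u}{v}$. The shaded $\deglow^2$ term comes from replacing $\Sc{3}$ by low 2-path iteration under \poptref{high-anchors-egst}.

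Finally, I would specialize the bounds to the concrete partitions:
\begin{itemize}
\item For the \epsT{}, substitute $|\High|=\bigO(m^{1-\varepsilon})$, $\deglow=\bigO(m^\varepsilon)$ and $\changesperupdate=\bigO(m^{-\varepsilon})$. Then $\deglow^3\changesperupdate=\bigO(m^{2\varepsilon})$, and each expression reduces to the listed maximum, minimized at $\varepsilon=1/2$.
\item For the \hInd{}, substitute $|\High|,\deglow=\bigO(h)$ and $\changesperupdate=\bigO(1/h)$ by Lemma~\ref{lemma:hindexc}. This yields $h$ for triangles and 3-paths and $h^2$ for the remaining patterns.
\end{itemize}
The main obstacle is the bookkeeping of the $\Sc{0}$ swap term and confirming that every mixed-degree case in the counting routines stays within the tabulated terms.
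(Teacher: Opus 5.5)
Your proposal is correct and follows the paper's own argument. The paper likewise bounds each pattern by its counting cost plus $\bigO(\Sc{i})$ for the structures it uses, where the swap costs from the preceding lemma are weighted by $\changesperupdate$. It then substitutes the parameters of the \epsT{} and the \hInd{}.

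The one place you deviate is the $\Sc{0}$ swap term, and there you are more careful than the paper. The paper writes the 3-path bound as $\bigO(|\High|+\deglow)+\bigO(\Sc{0},\Sc{2})=\bigO(|\High|+\deglow+\deglow^2\changesperupdate)$. This silently drops the $\deglow|\High|\changesperupdate$ term that its own $\Sc{0}$ lemma contributes.

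That omission is harmless for both concrete partitions. There $\deglow\changesperupdate=\bigO(1)$, namely $m^{\varepsilon}\cdot m^{-\varepsilon}$ resp.\ $h\cdot h^{-1}$, so the term collapses to $\bigO(|\High|)$. For an abstract partition, however, it is not absorbed in general. Your fix closes this gap: the $\Sc{0}$ contributions of a swapping vertex $v$ can be updated by walking $\NeighH{b}$ for $b\in\NeighL{v}$, using the partitioned neighborhoods. This takes $\bigO(\deglow^2)$, rather than the $\bigO(\deglow|\High|)$ of the paper's scan over all of $\High$ (or of simulated edge updates), and it makes the abstract column hold as stated.

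One small check on your final substitution step concerns 4-cycles. There the $|\High|\deglow$ term becomes $m$ under the \epsT{}, so the printed exponent $1-2\varepsilon$ is evidently a slip. Since $m \le m^{\max(2-2\varepsilon,2\varepsilon)}$, the term is dominated anyway, and your claim that each expression reduces to the listed maximum still goes through.
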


\subsection{Full Dataset Statistics}

The full statistics of the data used can be found in \autoref{tab:graphs} and for the engineering set in \autoref{tab:engeneering_set}.

\begin{table*}[!h]

\caption{
Full Data Set (\FullSet{}) statistics:
\#updates ($\delta$),
insertion ratio ($\delta_+$);
maximum \#non-isolated vertices ($\widehat{n}$),
edges ($\widehat{m}$), and degree ($\widehat{\MaxDegree}$);
average h-index ($\overline{h}$),
$\overline{h^3/m}$ ratio, and global clustering coefficient ($\overline{CC}$); and network type, \ie, one of
social network (Soc), interaction graph (Int), hyperlink network (HyL), internet topology (Top) or miscellaneous (Misc).
Shaded rows indicate membership of the engineering dataset \EngSet{}.
}
\label{tab:graphs}
\centering
\footnotesize

\renewcommand{\arraystretch}{0.9}
\newcommand{\colorrow}{\rowcolor{white!70!gray}}
\begin{tabular}{cl|cc|cc|cc|c|c|c}
\toprule
\tiny
\small
ID &Graph & $\delta$ & $\delta_+$ & $\widehat{n}$ & $\widehat{m}$ & $\widehat{\MaxDegree}$ & $\overline{h}$ & $\overline{h^3/m}$ & $\overline{CC}$ & Type \\
\midrule
\multicolumn{11}{c}{Wikipedia Hyperlink Networks (\DSWiki{})~\cite{konect,konect_wiki_second_ref}}\\
\midrule
\colorrow
A&simple\_wiki & 1.6M & \numprint{72}\% & 99.5K & 690.7K & \numprint{10054} & 230 & 38.93 & 8.2e-3 & HyL \\
\colorrow
B&nl\_wiki & 18.9M & \numprint{75}\% & 1M & 9.6M & \numprint{68864} & 723 & 84.76 & 2.3e-3 & HyL \\
\colorrow
C&pl\_wiki & 23.6M & \numprint{74}\% & 1M & 11.4M & \numprint{102331} & 847 & 113.22 & 2.9e-3 & HyL \\
\colorrow
D&it\_wiki & 33.1M & \numprint{74}\% & 1.2M & 15.7M & \numprint{71489} & 948 & 109.59 & 1.4e-3 & HyL \\
\colorrow
E&fr\_wiki & 56.3M & \numprint{70}\% & 2.2M & 22.3M & \numprint{113317} & 1.1K & 132.39 & 1.8e-3 & HyL \\
\colorrow
F&de\_wiki & 82M & \numprint{67}\% & 2.2M & 28.1M & \numprint{75577} & 1.2K & 123.85 & 2.9e-3 & HyL \\
\midrule
\multicolumn{11}{c}{Temporal Graphs (\DSTMP{})}\\
\midrule
A&reality-call~\cite{DBLP:conf/aaai/RossiA15,eagle2006reality} & 7.2K & \numprint{50}\% & 3.6K & 3.6K & \numprint{3599} & 1 & 0 & 0 & Int \\
B&wiki-elec~\cite{DBLP:conf/aaai/RossiA15,leskovec2010signed} & 201.5K & \numprint{50}\% & 7.1K & 100.8K & \numprint{1065} & 113 & 32.96 & 2.6e-2 & Misc \\
C& tech-as-topology~\cite{DBLP:conf/aaai/RossiA15} & 215.4K & \numprint{50}\% & 34.8K & 107.7K & \numprint{2760} & 108 & 26.29 & 9.2e-3 & Top \\
D&reddit-body~\cite{snapnets,kumar2018community} & 433.5K & \numprint{50}\% & 3.3K & 4K & \numprint{373} & 21 & 3.3 & 2.1e-2 & HyL \\
E&mathoverflow~\cite{snapnets,ParanjapeBL17,stackexchange_data_dump} & 458K & \numprint{50}\% & 1K & 2.5K & \numprint{137} & 17 & 3.91 & 1.6e-2 & Int \\
\colorrow
F&enron~\cite{DBLP:conf/aaai/RossiA15,cohen2005enron} & 594.9K & \numprint{50}\% & 87K & 297.5K & \numprint{1726} & 166 & 34.57 & 1.4e-2 & Int \\
G&reddit-title~\cite{snapnets,kumar2018community} & 824.7K & \numprint{50}\% & 4.9K & 7.8K & \numprint{500} & 33 & 6.21 & 1.1e-2 & HyL \\
\colorrow
H&askubuntu~\cite{snapnets,ParanjapeBL17,stackexchange_data_dump} & 952.4K & \numprint{50}\% & 4.9K & 6.1K & \numprint{339} & 26 & 4.93 & 5.5e-3 & Int \\
\colorrow
I&superuser~\cite{snapnets,ParanjapeBL17,stackexchange_data_dump} & 1.5M & \numprint{50}\% & 5.5K & 7.3K & \numprint{364} & 28 & 4.9 & 5.9e-3 & Int\\
\colorrow
J&wiki-talk~\cite{snapnets,ParanjapeBL17,DBLP:conf/icwsm/LeskovecHK10} & 6.3M & \numprint{50}\% & 74.4K & 102K & \numprint{29540} & 87 & 13.26 & 1.0e-3 & Int \\
\colorrow
K&youtube-growth~\cite{DBLP:conf/aaai/RossiA15,mislove-2007-socialnetworks} & 18.8M & \numprint{50}\% & 3.2M & 9.4M & \numprint{91751} & 612 & 55.04 & 3.1e-4 & Soc \\
\colorrow
L&stackoverflow~\cite{snapnets,ParanjapeBL17,stackexchange_data_dump} & 59.4M & \numprint{50}\% & 158.3K & 277.9K & \numprint{825} & 117 & 9.14 & 4.3e-3 & Int \\
\midrule
\multicolumn{11}{c}{Constant-Size Kronecker Graphs (\DScKron{})~\cite{DBLP:conf/alenex/HanauerH020,dyreach}}\\
\midrule
\colorrow
A&c\_answers & 2.6M & \numprint{54}\% & 74.3K & 185.5K & \numprint{578} & 112 & 8.04 & 7.2e-4 & Soc\\
\colorrow
B&c\_delicious & 3.2M & \numprint{54}\% & 93.7K & 225.9K & \numprint{306} & 80 & 2.48 & 5.2e-4 & Soc\\
C&c\_epinions & 5.9M & \numprint{54}\% & 85.7K & 413.7K & \numprint{2140} & 175 & 13.94 & 1.3e-3 & Soc\\
D&c\_flickr & 4.3M & \numprint{54}\% & 78.2K & 305.5K & \numprint{1544} & 151 & 12.09 & 1.2e-3 & Soc\\
E&c\_blog-nat05-6m & 18.7M & \numprint{54}\% & 120.4K & 1.3M & \numprint{3036} & 350 & 34.74 & 1.5e-3 & Soc\\
F&c\_blog-nat06all & 22.3M & \numprint{54}\% & 122.6K & 1.6M & \numprint{3431} & 403 & 44.33 & 1.6e-3 & Soc\\
G&c\_cit-hep-ph & 22.2M & \numprint{54}\% & 130.7K & 1.6M & \numprint{661} & 182 & 4.31 & 5.7e-4 & Soc\\
H&c\_cit-hep-th & 29.8M & \numprint{53}\% & 130.8K & 1.6M & \numprint{579} & 178 & 3.84 & 5.3e-4 & Soc\\
\colorrow
I&c\_ca-dblp & 8.5M & \numprint{54}\% & 127.2K & 597.3K & \numprint{202} & 82 & 1.01 & 4.7e-4 & Soc\\
J&c\_ca-gr-qc & 7.9M & \numprint{54}\% & 129.7K & 559.2K & \numprint{92} & 49 & 0.24 & 5.8e-4 & Soc\\
K&c\_ca-hep-th & 5.3M & \numprint{54}\% & 121.8K & 374.1K & \numprint{129} & 55 & 0.5 & 4.6e-4 & Soc\\
L&c\_ca-hep-ph & 30.4M & \numprint{54}\% & 130.8K & 2.1M & \numprint{933} & 256 & 8.54 & 7.7e-4 & Soc\\
\colorrow
M&c\_web-notredame & 4.1M & \numprint{54}\% & 85.4K & 288.7K & \numprint{916} & 151 & 12.63 & 9.1e-4 & HyL\\
N&c\_as-newman & 6.9M & \numprint{54}\% & 83.1K & 487.6K & \numprint{3308} & 219 & 23 & 9.5e-4 & Top\\
\colorrow
O&c\_as-routeviews & 7.9M & \numprint{54}\% & 86.5K & 556.9K & \numprint{3603} & 244 & 27.59 & 1.4e-3 & Top\\
P&c\_gnutella-30 & 2.9M & \numprint{54}\% & 95.6K & 203.4K & \numprint{334} & 75 & 2.21 & 5.2e-5 & Top\\
Q&c\_gnutella-25 & 3.6M & \numprint{54}\% & 103.9K & 254.1K & \numprint{389} & 86 & 2.7 & 5.5e-5 & Top\\
\colorrow
R&c\_atp-gr-qc & 1.3M & \numprint{54}\% & 86.4K & 90.9K & \numprint{26} & 14 & 0.04 & 9.4e-5 & Misc\\
S&c\_bio-proteins & 9.5M & \numprint{54}\% & 107.2K & 667.5K & \numprint{1773} & 203 & 13.44 & 3.2e-4 & Int\\
\midrule
\multicolumn{11}{c}{Growing Kronecker Graphs (\DSgKron{})~\cite{DBLP:conf/alenex/HanauerH020,dyreach}}\\
\midrule
A&g\_answers & 538.3K & \numprint{67}\% & 74K & 184.9K & \numprint{563} & 80 & 6.27 & 1.4e-3 & Soc \\
B&g\_delicious & 646K & \numprint{67}\% & 93.4K & 225.3K & \numprint{282} & 59 & 2.12 & 1.0e-3 & Soc \\
\colorrow
C&g\_epinions & 1.1M & \numprint{68}\% & 85.6K & 413K & \numprint{2023} & 135 & 14.42 & 2.1e-3 & Soc \\
\colorrow
D&g\_flickr & 853.4K & \numprint{68}\% & 77.8K & 304.7K & \numprint{1393} & 124 & 14.83 & 2.0e-3 & Soc \\
E&g\_blog-nat05-6m & 3.3M & \numprint{70}\% & 120.3K & 1.3M & \numprint{2912} & 238 & 23.22 & 2.3e-3 & Soc \\
\colorrow
F&g\_blog-nat06all & 3.9M & \numprint{70}\% & 122.6K & 1.6M & \numprint{3286} & 273 & 29.24 & 2.4e-3 & Soc \\
\colorrow
G&g\_cit-hep-ph & 3.9M & \numprint{70}\% & 130.7K & 1.6M & \numprint{622} & 137 & 3.94 & 9.7e-4 & Soc \\
H&g\_cit-hep-th & 3.9M & \numprint{70}\% & 130.8K & 1.6M & \numprint{568} & 134 & 3.67 & 9.2e-4 & Soc\\
I&g\_ca-dblp & 1.6M & \numprint{69}\% & 127.1K & 596K & \numprint{181} & 63 & 0.98 & 8.4e-4 & Soc \\
J&g\_ca-gr-qc & 1.5M & \numprint{69}\% & 129.5K & 558.4K & \numprint{79} & 39 & 0.27 & 9.7e-4 & Soc \\
K&g\_ca-hep-ph & 5.3M & \numprint{70}\% & 130.7K & 2.1M & \numprint{907} & 187 & 7.12 & 1.3e-3 & Soc \\
L&g\_ca-hep-th & 1M & \numprint{68}\% & 121.6K & 373.2K & \numprint{120} & 43 & 0.52 & 8.0e-4 & Soc \\
M&g\_web-notredame & 811.6K & \numprint{68}\% & 85.2K & 288.1K & \numprint{879} & 111 & 10.9 & 1.6e-3 & HyL \\
\colorrow
N&g\_as-newman & 1.3M & \numprint{69}\% & 82.9K & 486K & \numprint{3183} & 156 & 18.41 & 1.6e-3 & Top \\
O&g\_as-routeviews & 1.5M & \numprint{69}\% & 86.4K & 555.3K & \numprint{3551} & 170 & 20.39 & 2.2e-3 & Top \\
P&g\_gnutella-25 & 722.1K & \numprint{68}\% & 103.7K & 253.2K & \numprint{360} & 62 & 2.24 & 1.7e-4 & Top \\
Q&g\_gnutella-30 & 587.7K & \numprint{67}\% & 95.3K & 202.7K & \numprint{329} & 55 & 1.92 & 1.7e-4 & Top \\
R&g\_atp-gr-qc & 277.3K & \numprint{66}\% & 86K & 90.5K & \numprint{25} & 13 & 0.07 & 2.7e-4 & Misc \\
\colorrow
S&g\_bio-proteins & 1.8M & \numprint{69}\% & 106.9K & 666.3K & \numprint{1731} & 150 & 11.94 & 6.3e-4 & Int \\

\bottomrule
\end{tabular}
\vspace{-0.5cm}
\end{table*}

\sisetup{
  detect-weight=true,
  detect-family=true
}

\renewcommand{\arraystretch}{1}
\begin{table*}[!h]
\caption{
Engineering Data Set (\EngSet{}) statistics: updates ($\delta$), insertion percentage ($\delta_+$); maximum values for non-isolated vertices ($\widehat{n}$), edges ($\widehat{m}$) and pattern counts. }\label{tab:engeneering_set}

\centering
\scriptsize

\resizebox{\linewidth}{!}{
\begin{tabular}{l|ll|cc|cc|c|c|c|c|c|c|c}
\toprule

&ID&Graph & $\delta$ & $\delta_+$ & $\widehat{n}$ & $\widehat{m}$ & $|\tri|$ & $|\tPath|$ & $|\claw|$ & $|\paw|$ & $|\fCycle|$ & $|\dia|$ & $|\fClique|$ \\
\midrule
\multirow{6}{*}{\rotatebox{90}{\DSWiki}}&A&simple\_wiki & 1.6M & \numprint{72}\% & 99.5K & 690.7K & 3.8e+6 & 5.4e+10 & 7e+11 & 1.1e+10 & 4.9e+8 & 5.1e+8 & 5.8e+6 \\
&B&nl\_wiki & 2M & \numprint{79}\% & 253K & 1.2M & 3.1e+6 & 6.3e+10 & 4.3e+11 & 7.8e+9 & 6.2e+8 & 2.7e+8 & 5.1e+6 \\
&C&pl\_wiki & 2M & \numprint{80}\% & 213.3K & 1.2M & 5.7e+6 & 9.4e+10 & 2.1e+12 & 1.3e+10 & 1.6e+9 & 6.5e+8 & 2.5e+7 \\
&D&it\_wiki & 2M & \numprint{84}\% & 279.6K & 1.4M & 3.3e+6 & 1.4e+11 & 3.4e+13 & 3.2e+10 & 3.9e+9 & 3.6e+9 & 3.6e+6 \\
&E&fr\_wiki & 2M & \numprint{80}\% & 264.5K & 1.2M & 3.8e+6 & 6.7e+10 & 5.5e+11 & 7.8e+9 & 7.7e+8 & 3.8e+8 & 1.8e+7 \\
&F&de\_wiki & 2M & \numprint{82}\% & 240.4K & 1.3M & 4e+6 & 3.9e+10 & 1.1e+11 & 3.9e+9 & 4.4e+8 & 2.8e+8 & 2.6e+7 \\

\midrule
\multirow{6}{*}{\rotatebox{90}{\DSTMP}}&F&enron & 595K & \numprint{50}\% & 87K & 298K & 1.2e+6 & 4.7e+9 & 9.7e+9 & 9.2e+8 & 9.3e+7 & 8.4e+7 & 5.5e+6 \\
&H&askubuntu & 952K & \numprint{50}\% & 4.9K & 6.1K & \numprint{74585} & 1.7e+8 & 6.8e+8 & 1.5e+7 & 8.1e+5 & 4e+5 & \numprint{10742} \\
&I&superuser & 1.5M & \numprint{50}\% & 5.5K & 7.3K & 1.3e+5 & 3e+8 & 1.1e+9 & 2.9e+7 & 1.7e+6 & 8.8e+5 & \numprint{23639} \\
&J&wiki-talk & 2M & \numprint{51}\% & 49.7K & 64.7K & 2.9e+5 & 2.4e+9 & 3.3e+12 & 1.4e+8 & 6.3e+6 & 2.8e+6 & \numprint{95480} \\
&K&youtube-growth & 2M & \numprint{100}\% & 1.2M & 2M & 3.8e+5 & 4.8e+10 & 9.7e+12 & 2.5e+9 & 7.2e+7 & 2e+7 & 1.1e+5 \\
&L&stackoverflow & 2M & \numprint{51}\% & 16.1K & 50.6K & 5.1e+5 & 2.4e+9 & 2.5e+9 & 1.5e+8 & 1.2e+7 & 3.7e+6 & \numprint{77043} \\
\midrule
\multirow{6}{*}{\rotatebox{90}{\DScKron}}&
A&c\_answers & 2M & \numprint{55}\% & 74.2K & 185.4K & \numprint{46974} & 1.8e+9 & 2.2e+9 & 1.9e+7 & 1.2e+6 & 1e+5 & \numprint{671} \\
&B&c\_delicious & 2M & \numprint{56}\% & 93.7K & 225.9K & \numprint{20074} & 7.6e+8 & 5.6e+8 & 4.8e+6 & 3.5e+5 & \numprint{22573} & \numprint{134} \\
&I&c\_ca-dblp & 2M & \numprint{65}\% & 127.1K & 597.3K & \numprint{20519} & 9.7e+8 & 4.4e+8 & 3.3e+6 & 2.9e+5 & \numprint{11323} & \numprint{62} \\
&M&c\_web-notred. & 2M & \numprint{57}\% & 85.4K & 288.7K & \numprint{95243} & 4.1e+9 & 5.7e+9 & 5.8e+7 & 3.6e+6 & 3.7e+5 & \numprint{2670} \\
&O&c\_as-routeviews & 2M & \numprint{64}\% & 86.5K & 556.9K & 5.6e+5 & 3.3e+10 & 1.1e+11 & 1e+9 & 5.6e+7 & 7.8e+6 & \numprint{51752} \\
&R&c\_atp-gr-qc & 1.7M & \numprint{53}\% & 86.4K & 90.9K & \numprint{509} & 1.4e+7 & 5.9e+6 & \numprint{8884} & \numprint{699} & \numprint{6} & \numprint{0} \\

\midrule
\multirow{6}{*}{\rotatebox{90}{\DSgKron}}&
C&g\_epinions & 1.1M & \numprint{68}\% & 85.6K & 413K & 2e+5 & 6.1e+9 & 1.4e+10 & 1.6e+8 & 9.5e+6 & 1.5e+6 & \numprint{13510} \\
&B&g\_flickr & 853K & \numprint{68}\% & 77.8K & 304.7K & 1e+5 & 2.7e+9 & 5.2e+9 & 6.6e+7 & 3.9e+6 & 6e+5 & \numprint{5703} \\
&F&g\_blog-nat06all & 2M & \numprint{70}\% & 98.4K & 798K & 7.1e+5 & 2.4e+10 & 3.2e+10 & 6.9e+8 & 5.4e+7 & 8e+6 & \numprint{75866} \\
&G&g\_cit-hep-ph & 2M & \numprint{70}\% & 117.4K & 796.7K & 1.1e+5 & 3.9e+9 & 2.1e+9 & 2.9e+7 & 3.2e+6 & 1.8e+5 & \numprint{1191} \\
&N&g\_as-newman & 1.3M & \numprint{69}\% & 82.9K & 486K & 2.7e+5 & 1.2e+10 & 4.2e+10 & 3e+8 & 1.8e+7 & 2.1e+6 & \numprint{11730} \\
&S&g\_bio-proteins & 1.8M & \numprint{69}\% & 106.9K & 666.3K & \numprint{77387} & 6.2e+9 & 1.3e+10 & 4.1e+7 & 3.6e+6 & \numprint{94856} & \numprint{199} \\
\bottomrule
\end{tabular}
}
\end{table*}

\clearpage
\subsection{Tuning the Rebalancing Frequency}
\label{sec:exp:hhh:recomp_factor}

To study the effects of the \emph{rebalance factor} $\RebalanceFactor$, we consider \hhhohe{} 
with 
different $\RebalanceFactor\in\{1.5,2,3,5,8,13,21\}$
for
\origepstab{.2},
\softepstab{}{.2},
\lazyepstab{.2}, 
and 
\lateepstab{.2}.
For \lazyepstab{}, we additionally include $\RebalanceFactor\in\{1,1.25\}$ since for \lazyepstab{}
updating the threshold more frequently 
reflects the evolution of the graph more accurately.

\begin{figure}[t]

\includegraphics[width=0.49\textwidth]{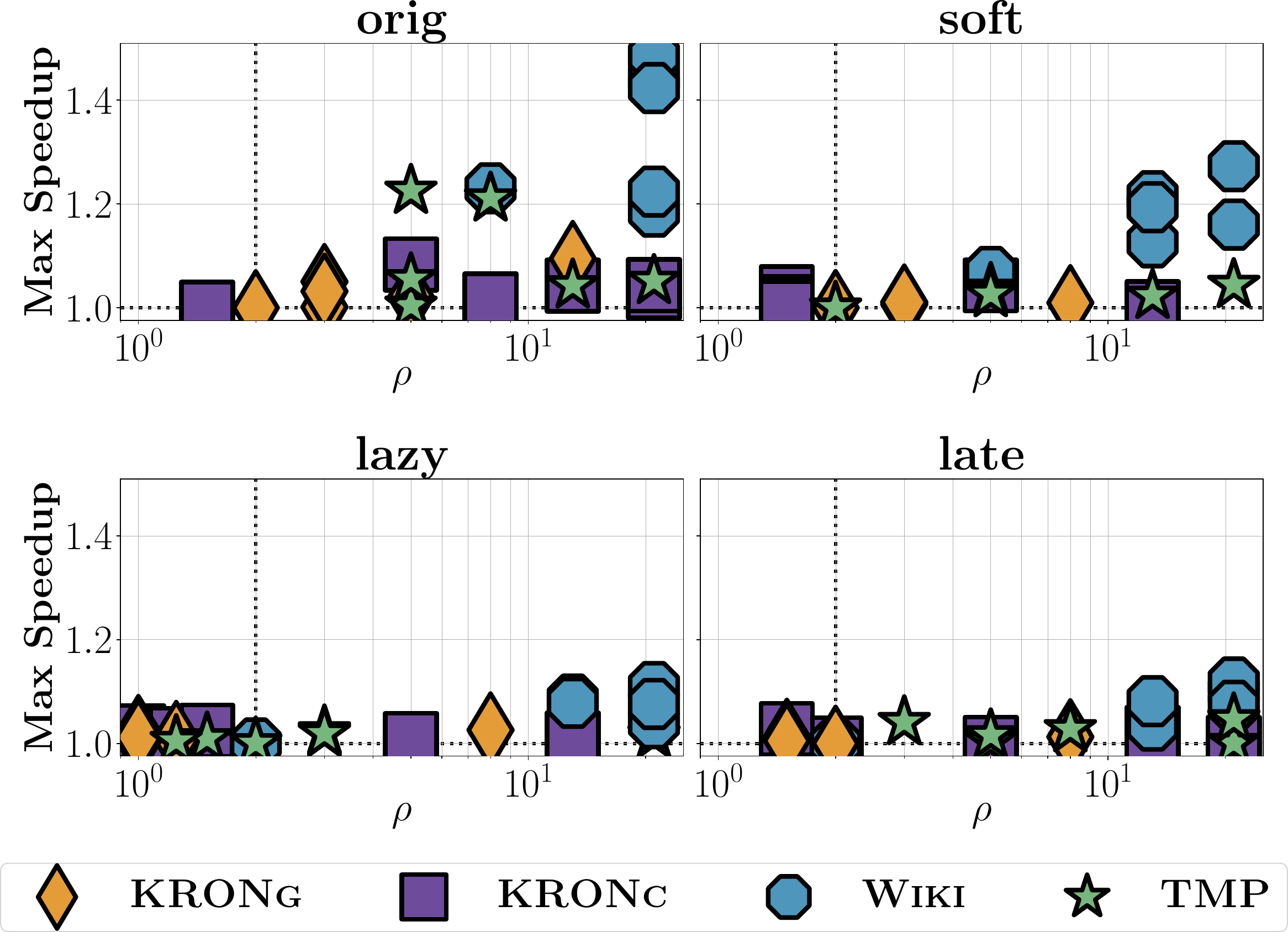}
\caption{Maximal speedup over using rebalance factor $\RebalanceFactor =2$ in different \epsT{} when using \hhhohe{} to count all patterns. Each graph corresponds to a single marker, where the horizontal position marks the recomputation factor achieving the maximal speedup and the vertical position shows the speedup over using  $\RebalanceFactor=2$.
} 
\label{exp:recomp_factor_speedup}
\end{figure}

Figure \ref{exp:recomp_factor_speedup} shows that varying $\RebalanceFactor$ can yield speedups of up to
$1.5\times$ (\origepstab{}),
$1.3\times$ (\softepstab{}), and
$1.2\times$ (\lazyepstab{}),
$1.1\times$ (\lateepstab{}{})
over the originally described $\RebalanceFactor = 2$ on individual graphs.
However, the best factor seems highly graph-specific.
In the geometric mean, $\RebalanceFactor=2$ was always within \SI{1}{\percent} for \lazyepstab{}, \SI{2}{\percent} for \lateepstab{}, \SI{3}{\percent} for \softepstab{} and \SI{9}{\percent} for \origepstab{} w.r.t.\ other values of $\RebalanceFactor$.
We therefore fix $\RebalanceFactor=2$ as a good compromise.

\begin{figure}[t]

\setlength{\belowcaptionskip}{-17pt}
\setlength{\abovecaptionskip}{2pt}
\includegraphics[width=0.49\textwidth]{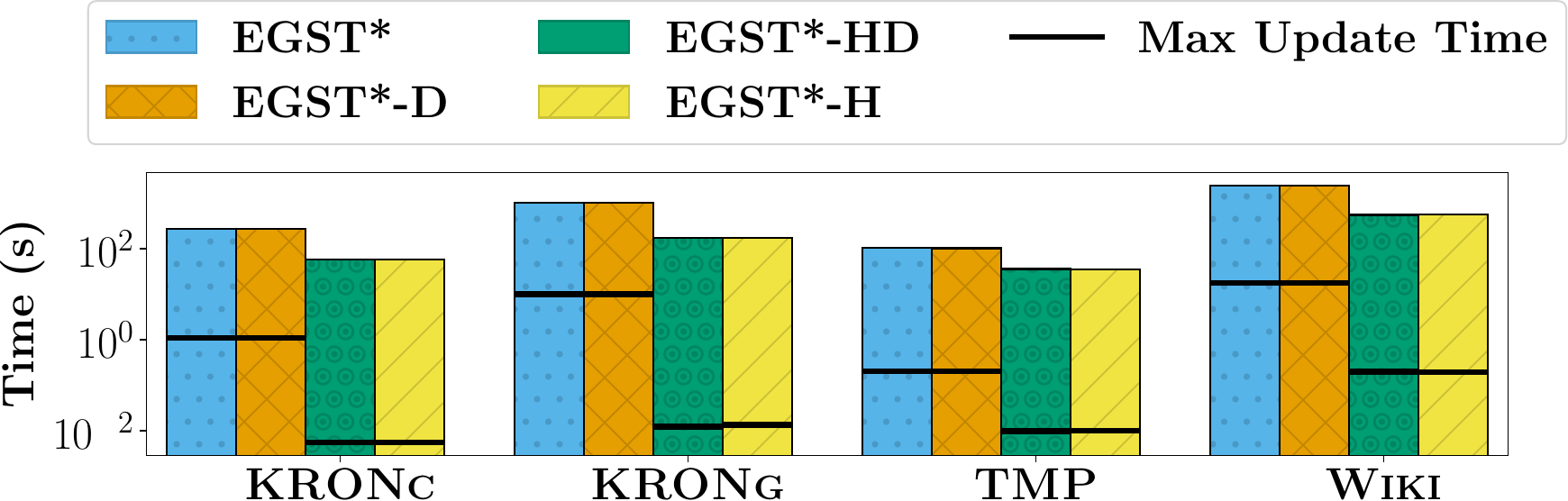}
\caption{
Running and maximal update time (log scale) of different \egst{} variants with \hInd{} using dataset \EngSet{} and counting all subgraphs simultaneously.
} 
\label{exp:egst_vari}
\end{figure}
\subsection{Proposed Optimizations for the \egst{} Algorithm}
\label{sec:egst:opt_versions:exp}
Since allowing any degree for anchors in the \Sc{3} auxiliary structure in \egst{} can incur high maintenance cost, we introduced \hegst{}, which uses \poptref{high-anchors-egst}.
Additionally, we consider \degst{}, which uses \poptref{direct-swap} to improve the partition swap of vertices. 
Figure \ref{exp:egst_vari} shows running and maximum update times for all versions when counting all subgraphs simultaneously on dataset \EngSet{}.

We compare the four combinations on \EngSet{}.
As expected, \hegst{} demonstrates clear improvements, achieving a speedup of
$4.3\times$
in terms of total runtime.
The maximum update time improves by
$130.1\times$.
The smaller improvement on \DSTMP{} graphs arises due to their sparsity, which  limits the number of found auxiliary structures, thereby reducing the impact of the optimization.
The larger gains appear on heavily growing graphs, where the number of auxiliary structures increases over time.
This results in a higher number of expensive capacity increases in hash maps, amplifying the effect of reducing the number of structures.
For \degst{}, we observe improvements of
$1.4\%$
in total runtime compared to \egst{}. This is because, in most cases, applying the edge update routine twice is more expensive than scanning the 2-hop neighborhood.
Differences in maximum update time are small (within $1\%$),
without a clear winner.
When combining both methods into \hdegst{}, we achieve the best overall results. \hdegst{} improves the maximum update time by
$132.9\times$
and the total runtime by
$4.3\times$
over \egst{}.

\clearpage

\subsection{Counting Single Patterns}
\label{sec:single_patterns}

\begin{figure}[b!]

\setlength{\belowcaptionskip}{-17pt}
\setlength{\abovecaptionskip}{-0pt}

\begin{tikzpicture}
  \node[anchor=south west, inner sep=0] (image) {\includegraphics[width=\linewidth]{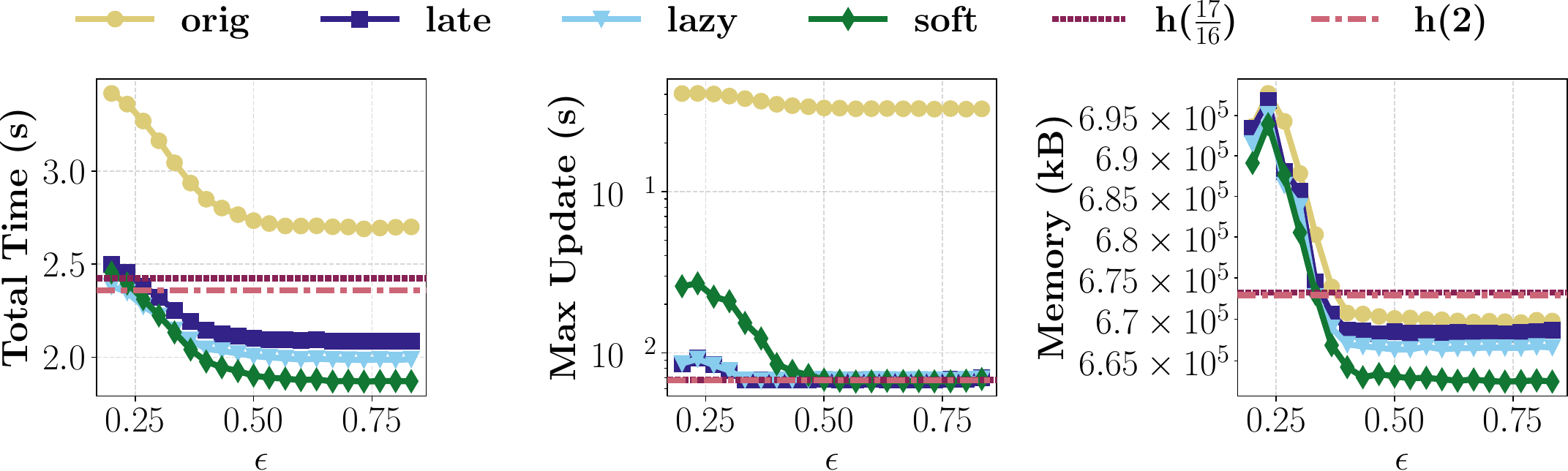}};
  \node[anchor=north west, yshift=-70pt] at (image.north west) {\small (a) Triangles \tri};
\end{tikzpicture}
\begin{tikzpicture}
  \node[anchor=south west, inner sep=0] (image) {\includegraphics[width=\linewidth]{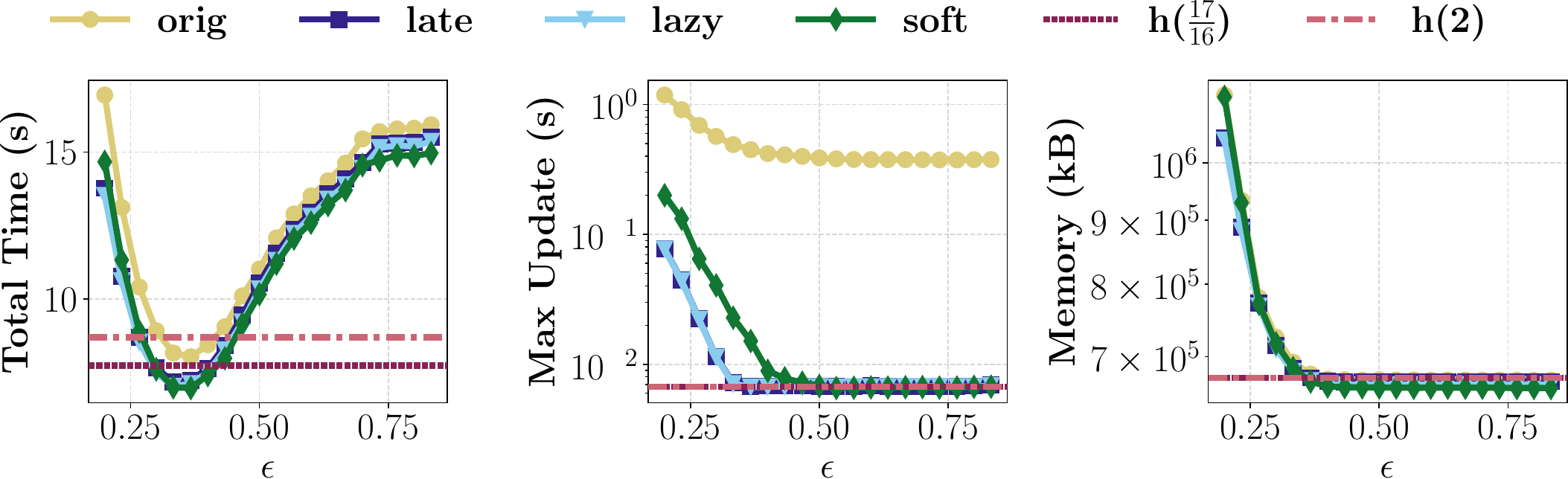}};
  \node[anchor=north west, yshift=-70pt] at (image.north west) {\small (b) 3-Paths \tPath};
\end{tikzpicture}
\begin{tikzpicture}
  \node[anchor=south west, inner sep=0] (image) {\includegraphics[width=\linewidth]{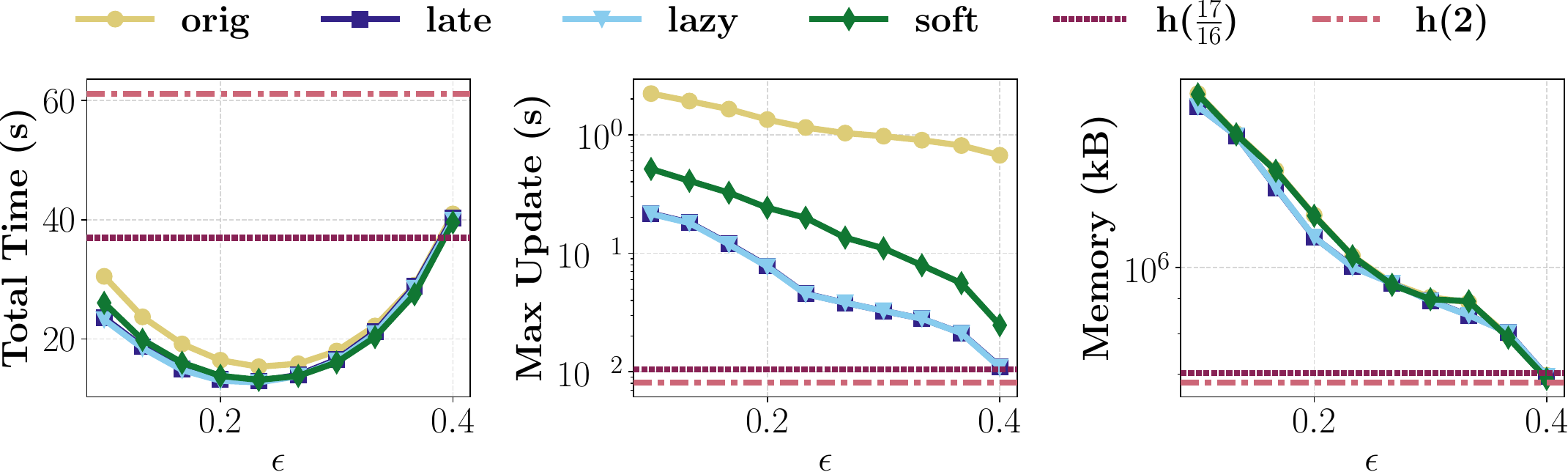}};
  \node[anchor=north west, yshift=-70pt] at (image.north west) {\small (c) Paws \paw};
\end{tikzpicture}
\begin{tikzpicture}
  \node[anchor=south west, inner sep=0] (image) {\includegraphics[width=\linewidth]{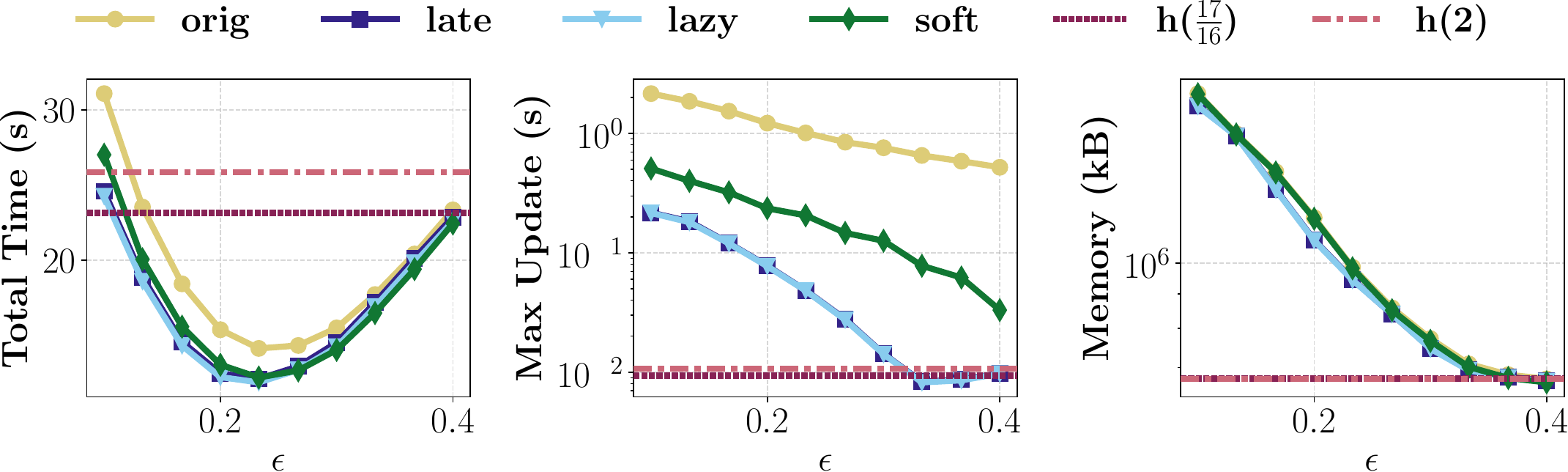}};
  \node[anchor=north west, yshift=-70pt] at (image.north west) {\small (d) 4-Cycles \fCycle};
\end{tikzpicture}
\begin{tikzpicture}
  \node[anchor=south west, inner sep=0] (image) {\includegraphics[width=\linewidth]{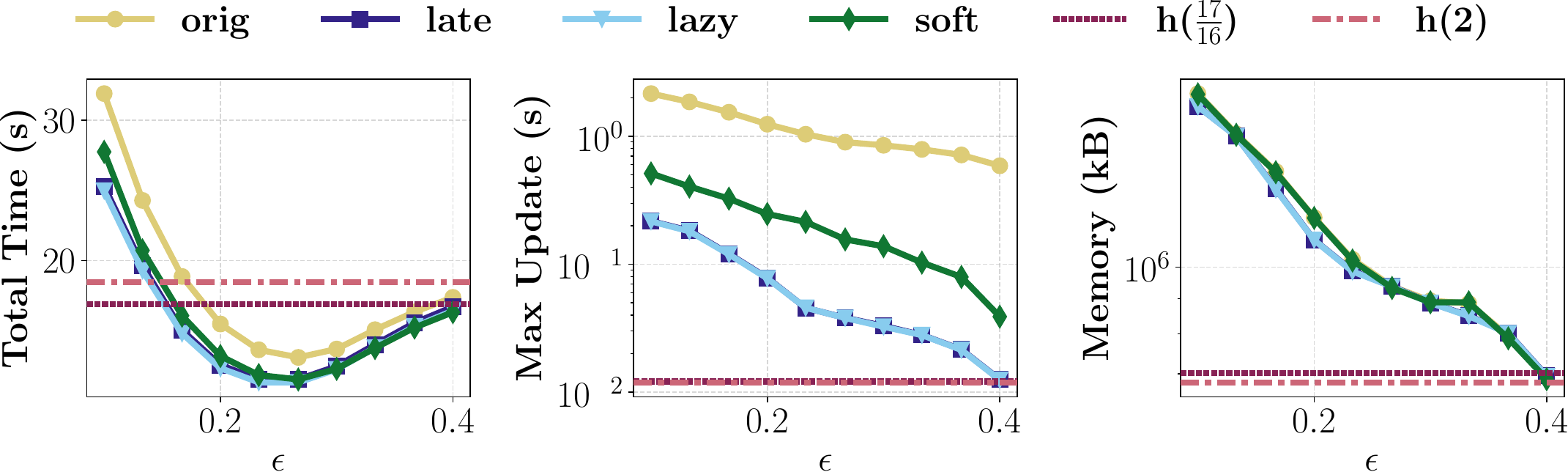}};
  \node[anchor=north west, yshift=-70pt] at (image.north west) {\small (e) Diamonds \dia};
\end{tikzpicture}
\begin{tikzpicture}
  \node[anchor=south west, inner sep=0] (image) {\includegraphics[width=\linewidth]{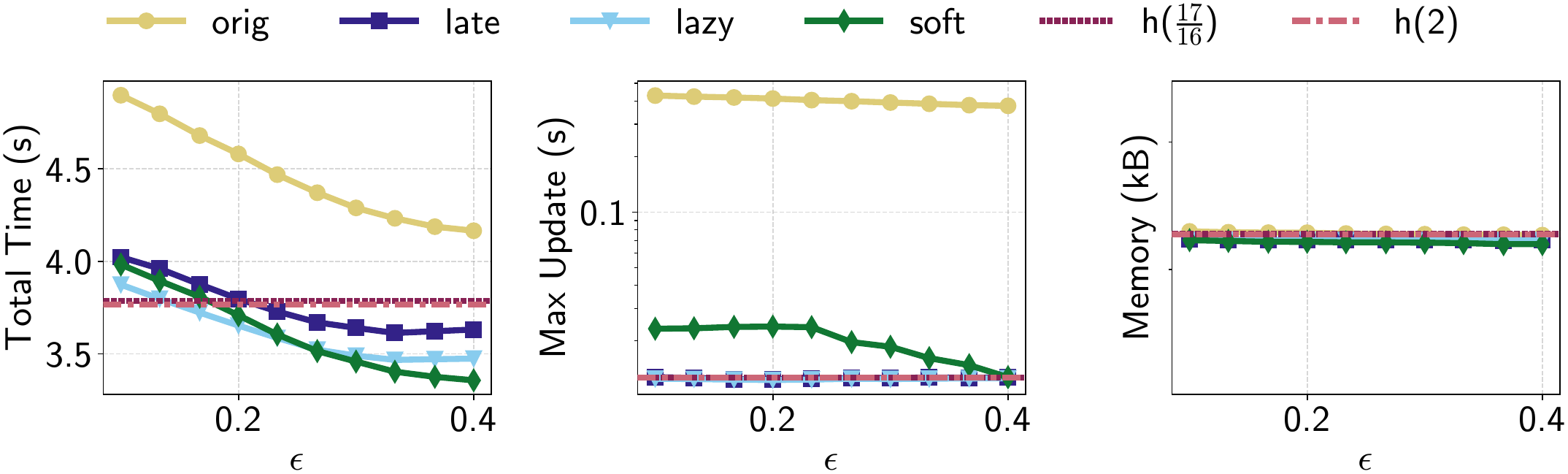}};
  \node[anchor=north west, yshift=-70pt] at (image.north west) {\small (f) 4-Cliques \fClique};
\end{tikzpicture}

\caption{
Performance of \hhhoh{} 
for different vertex partitions
w.r.t.\ $\varepsilon$ 
on \EngSet{} data
counting individual patterns. (We did not study claws here as these have constant update time)
} 
\vspace{1.8cm}
\label{exp:hhh_eps_single}
\end{figure}

\begin{figure}[b!]

\setlength{\belowcaptionskip}{-17pt}
\setlength{\abovecaptionskip}{-0pt}

\begin{tikzpicture}
  \node[anchor=south west, inner sep=0] (image) {\includegraphics[width=\linewidth]{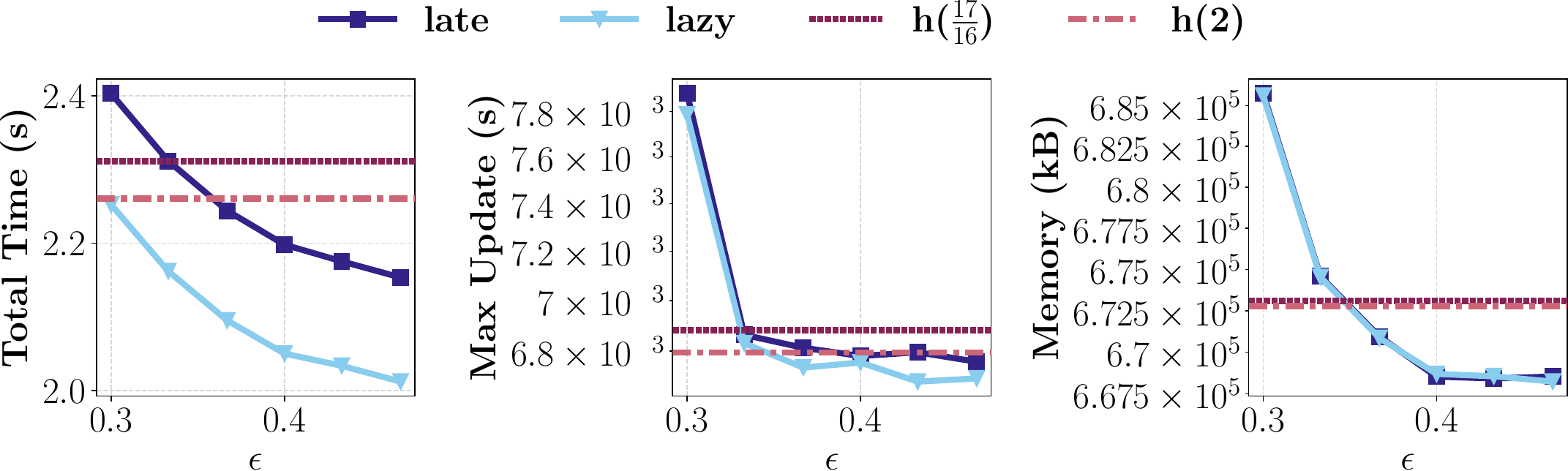}};
  \node[anchor=north west, yshift=-70pt] at (image.north west) {\small (a) Triangles \tri};
\end{tikzpicture}
\begin{tikzpicture}
  \node[anchor=south west, inner sep=0] (image) {\includegraphics[width=\linewidth]{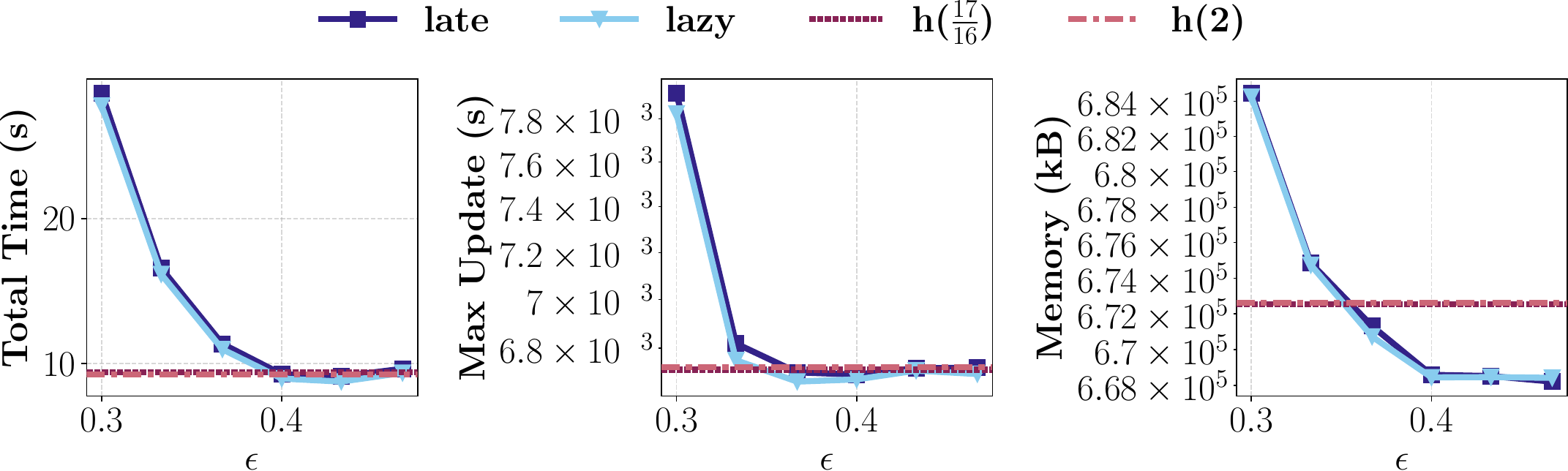}};
  \node[anchor=north west, yshift=-70pt] at (image.north west) {\small (b) 3-Paths \tPath};
\end{tikzpicture}
\begin{tikzpicture}
  \node[anchor=south west, inner sep=0] (image) {\includegraphics[width=\linewidth]{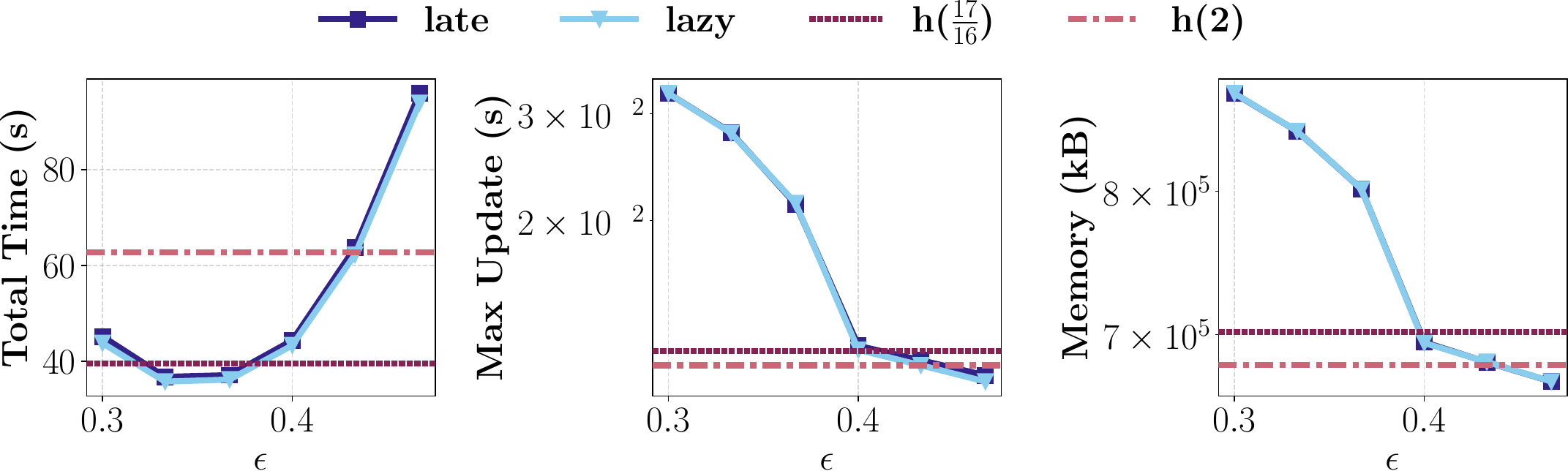}};
  \node[anchor=north west, yshift=-70pt] at (image.north west) {\small (c) Paws \paw};
\end{tikzpicture}
\begin{tikzpicture}
  \node[anchor=south west, inner sep=0] (image) {\includegraphics[width=\linewidth]{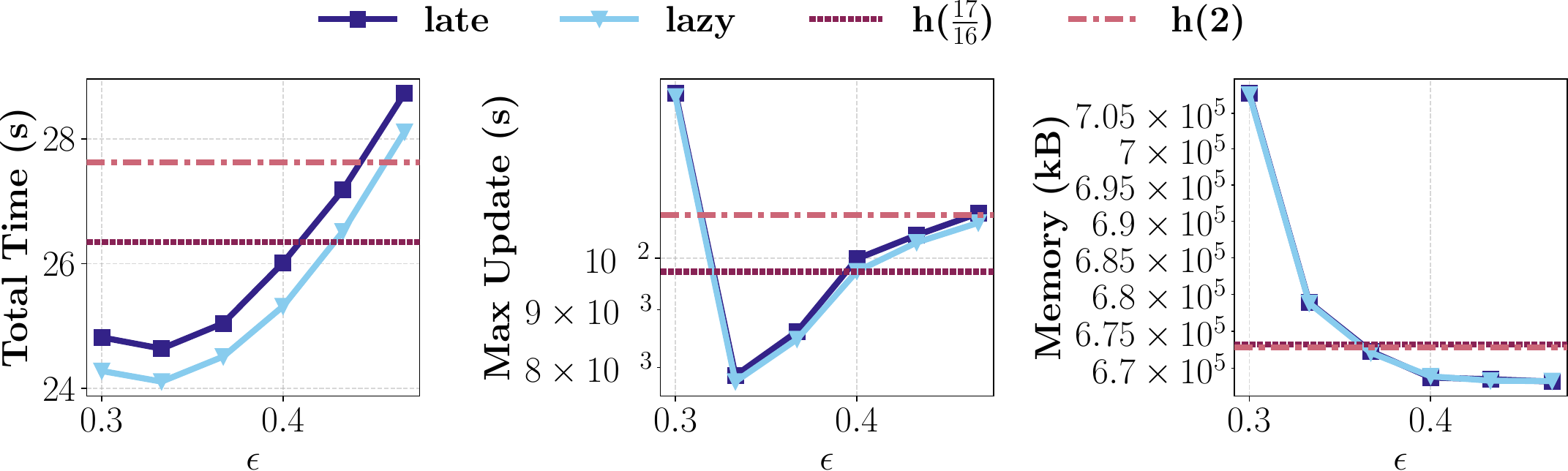}};
  \node[anchor=north west, yshift=-70pt] at (image.north west) {\small (d) 4-Cycles \fCycle};
\end{tikzpicture}
\begin{tikzpicture}
  \node[anchor=south west, inner sep=0] (image) {\includegraphics[width=\linewidth]{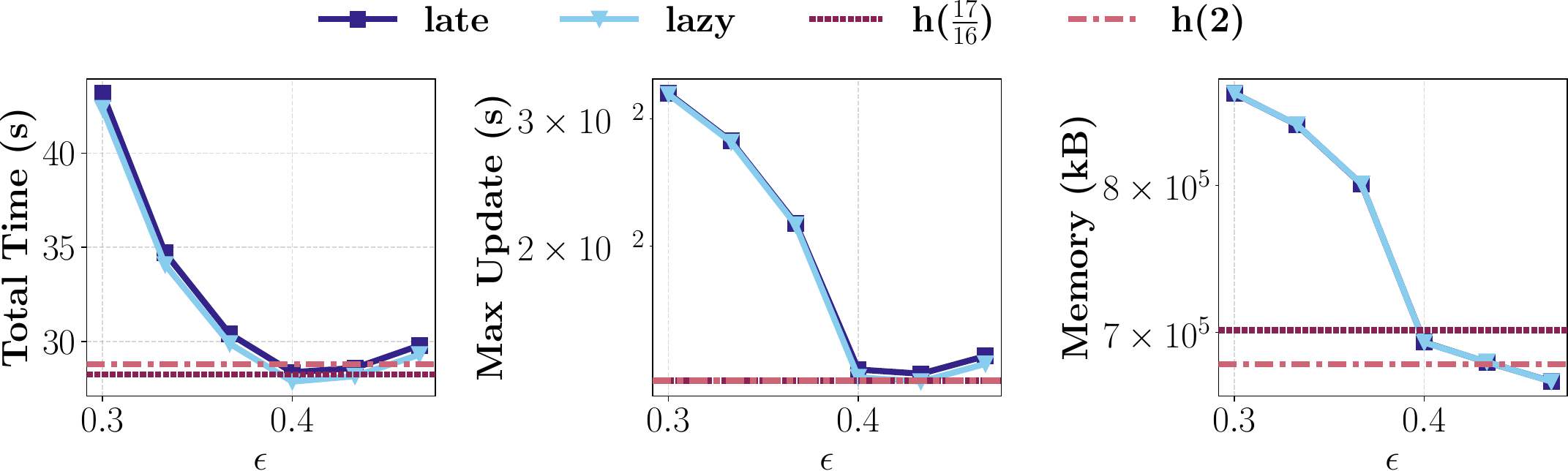}};
  \node[anchor=north west, yshift=-70pt] at (image.north west) {\small (e) Diamonds \dia};
\end{tikzpicture}
\begin{tikzpicture}
  \node[anchor=south west, inner sep=0] (image) {\includegraphics[width=\linewidth]{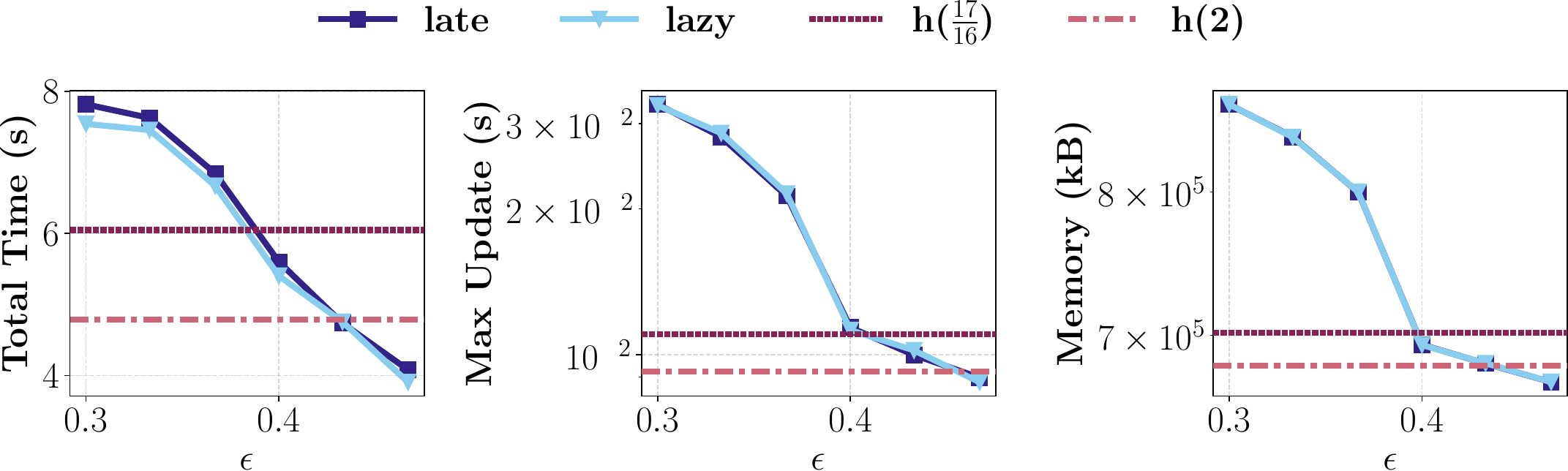}};
  \node[anchor=north west, yshift=-70pt] at (image.north west) {\small (f) 4-Cliques \fClique};
\end{tikzpicture}

\caption{
Performance of \hdegst{} 
for different vertex partitions
w.r.t.\ $\varepsilon$ 
on \EngSet{} data
counting individual patterns. (We did not study claws here as these have constant update time).
} 
\label{exp:egst_eps:single}
\end{figure}

\end{document}